\numberwithin{equation}{section}
\newtheorem{theorem}{Theorem}[section]
\newtheorem{lemma}[theorem]{Lemma}
\newtheorem{conjecture}[theorem]{Conjecture}
\newtheorem{proposition}[theorem]{Proposition}
\newcommand{\R}{{\mathbb R}}
\newcommand{\mfp}{{\mathfrak{p}}}
\newcommand{\mcJ}{{\mathcal{J}}}
\newcommand{\tuP}{{\Psi}}
\newcommand{\C}{{\mathbb C}}
\title{A limit theorem to a time-fractional diffusion}
\author{\textbf{Jeremy Thane Clark}\footnote{jtclark@math.msu.edu} \\ Department of Mathematics, Michigan State University }
\begin{document}
\maketitle

\begin{abstract}

We prove a limit theorem for an integral functional of a Markov process.  The Markovian dynamics is characterized by a linear Boltzmann equation modeling a one-dimensional test particle of mass $\lambda^{-1}\gg 1$ in an external periodic potential and undergoing collisions with a background gas of particles with mass one.  The object of our limit theorem is the time integral of the force exerted on the test particle by the potential, and we consider this quantity in the limit that  $\lambda$ tends to zero for time intervals on the scale $\lambda^{-1}$.  Under appropriate rescaling, the total drift in momentum generated by the  potential converges to a Brownian motion time-changed by the local time at zero of an Ornstein-Uhlenbeck process.

\end{abstract}

\section{Introduction}

\subsection{Model and results}

Consider the family $\lambda\in \R^{+}$ of Markov processes $(X_{t}^{(\lambda)},P_{t}^{(\lambda)} )\in \R^{2}$ whose densities  $\tuP_{t,\lambda}(x,p)$  obey the forward Kolmogorov equation
\begin{align}\label{TheModel}
\frac{d}{dt}\tuP_{t,\lambda}(x,\,p)= &-p\frac{\partial}{\partial x}\tuP_{t,\lambda}(x,p)+\frac{dV}{dx}\big(x\big)\frac{\partial}{\partial p}\tuP_{t,\lambda}(x,p) \nonumber     \\ &+\int_{\R}dp^{\prime}\big(\mcJ_{\lambda}(p^{\prime},p)\tuP_{t,\lambda}(x,p^{\prime})-\mcJ_{\lambda}(p,p^{\prime})\tuP_{t,\lambda}(x,p)     \big),   
\end{align}
where $V(x)=V(x+1)\geq 0$ is continuously differentiable, and the jump kernel $\mcJ_{\lambda}(p,p^{\prime})$ has the form 
\begin{align}\label{JumpRates}
  \mcJ_{\lambda}(p^{\prime},p):=  \frac{(1+\lambda)}{64}\big|p^{\prime}-p\big|e^{-\frac{1}{2}\big(\frac{1-\lambda}{2}p^{\prime} -\frac{1+\lambda}{2}p    \big)^{2}      }.   
  \end{align}
The values $\mcJ_{\lambda}(p^{\prime},p)$ correspond to  the rate of jumps from $(x,p')$ to $(x,p)$.  The Kolmogorov equation above is an idealized description of the phase space density for a test particle in dimension one that feels a spatially periodic force $-\frac{dV}{dx}(x)$ and receives elastic collisions with particles from a gas.  The jump rates $\mcJ_{\lambda}$ correspond to the one-dimensional case of equation (8.118) from~\cite{Spohn} in which the mass of a single reservoir particle is set to one, the temperature of the gas is set to one,  the  spatial density of the gas is set to $\frac{1}{32}(2\pi)^{\frac{1}{2}}$, and the mass of the test particle is $\lambda^{-1}$.   

We will subsequently suppress the $\lambda$-dependence of the dynamics by removing the superscript for the process: $(X_{t},P_{t})$.  The cumulative drift  in the particle's momentum up to time $t\in \R^{+}$ due to the periodic force field has the form $-D_{t}$ for
$$D_{t}= \int_{0}^{t}dr\frac{dV}{dx}(X_{r}).     $$ 
The momentum at time $t$ can be written in the form $P_{t}=P_{0}-D_{t}+J_{t}$,
 where $J_{t}$ is the sum of all the momentum jumps resulting from collisions with the gas.  To state our main result contained in Thm.~\ref{ThmMain} below, let us define the limiting processes.  Define $\mfp\in \R$ to be the process with  $\frak{p}_{0}=0$ and satisfying the Langevin equation
\begin{align}\label{TheLimit}
d\frak{p}_{t}=-\frac{1}{2}\frak{p}_{t}dt+d\mathbf{B}_{t}',
\end{align}
where  $\mathbf{B}'$ is a standard Brownian motion.  The solution $\frak{p}$ is referred to as the Ornstein-Uhlenbeck process~\cite{Uhlen}.   Moreover, let the process $\frak{l}$ denote the local time at zero for the process $\frak{p}$.  Recall that the local time at a point $a\in \R$ over the interval $[0,t]$ is formally given by the expression: $\int_{0}^{t}dr\delta_{a}(\frak{p}_{r})$.

In~\cite{Previous} it was shown that  $\lambda^{\frac{1}{2}}P_{\frac{\cdot}{\lambda} }$ converges in law as $\lambda\rightarrow 0$ to $\frak{p}$ over any finite time interval $[0,T]$ and that the expectation of $\sup_{0\leq t\leq T}|\lambda^{\frac{1}{4}}D_{\frac{t}{\lambda}}|$  is uniformly bounded for all $\lambda<1$.  Theorem~\ref{ThmMain} extends this result to a limit law for     $\lambda^{\frac{1}{4}}D_{\frac{\cdot}{\lambda}}$ which is joint with that of  $\lambda^{\frac{1}{2}}P_{\frac{\cdot}{\lambda} }$.  The rescaled momentum drift  $\lambda^{\frac{1}{4}}D_{\frac{\cdot}{\lambda}}$  converges to a diffusion process time-changed by the local time of the Ornstein-Uhlenbeck process $\frak{p}$ that $\lambda^{\frac{1}{2}}P_{\frac{\cdot}{\lambda} }$ limits to.  The diffusion constant $\kappa\in \R^{+}$ in the statement of Thm.~\ref{ThmMain} is formally given by a Green-Kubo form that is remarked upon in Sect.~\ref{SubSecDis}.

 \begin{theorem}\label{ThmMain}
Assume that $V(x)$ is continuously differentiable and that the initial distribution $\mu$ has finite moments in momentum: $\int_{\R^{2}}d\mu(x,p) |p|^{m}<\infty $ for $m\geq 1 $.  In the limit $\lambda\rightarrow 0$, there is convergence in law of the process pair
$$\hspace{3cm} \Big(\lambda^{\frac{1}{2}}P_{\frac{t}{\lambda} },\, \lambda^{\frac{1}{4}}D_{\frac{t}{\lambda}} \Big)\stackrel{\frak{L}}{\Longrightarrow} \big(\frak{p}_{t},\sqrt{\kappa} \mathbf{B}_{\frak{l}_{t}}   \big), \quad \quad \quad t\in[0,\,T],  $$
for a constant $\kappa>0$, and where $\frak{l}$ is the local time at zero of $\frak{p}$, and $\mathbf{B}$ is a copy of Brownian motion  independent of $\frak{p}$.  The convergence is with respect to the Skorokhod metric.

\end{theorem}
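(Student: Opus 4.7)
My plan is to use a martingale-corrector method combined with the Dambis--Dubins--Schwarz time change; the deepest content is identifying the quadratic variation of a martingale approximation to $D$ with a multiple of the local time $\frak{l}_t$.

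\textbf{Corrector and martingale decomposition.}
Let $L = p\,\partial_x - V'(x)\,\partial_p + L_{\mathrm{coll}}$ be the backward generator of $(X_t,P_t)$, with $L_{\mathrm{coll}}$ the jump operator built from $\mcJ_\lambda$. I would seek a function $\chi(x,p)$, periodic in $x$, solving the Poisson equation $L\chi(x,p)=V'(x)$; the solvability condition $\int_0^1 V'(x)\,dx=0$ is automatic. For $|p|$ large the leading-order ansatz $\chi(x,p)\approx V(x)/p$ exactly cancels $V'$ via $p\,\partial_x[V/p]=V'$, so $\chi$ decays like $|p|^{-1}$ at infinity; for $|p|$ bounded the hypoellipticity/ergodicity supplied by $L_{\mathrm{coll}}$ should produce a bounded solution by Fredholm theory, so $\chi$ is globally bounded on $[0,1)\times\R$. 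Dynkin's formula then gives
\begin{equation*}
D_t \;=\; \chi(X_t,P_t)-\chi(X_0,P_0)\;-\;M_t,
\end{equation*}
with $M_t$ a martingale. Using $\|\chi\|_\infty<\infty$, the boundary term is $O(\lambda^{1/4})$ uniformly in $t\in[0,T]$, so $\lambda^{1/4}D_{t/\lambda}$ and $-\lambda^{1/4}M_{t/\lambda}$ have the same Skorokhod limit.

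\textbf{Bracket concentrates on the local time.}
The predictable bracket of $\lambda^{1/4}M_{\cdot/\lambda}$ equals $\lambda^{1/2}\int_0^{t/\lambda}\Phi(X_r,P_r)\,dr$, where $\Phi$ is the carré-du-champ of $\chi$. Since $\chi\sim V/p$ at infinity and jumps of $P$ under $\mcJ_\lambda$ have bounded typical size, $\Phi(x,p)$ decays like $|p|^{-4}$ in momentum, and the integral is dominated by intervals where $|P_r|$ is of order one. Averaging over the fast $x$-ergodicity at fixed $p$, one reduces to
\begin{equation*}
\lambda^{1/2}\int_0^{t/\lambda}\overline{\Phi}(P_r)\,dr
\;=\;\int_0^t \lambda^{-1/2}\overline{\Phi}\!\left(\lambda^{-1/2}\bigl(\lambda^{1/2}P_{s/\lambda}\bigr)\right)ds,
\end{equation*}
where $\overline{\Phi}(p)=\int_0^1\Phi(x,p)\,dx$ is integrable in $p$. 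As $\lambda\to 0$ the density $\lambda^{-1/2}\overline{\Phi}(\lambda^{-1/2}\cdot)$ approaches $\kappa\,\delta_0$ with $\kappa=\int_\R\overline{\Phi}(p)\,dp$ (the Green--Kubo constant), and together with $\lambda^{1/2}P_{\cdot/\lambda}\Rightarrow\frak{p}$ the occupation-time formula gives convergence of the bracket to $\kappa\,\frak{l}_t$. The Rebolledo martingale CLT then yields a continuous limit martingale with bracket $\kappa\,\frak{l}_t$, and Dambis--Dubins--Schwarz identifies it as $\sqrt{\kappa}\,\mathbf{B}_{\frak{l}_t}$ for some Brownian motion $\mathbf{B}$.

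\textbf{Joint convergence, independence, and main obstacle.}
For the joint limit to have the stated form I must also verify $\mathbf{B}\perp\frak{p}$, which I would obtain by showing that the cross bracket between $M$ and the momentum-jump martingale driving $P$ vanishes after rescaling; the $|p|^{-1}$ decay of $\chi$ makes cross-jump correlations of strictly smaller order than either individual bracket. The main obstacle is the bracket identification above: the corrector is worst-behaved on the region $\{p \approx 0\}$ where the local time lives and where fast ergodicity in $x$ degenerates. I expect this to be handled via a resolvent-regularized corrector $\chi_\epsilon=(\epsilon-L)^{-1}V'$, showing that the resulting error terms in Dynkin's formula vanish after taking first $\lambda\to 0$ and then $\epsilon\to 0$, together with refined estimates on the occupation density of $\lambda^{1/2}P_{\cdot/\lambda}$ near the origin that upgrade the tightness of~\cite{Previous}.
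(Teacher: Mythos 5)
Your proposal takes a genuinely different route from the paper's, but it hinges on a step that the paper explicitly flags as the central obstruction: the existence and $\lambda$-uniform boundedness of the corrector $\chi$ solving $L\chi = V'$. You assert that the leading-order ansatz $\chi\approx V/p$ gives $|p|^{-1}$ decay at infinity and that ``Fredholm theory'' supplies a bounded solution on compacts, but neither claim is established, and both are precisely where the difficulty lives. The generator $L=L_\lambda$ depends on $\lambda$, its spectral gap degenerates as $\lambda\to 0$ (the process becomes null-recurrent at $\lambda=0$), and so a $\lambda$-uniform pointwise bound on the reduced resolvent $\frak{R}^{(\lambda)}V'$ is exactly what one needs and what is not available. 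The paper is explicit about this: it abandons the standard corrector construction~(\ref{MartConst}) because ``it is not clear how to obtain the necessary bounds on the reduced resolvent,'' and remarks that the reduced resolvent is ``a seemingly delicate cancellation of quantities with opposite sign.'' The paper's substitute is a Nummelin-splitting martingale (Lem.~\ref{LemKeyMart}) whose boundary terms are controlled through the positive generalized resolvent $U^{(\lambda)}_h$; positivity is what makes the estimates tractable. Your closing suggestion of a resolvent-regularized corrector $\chi_\epsilon=(\epsilon-L)^{-1}V'$ is a reasonable instinct, but the double-limit bookkeeping would have to reproduce exactly the kind of decay estimate~(\ref{YinYang}) that the paper obtains on different objects, and you have not indicated how.

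Two further gaps are worth naming. First, identifying the limiting bracket with $\kappa\frak{l}_t$ requires a nontrivial local-time limit theorem for the rescaled momentum, i.e., $(\lambda^{1/2}P_{\cdot/\lambda},\lambda^{1/2}L_{\cdot/\lambda})\Rightarrow(\frak{p},\frak{l})$; this is the content of the paper's Thm.~\ref{ThmLocalTime} and occupies all of Sect.~\ref{SecLocalTime}, going through the semimartingale decomposition of $\mathbf{Q}_t=(2H_t)^{1/2}$ and a Tanaka-Meyer argument. You recognize this need but absorb it into a sentence about ``refined estimates on the occupation density.'' Second, your independence argument --- that a vanishing cross bracket between $M$ and the momentum-driving martingale gives $\mathbf{B}\perp\frak{p}$ --- only yields orthogonality, not independence, and in the non-Gaussian pre-limit setting these are not the same. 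The paper treats this carefully in Lem.~\ref{MartProblem}: it observes that $\frak{p}$ carries more information than $\frak{l}$, so one cannot simply appeal to the standard subordinator argument, and it instead constructs $\epsilon$-approximating processes $\frak{p}^{(\epsilon)}$ that are manifestly independent of $\mathbf{B}$ and converge to $\frak{p}$. Your sketch would need to be replaced or supplemented by an argument of this kind.
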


Theorem~\ref{ThmMain} implies that the contribution $J_{t}$ to the momentum generated by collisions has higher order than the forcing part $D_{t}$.  In particular, $\lambda^{\frac{1}{2}}J_{\frac{\cdot}{\lambda}}$ converges to the Ornstein-Uhlenbeck process as $\lambda\rightarrow 0$.  In the conjecture below, we give a more refined statement for the limiting law of the full momentum $\lambda^{\frac{1}{2}}P_{\frac{\cdot}{\lambda}}$ for small $\lambda$  that takes into account the perturbative contribution of the forcing term $\lambda^{\frac{1}{2}}D_{\frac{t}{\lambda}}$.  In this approximation, the contribution of the periodic force is given by a diffusive pulse  that the momentum feels when it returns to the region around the value zero. The $\frak{p}$ in the statement of the conjecture should be thought of as the limit in law  of the collision contribution $\lambda^{\frac{1}{2}}J_{\frac{\cdot}{\lambda}}$.

\begin{conjecture}\label{ThmMainII}
Take the assumptions of Thm.~\ref{ThmMain}, and let $F:C([0,T])\rightarrow \C$ be bounded and smooth with respect to the supremum norm.  Define the process $\frak{p}_{t,\lambda}$ as
\begin{align}\label{PertTheLimit}
\hspace{2cm}\frak{p}_{t,\lambda}:=\frak{p}_{t}+ \sqrt{\kappa}\lambda^{\frac{1}{4}}\Big(\mathbf{B}_{\frak{l}_{r}}-\frac{1}{2}   \int_{0}^{t}dre^{-\frac{1}{2}(t-r) }\mathbf{B}_{\frak{l}_{r}}\Big)  , 
\end{align}
where  $\frak{p}$, $\mathbf{B}$, $\frak{l}$, and $\kappa>0$ are defined as in Thm.~\ref{ThmMain}.  Then the law of the process $\lambda^{\frac{1}{2}}P_{\frac{\cdot}{\lambda}}$ is close to the law of $\frak{p}_{\cdot,\lambda}$ for $\lambda\ll 1$ in the sense that 
$$\mathbb{E}\big[ F\big(\lambda^{\frac{1}{2}}P_{\frac{\cdot}{\lambda}}   \big)\big]=  \mathbb{E}\big[ F\big(\frak{p}_{\cdot,\lambda}   \big)\big]+\mathit{O}(\lambda^{\frac{1}{2}}). $$
Note that if $\frak{p}_{t,\lambda} $ is replaced by  $\frak{p}_{t,0}=\frak{p}_{t} $ in the expectation above, then the error can at best be $\mathit{O}(\lambda^{\frac{1}{4}})$.   
\end{conjecture}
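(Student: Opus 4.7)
The plan is to decompose the rescaled momentum $\pi_t^\lambda := \lambda^{\frac{1}{2}} P_{\frac{t}{\lambda}}$ into a principal Ornstein--Uhlenbeck component driven by the collisions and a first-order perturbative contribution from the periodic force, then match the latter against the explicit $\lambda^{\frac{1}{4}}$-correction in the definition of $\frak{p}_{t,\lambda}$. First I would perform a martingale decomposition of the collision contribution, writing $\lambda^{\frac{1}{2}} dJ_{\frac{t}{\lambda}} = b_t^\lambda\,dt + dM_t^\lambda$. Using the explicit form of $\mcJ_\lambda$ and a small-$\lambda$ expansion of its first two moments about $p$, the predictable drift should satisfy $b_t^\lambda = -\frac{1}{2}\pi_t^\lambda + \mathit{O}(\lambda^{\frac{1}{2}})$ and the quadratic variation of $M^\lambda$ should equal $t + \mathit{O}(\lambda^{\frac{1}{2}})$, together giving the OU approximation already underlying Thm.~\ref{ThmMain}. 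Combining with $\pi_t^\lambda = \lambda^{\frac{1}{2}} P_0 - \lambda^{\frac{1}{2}} D_{\frac{t}{\lambda}} + \lambda^{\frac{1}{2}} J_{\frac{t}{\lambda}}$ and applying variation-of-constants to the resulting linear SDE yields
\begin{align*}
\pi_t^\lambda = e^{-\frac{t}{2}}\pi_0^\lambda + \int_0^t e^{-\frac{t-s}{2}}\,dM_s^\lambda - \int_0^t e^{-\frac{t-s}{2}}\,d\big(\lambda^{\frac{1}{2}} D_{\frac{s}{\lambda}}\big) + \mathit{O}(\lambda^{\frac{1}{2}}).
\end{align*}
Integration by parts on the last term recasts it as $-\lambda^{\frac{1}{2}} D_{\frac{t}{\lambda}} + \frac{1}{2}\int_0^t e^{-\frac{t-s}{2}}\lambda^{\frac{1}{2}} D_{\frac{s}{\lambda}}\,ds$, which after factoring out $\lambda^{\frac{1}{4}}$ and inserting the limit $\lambda^{\frac{1}{4}} D_{\frac{\cdot}{\lambda}} \Rightarrow \sqrt{\kappa}\mathbf{B}_{\frak{l}_\cdot}$ from Thm.~\ref{ThmMain} exactly reproduces the explicit $\lambda^{\frac{1}{4}}$-perturbation defining $\frak{p}_{t,\lambda}$.

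The second ingredient is a joint functional limit theorem stating that the pair $\big(\int_0^{\cdot} e^{-\frac{\cdot-s}{2}}dM_s^\lambda,\; \lambda^{\frac{1}{4}} D_{\frac{\cdot}{\lambda}}\big)$ converges in law to $(\frak{p}_\cdot,\,\sqrt{\kappa}\mathbf{B}_{\frak{l}_\cdot})$ with $\mathbf{B}$ independent of $\frak{p}$. This should follow by a modification of the argument for Thm.~\ref{ThmMain}, since the Brownian motion $\mathbf{B}'$ driving $\frak{p}$ is the limit of $M^\lambda$, while $\mathbf{B}_{\frak{l}_\cdot}$ arises from an averaging of $V'(X_r)$ at excursions of $P_r$ near zero that is asymptotically uncorrelated with the collision martingale $M^\lambda$. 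To convert the resulting in-law approximation into the claimed bound $\mathbb{E}[F(\pi^\lambda)] - \mathbb{E}[F(\frak{p}_{\cdot,\lambda})] = \mathit{O}(\lambda^{\frac{1}{2}})$, I would realize both processes on a common probability space through a strong approximation: couple $M^\lambda$ to $\mathbf{B}'$ with sup-norm error $\mathit{O}(\lambda^{\frac{1}{2}})$ (feasible via standard martingale invariance principles once the quadratic variation estimate from Step~1 is in hand), couple $\lambda^{\frac{1}{4}} D_{\frac{\cdot}{\lambda}}$ to $\sqrt{\kappa}\mathbf{B}_{\frak{l}_\cdot}$ with sup-norm error $\mathit{o}(\lambda^{\frac{1}{4}})$, and then exploit the Lipschitz property of $F$ on $C([0,T])$.

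The decisive obstacle is the quantitative coupling rate for $\lambda^{\frac{1}{4}} D_{\frac{\cdot}{\lambda}}$. Thm.~\ref{ThmMain} is a qualitative convergence in law and supplies no rate, yet because the $D$-term already enters $\pi_t^\lambda$ at order $\lambda^{\frac{1}{4}}$, any coupling error of size $\epsilon$ at that stage propagates to an error of size $\lambda^{\frac{1}{4}}\epsilon$ in the final expectation; the claimed $\mathit{O}(\lambda^{\frac{1}{2}})$ thus demands $\epsilon = \mathit{O}(\lambda^{\frac{1}{4}})$. Producing a strong approximation at this rate for a local-time-changed Brownian motion is genuinely delicate, because the limit object reflects a singular averaging of the periodic force near visits of $P_r$ to the zero-momentum region, and standard tools for rates in functional CLTs (Komlós--Major--Tusnády, Berry--Esseen bounds, Stein's method) do not apply cleanly in the presence of this local-time time change. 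I would expect a semigroup-based argument exploiting the smoothness of $F$, combined with refined occupation-time moment estimates for $\lambda^{\frac{1}{2}} P_{\frac{\cdot}{\lambda}}$ near zero, to be the most promising path, but it is precisely the sharpness of the $\lambda^{\frac{1}{2}}$ rate that presumably keeps the statement a conjecture rather than a theorem.
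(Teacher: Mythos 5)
This statement is labeled a \emph{Conjecture} in the paper and no proof is given; the paper's only supporting material is the informal remark in the subsection ``Comments on Conjecture~\ref{ThmMainII},'' where $\frak{p}_{t,\lambda}$ is interpreted via the formal Langevin equation~(\ref{Task}). So there is no proof to compare yours against.

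That said, your sketch is a reasonable and well-structured heuristic that is consistent with the paper's own interpretation. The decomposition $\lambda^{\frac{1}{2}}P_{\frac{t}{\lambda}}=\lambda^{\frac{1}{2}}P_{0}-\lambda^{\frac{1}{2}}D_{\frac{t}{\lambda}}+\lambda^{\frac{1}{2}}J_{\frac{t}{\lambda}}$, the Doob--Meyer splitting of the collision term, and the variation-of-constants step do exactly reproduce (up to the immaterial choice of sign for $\mathbf{B}$, and a typo $\frak{l}_{r}$ for $\frak{l}_{t}$ in~(\ref{PertTheLimit})) the $\lambda^{\frac{1}{4}}$-correction in the definition of $\frak{p}_{t,\lambda}$, matching the formal SDE~(\ref{Task}). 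More importantly, you correctly put your finger on why this remains a conjecture: Thm.~\ref{ThmMain} delivers only a qualitative functional limit for $\lambda^{\frac{1}{4}}D_{\frac{\cdot}{\lambda}}$, but the claimed $\mathit{O}(\lambda^{\frac{1}{2}})$ error after a $\lambda^{\frac{1}{4}}$ prefactor demands a coupling or distributional distance estimate at rate $\mathit{O}(\lambda^{\frac{1}{4}})$ for a local-time-changed diffusion, and no such quantitative rate is established (or readily available) by the paper's methods, which pass through tightness/martingale-problem arguments rather than strong approximation. Your further observation that establishing the independence structure between the collision martingale limit and the force contribution at this quantitative level is also nontrivial is equally apt --- the paper's Lem.~\ref{MartProblem} handles this only in the qualitative limit. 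In short: your plan is the natural one and your diagnosis of the missing ingredient is exactly what separates the conjecture from a theorem; you have not overlooked a hidden argument in the paper, because the paper does not contain one.
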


\subsection{Discussion}\label{SubSecDis}

Theorem~\ref{ThmMain} characterizes the limiting law for the integral functional of the Markov process $S_{t}=(X_{t},P_{t})$ given by
\begin{align}\label{NormDrift}
D_{t}= \int_{0}^{t}drg(S_{r}),   \hspace{1.5cm}  g(x,p)=\frac{dV}{dx}(x),
 \end{align}
for time scales $t\propto \lambda^{-1}$ and normalization factor $\lambda^{\frac{1}{4}}$.   The underlying law of the process $S_{t}$ depends on the parameter $\lambda$ through the jump rate kernel $\mathcal{J}_{\lambda}$.  Since the potential $V(x)$ has period one, it is convenient to view $S_{t}$ as having state space $\Sigma :=\mathbb{T}\times \R$, where $\mathbb{T}:=[0,1)$ is identified with the unit torus,  rather than $\R^{2}$.  The process $S_{t}\in \Sigma $ is ergodic to an equilibrium state given  by the Maxwell-Boltzmann distribution
\begin{align}\label{MaxBolt}
\Psi_{\infty,\lambda}(x,p):= \frac{ e^{-\lambda H(x,p) } }{N(\lambda)   },
\end{align} 
where $H(x,p):=\frac{1}{2}p^{2}+V(x)$ and for a normalization constant $N(\lambda)\in \R^{+} $.  Although the ergodicity is exponential in nature,  the rate of ergodicity decays as $\lambda$ goes to zero, and thus a limit theorem for a normalized version of $D_{\frac{t}{\lambda}}$  does not fall under the limit theory for integral functionals of an ergodic Markov process~\cite{Landim}.  This is also clear from the appropriate scaling factor for $D_{\frac{t}{\lambda}}$ being $\lambda^{\frac{1}{4}}$ rather the $\lambda^{\frac{1}{2}}$.  Heuristics for this scaling were given in~\cite[Sect. 1.2.2]{Previous}, and the smaller exponent for the scaling is driven by the fact that  $\frac{dV}{dx}(X_{r})$ is typically oscillating with high frequency ($\propto \lambda^{-\frac{1}{2}}$) around zero for most of the time interval $[0,\frac{T}{\lambda}]$.  These oscillations in $\frac{dV}{dx}(X_{r})$ occur as the particle revolves around the torus with speed $|P_{r}|$, which typically is found on the order $\lambda^{-\frac{1}{2}}$.  The fluctuations in $D_{t}$ have a chance to accumulate primarily when $|\lambda^{\frac{1}{2}}P_{\frac{t}{\lambda}} |$ dips down to ``small" values, and this suggests that a non-trivial limit law arising from a rescaled version of $D_{\frac{t}{\lambda}}$ should be related to the local time at zero for the limiting law of $\lambda^{\frac{1}{2}}P_{\frac{t}{\lambda}}$.          

As $\lambda\rightarrow 0$ the jump rates approach the form $\mcJ_{0}(p,p^{\prime})=j(p-p^{\prime})$ for 
\begin{align}\label{LambdaZero}
 j(p) :=   \frac{1}{64}|p|e^{-\frac{1}{8}p^{2}}   ,    
 \end{align}
which describe an unbiased random walk in momentum. Thus the process $S_{t}$ behaves more like a null-recurrent Markov process for small $\lambda$.  This idea breaks down at time scales $\propto \lambda^{-1}$ where a first-order contribution to $\mcJ_{\lambda}(p,p')$ around $\lambda=0$ generates the frictional drag to smaller momenta seen in the linear drift term of the Langevin equation~(\ref{TheLimit})  defining $\frak{p}_{t}$.  The diffusion constant $\kappa\in \R^{+}$ in Thm.~\ref{ThmMain} is formally given by the Green-Kubo expression 
\begin{align}\label{DiffConst} \kappa = 2 \int_{[0,\,1]\times \R }dx dp\frac{dV}{dx}(x)\frak{R}^{(0)}(\frac{dV}{dx})(x,p),   
\end{align}
where $\frak{R}^{(0)}=\int_{0}^{\infty}dr e^{r\mathcal{L}_{0}}  $ is the reduced resolvent of the backwards generator $\mathcal{L}_{0}$ formally acting on differentiable $F\in L^{\infty}(\Sigma)$   as
$$ (\mathcal{L}_{0} F)(x,p)=  p \frac{\partial}{\partial x}F(x,p)-\frac{dV}{dx}\big(x\big)\frac{\partial}{\partial p}F(x,p) \\+\int_{\R}dp^{\prime}j(p^{\prime})\big(F(x,p+p^{\prime})-F(x,p)     \big). $$

The null-recurrent behavior for the process $S_{t}=(X_{t},P_{t})$ emerging as $\lambda\rightarrow 0$  at short time scales and the relaxation behavior that takes place on time scales $\propto \lambda^{-1}$ are both apparent in the limiting law $\sqrt{\kappa} \mathbf{B}_{\frak{l}_{t}}$;  the diffusion constant $\kappa\in \R^{+}$ is defined in terms of the jump rates~(\ref{LambdaZero}) which correspond to an unbiased random walk, and on the other hand,  the local time process $\frak{l}_{t}$ is defined in terms of the Ornstein-Uhlenbeck process  which has exponential relaxation (in the correct norm) to the Maxwell-Boltzmann distribution $(\frac{1}{2\pi})^{\frac{1}{2}}   e^{-\frac{1}{2}q^{2}}$.

\subsubsection{The limiting processes} \label{SecLimitProc}

As before we let $\frak{l}$ denote the local time of the Ornstein-Uhlenbeck process $\frak{p}$ and $\mathbf{B}$ be a standard Brownian motion independent of $\frak{p}$.  Recall that the local time process $\frak{l}^{(a)}$ for a point  $a\in \R$ is the a.s. continuous increasing process formally given by
$$\frak{l}_{t}^{(a)}=\int_{0}^{t}dr \delta_{a}(\frak{p}_{r}).  $$
For each realization of the process $\frak{p}$ over the interval $[0,t]$, $\frak{l}_{t}^{(a)}$ is the density of time that the path for $\frak{p}$ spends at $a$, and thus $\int_{\R}da \frak{l}_{t}^{(a)}=t$.  For the case $a=0$, we neglect the superscript for $\frak{l}^{(a)}$.  The values of $\frak{l}$ stay fixed over the time intervals in which $\frak{p}$ moves away from the origin, and thus, in a sense, $\frak{l}$ makes its increases over the set of times such that $\frak{p}_{t}=0$, which has Hausdorff dimension $\frac{1}{2}$.   The fractional diffusion process $\sqrt{\kappa}\mathbf{B}_{\frak{l}}$, appearing as the $\lambda\rightarrow 0$ limit in law of $\lambda^{\frac{1}{4}}D_{\frac{\cdot}{\lambda}}$ in  Thm.~\ref{ThmMain}, has its fluctuations constrained to those  times in which $\frak{l}$ increases.   
Clearly, $\sqrt{\kappa}\mathbf{B}_{\frak{l}}$ is not Markovian since the amount time that the process $\sqrt{\kappa}\mathbf{B}_{\frak{l}}$ has held its current value, i.e., the excursion time of $\frak{p}$ from zero, is correlated with the amount time that it is likely to remain fixed at that value.  The probability densities $\rho_{t}:\R\rightarrow \R^{+}$  of $\sqrt{\kappa}\mathbf{B}_{\frak{l}_{t}}$ satisfy the Volterra-type integro-differential equation
\begin{align}\label{Volterra}
\hspace{1cm}\rho_{t}(q)=\rho_{0}(q)+\frac{\kappa}{2(2\pi)^{\frac{1}{2}} }\int_{0}^{t}dr \frac{ (\Delta_{q}\rho_{r} )(q)  }{\big(1-e^{-\frac{1}{2}(t-r)}     \big)^{\frac{1}{2}}},\hspace{1cm} \rho_{0}(q):=\delta_{0}(q).    
\end{align}
 The non-Markovian nature of the processes $\sqrt{\kappa} \mathbf{B}_{\frak{l}}$ is visible in the convolution form appearing  in~(\ref{Volterra}).  The master equation above is similar to the master equation for a  Brownian motion with diffusion constant $\kappa\in \R^{+}$ time-changed by an independent Mittag-Leffler process $\frak{m}^{(\alpha)}$  of index $0<\alpha<1$.   Note that our limiting processes does not satisfy any scale invariance because $\frak{p}$ does not and thus $\frak{l}$ does not. Some further discussion of local time and related material is included in Appx.~\ref{SecIdealProc}.

 \subsubsection{Related literature}\label{SecRelatedLit} \vspace{.2cm}

The limit theory for integral (or summation) functionals of Markov processes (respectively, chains) usually splits into several standard categories depending on whether the limiting procedure is of first- or second-order and whether the Markov process is positive-recurrent or null-recurrent.   Second-order limit theorems for integral functionals of ergodic Markov processes are well-understood (for instance~\cite{Kipnis}, and see the book~\cite{Landim} for a broader discussion of the literature).   In the null-recurrent case,   second-order limit theory for integral functionals is discussed in~\cite{TouatiUnpub}, in~\cite{Pap,Salminen} when the Markov process is a diffusion, and in~\cite{ChenII} for a Markov chain rather than a process.  The second-order theory is closely related to the limit theory for martingales by a standard construction~(\ref{MartConst}), which seems to have been introduced in~\cite{Gordon} in the analogous case of a chain.   Limit results for  martingales with quadratic variations that are additive functionals of null-recurrent Markov processes can be found in~\cite{TouatiUnpub,Hopfner}.  This literature builds on and applies the limit theory for additive functionals of  Markov processes  (see, for instance,~\cite{Chen,Csaki} and for more recent results~\cite{Loch,Louk}), which began with a paper by Darling and Kac~\cite{Darling}.  
 The monograph~\cite{Hopfner} is a particularly useful reference on this  subject, which, in addition to presenting new results, serves some purpose as a review.

 The usual recipe for finding a martingale close to an integral functional $\int_{0}^{t}dr g(S_{r})$ of a Markov process is given by the following:  if $S_{t}$ is a Harris recurrent Markov process and $g$ is a function defined on its state space such that the reduced resolvent $\frak{R}$ of the backward evolution  operating on $g$ is ``well-behaved" (e.g. lives in a suitable $L^{p}$ space), then
\begin{align} \label{MartConst}
 \tilde{M}_{t}'= (\frak{R}\,g)(S_{t})- (\frak{R}\,g)(S_{0}) +\int_{0}^{t}dr g(S_{r})   
 \end{align}
is a martingale.  The difference between $\int_{0}^{t}dr g(S_{r})$ and   $\tilde{M}_{t}'$ is a pair of terms that are comparatively small in many situations.  For our model, it is not clear  how to obtain the necessary bounds on the reduced resolvent $\big(\frak{R}^{(\lambda)}\,\frac{dV}{dx}\big)(s)$ in the limit $\lambda\rightarrow 0$ to exploit~(\ref{MartConst}), and we use a variant of this martingale; see Lem.~\ref{LemKeyMart}.   To build a martingale approximating $D_{t}$, we expand the state space from $\Sigma$ to $\tilde{\Sigma}=\Sigma\times \{0,1\}$ using a Nummelin splitting-type construction.   The benefit of viewing the process in the extended state space is  that the trajectories for the process $S_{t}$ can be decomposed into a series of nearly i.i.d. parts corresponding to time intervals  $[R_{n},R_{n+1})$, where $R_{n}$ are associated with the return times to an ``atom" identified with the subset $\Sigma\times 1 \subset \tilde{\Sigma}$.    This allows the integral functional $D_{t}$ to be written as a sum of boundary terms plus a random sum of nearly i.i.d. random variables.

The approach of this article differs from that suggested in~\cite{TouatiUnpub} in that we use Nummelin splitting techniques  for the construction of the martingale $\tilde{M}_{t}$ that approximates the integral functional $D_{t}$.  The former result begins with the martingale construction~(\ref{MartConst}) defined in the original statistics and applies  splitting tools to  study the additive functional associated with the predictable quadratic variation process $\langle \tilde{M}'\rangle$.   Our technical apparatus relies on inequalities  for a  generalized resolvent (see~\cite[Sect. 4]{Previous} for the application) given by
$$ \big(U^{(\lambda)}_{h}g\big)(s):=\mathbb{E}^{(\lambda)}_{s}\Big[ \int_{0}^{\infty}dt e^{-\int_{0}^{t}dr h(S_{r})}g(S_{t} )    \Big],    $$
where $h:\Sigma \rightarrow [0,1]$ has compact support~(\ref{Philobuster}), and the evaluation is for a function $g=g_{\lambda}$ that is closely related to the form
\begin{align}\label{YinYang}
g_{\lambda}(s)\approx\Big|\mathbb{E}_{s}^{(\lambda)}\Big[\int_{0}^{\infty}dt\,t\,e^{-t}\frac{dV}{dx}(X_{t})    \Big]\Big| .   
\end{align}
The function $g_{\lambda}(s)$ captures the averaging of the oscillations for $\frac{dV}{dx}(X_{t})$ that occur at high momentum $|P_{t}|\gg 1$.    The generalized resolvent  $\big(U^{(\lambda)}_{h}g\big)(s)$ may appear to be a more difficult object to work than the reduced resolvent $\big(\frak{R}^{(\lambda)}\,\frac{dV}{dx}\big)(s)$, however, the generalized resolvent is an integral of  positive values, and the reduced resolvent is a seemingly delicate cancellation of quantities with opposite sign.  Moreover, the generalized resolvent can be understood in the $\lambda\ll 1$ regime through intuition about the expected amount of time that a random walker should  linger in different parts of phase space before returning to a neighborhood of the origin when beginning from a phase space point $s\in \Sigma$.

We briefly discuss the history of these splitting techniques.  
 For Markov chains a technique for extending the dynamics from a state space $\Sigma$ to $\Sigma\times\{0,1\}$ in order to embed an atom was developed independently in~\cite{Nummelin} and~\cite{Athreya}, and this is referred to as  \textit{Nummelin splitting} or merely \textit{splitting}.   When it comes to the splitting of Markov processes, there are different schemes offered in~\cite{Hopfner} and~\cite{Loch}.  In~\cite{Hopfner} there is a sequence of split processes constructed which contain marginal processes that are arbitrarily close to the original process.  The construction in~\cite{Loch} involves a larger state space $\Sigma\times [0,1]\times \Sigma$ although an exact copy of the original process is embedded as a marginal.  The splitting construction  that we employed in~\cite{Previous} and use in the current article is a truncated version of that in~\cite{Loch} although the split process that we consider is not Markovian because of   the truncation.  The idea of applying splitting techniques to obtain limit theorems for  integral functionals of  null-recurrent Markov processes was introduced in~\cite{TouatiUnpub} and has been developed further in other limit theory in~\cite{Chen,ChenII,Hopfner}.

There are some basic differences that should be emphasized between our model and models for the results mentioned above. The law for our underlying Markovian process $S_{t}$ is itself $\lambda$-dependent.  This is not the case for the limit theorems discussed above in which there is a single fixed Markovian dynamics, and a parameter $\lambda $ only appears in the length of the time intervals considered and in the scaling factors for the variables of interest.  This is why it is possible for us to get a limit law $\sqrt{\kappa}\mathbf{B}_{\frak{l}_{t}}$ that has no scale invariance.  The limit theorems for integral functionals $\int_{0}^{t}dr g(S_{r})$ of null-recurrent Markov processes considered in~\cite{TouatiUnpub,Pap,Salminen} assume that the ``velocity function" $g$ exists in $L^{1}$ with respect to the invariant measure of the process.  This effectively means that the null-recurrent process $S_{t}$ spends most of the time in regions of phase space where $g(S_{t})$ is ``small".   In our case,  the function $g(x,p)=\frac{dV}{dx}(x)$ has no explicit decay as $|p|\rightarrow \infty$, and we rely on the rapid oscillations of $\frac{dV}{dx}(X_{r})$ that occur when $|P_{r}|\gg 1$.  The dependence of $g(x,p)$ on only the torus component $x\in \mathbb{T}$ is thus deceptive, and when time-averaging is properly taken into account as in~(\ref{YinYang}), $g(x,p)$ behaves more like a function that decays with order $|p|^{-2}$ for large $|p|$.

Our techniques could be used to prove analogous results for a related model in~\cite{Old}.  In that case, the limiting law for a rescaling of the pair $(P_{t},D_{t})$ (momentum and integral of the force) would have the form $(\sqrt{\sigma}\mathbf{B}', \sqrt{\kappa}\mathbf{B}_{\frak{l}})$ for some $\sigma, \kappa>0$, where $\mathbf{B}', \mathbf{B}$ are independent copies of standard Brownian motion, and $\frak{l}$ is the local time at zero for $\mathbf{B}'$.

\subsubsection{Comments on Conjecture~\ref{ThmMainII}}

 Conjecture~\ref{ThmMainII}  characterizes the perturbative influence for $\lambda\ll 1$ on the momentum of the particle  when the periodic force is turned on.  The process $\frak{p}_{t,\lambda}$  formally satisfies the Langevin equation
\begin{align}\label{Task}
\hspace{2cm} d\frak{p}_{t,\lambda}=-\frac{1}{2}\frak{p}_{t,\lambda}dt+d\mathbf{B}_{t}'+\lambda^{\frac{1}{4}}\sqrt{\kappa}\,\delta_{0}\big(\frak{p}_{t}\big)   d\mathbf{B}_{t}'' , 
\end{align}
where  $\frak{p}_{0,\lambda}=0$,  $\mathbf{B}'$ and $\frak{p}$ are defined as in~(\ref{TheLimit}), and $\mathbf{B}''$ is a copy of standard Brownian motion independent of $\mathbf{B}'$.    This makes the identification $\int_{0}^{t}d\mathbf{B}_{r}''\delta_{0}(\frak{p}_{r})\equiv  \mathbf{B}_{\frak{l}_{r}}$.  Through equation~(\ref{Task}), $\frak{p}_{t,\lambda}$ has the appearance of  what would be a first-order approximation for $\lambda\ll 1$  of a process  $\frak{p}_{t,\lambda}'$ satisfying the stochastic differential equation
\begin{align*}
\hspace{2cm} d\frak{p}_{t,\lambda}'=-\frac{1}{2} \frak{p}_{t,\lambda}'dt+d\mathbf{B}_{t}'+\lambda^{\frac{1}{4}}\sqrt{\kappa}\,\delta_{0}\big(\frak{p}_{t,\lambda}'\big)   d\mathbf{B}_{t}''. 
\end{align*}
However, this equation can not be made sensible.

\subsection{Organization of the article} 

 Section~\ref{SecLocalTime} contains the proof of Thm.~\ref{ThmLocalTime}, which effectively makes the connection between the normalized momentum process $\lambda^{\frac{1}{2}}P_{\frac{\cdot}{\lambda}}$ and the local time $\frak{l}$ appearing in the limiting law for $\lambda^{\frac{1}{4}}D_{\frac{\cdot}{\lambda}}$.   Section~\ref{SecMartProb} contains a formulation of the ``martingale problem" that determines the uniqueness of the limiting law $\big(\frak{p}, \sqrt{\kappa}\mathbf{B}_{\frak{l}}\big)$ in the proof of Thm.~\ref{ThmMain}.  Section~\ref{SecNumSplit} outlines the construction of a version of the process $S_{t}=(X_{t},P_{t})$ in an enlarged state space.    The proof of Thm.~\ref{ThmMain} is in Sect.~\ref{SecMainProof}.  Finally, the proofs for a few lemmas are postponed to Sect.~\ref{SecMiscProofs},   and Appx.~\ref{SecIdealProc} contains some discussion of the limit process $\mathbf{B}_{\frak{l}}$.  We will make the assumptions of Thm.~\ref{ThmMain} throughout the text.

\section{Convergence of a local time quantity}\label{SecLocalTime}

In this section, we work to prove Thm.~\ref{ThmLocalTime} below.  In the statement of the theorem, the process $L_{t}$ is defined as $$L_{t}:=U^{-1}\int_{0}^{t}dr  \chi\big(H(s)\leq l   \big),$$ where $l:=1+2\sup_{x}V(x)$ and $U\in \R^{+}$ is the Lebesgue measure of the set $\{ H(s)\leq l \}\subset \Sigma $.   The  process $L_{t}$ is important because it is close on the relevant scale to the bracket process $\langle \tilde{M}\rangle_{t}$ of a martingale $\tilde{M}_{t}$ approximating the cumulative drift $D_{t}$; see Lem.~\ref{LemKeyMart} for the definition of $\tilde{M}_{t}$.

\begin{theorem}\label{ThmLocalTime}
Let $\frak{p}_{t}$ be the Ornstein-Uhlenbeck process and $\frak{l}_{t}$ be its local time at zero.  As $\lambda\rightarrow 0$,  there is convergence in law
$$\hspace{4cm}\big(\lambda^{\frac{1}{2}} P_{\frac{t}{\lambda}},\,  \lambda^{\frac{1}{2} }L_{\frac{t}{\lambda}}\big)\stackrel{\frak{L}}{\Longrightarrow}(\frak{p}_{t},\frak{l}_{t}), \hspace{2cm} t\in [0,T],$$
where the convergence is with respect to the uniform metric.  Moreover, for any $t\in \R^{+}$
$$  \sup_{\lambda<1} \mathbb{E}^{(\lambda)}\big[ \lambda^{\frac{1}{2} }L_{\frac{t}{\lambda}}\big] <\infty \quad \text{and} \quad  \lim_{\lambda\rightarrow 0}\mathbb{E}^{(\lambda)}\big[ \lambda^{\frac{1}{2} }L_{\frac{t}{\lambda}}\big]= \mathbb{E}\big[ \frak{l}_{t}\big].   $$

\end{theorem}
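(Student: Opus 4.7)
The plan is to build on the marginal convergence $\lambda^{1/2} P_{\cdot/\lambda} \Rightarrow \frak{p}$ established in \cite{Previous} and reduce the claim on $L_{t/\lambda}$ to the classical ``occupation time of a shrinking neighbourhood of zero converges to local time at zero'' mechanism for the one-dimensional Ornstein--Uhlenbeck process. The indicator $\chi(H(S_r)\leq l)$ is active only when $|P_r|\leq \sqrt{2l}$, i.e., when the rescaled momentum $\lambda^{1/2}P_r$ lies in an $O(\lambda^{1/2})$ neighbourhood of $0$, so $\lambda^{1/2}L_{t/\lambda}$ is morally a normalized occupation time of such a shrinking neighbourhood.

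First I would replace the indicator by its $x$-average on the torus. Set
\[
f(p):=U^{-1}\bigl|\{x\in\mathbb{T}:V(x)\leq l-\tfrac12 p^2\}\bigr|,\qquad g(x,p):=U^{-1}\chi(H(x,p)\leq l)-f(p),
\]
so that $f\geq 0$ has compact support in $p$ with $\int_{\R} f\,dp=1$ and $g$ has zero $x$-mean at each fixed $p$. Choose an $x$-antiderivative $h(x,p)=p^{-1}\int_0^x g(y,p)\,dy$, centered so $\int_{\mathbb{T}} h\,dx=0$; it solves $p\,\partial_x h=g$ and is bounded off the neighbourhood $\{|p|<\delta\}$. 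Applying Dynkin's formula to $h$ after cutting off $\{|p|<\delta\}$ expresses $\int_0^{t/\lambda} g(S_r)\,dr$ as boundary values of $h$, a martingale with predictable bracket controllable by the moment bounds of \cite{Previous}, a jump correction from $\mcJ_\lambda$, and a cutoff error bounded by the Lebesgue time $S_r$ spends in $\{|P|<\delta\}$ on $[0,t/\lambda]$---a time of order $\delta\lambda^{-1/2}$ by the a-priori estimates on $P$. After multiplication by $\lambda^{1/2}$ these terms are $O(\lambda^{1/2}/\delta)$, $O(\lambda^{1/2}/\delta)$ (in $L^2$), and $O(\delta)$ respectively; letting $\lambda\to 0$ and then $\delta\to 0$ yields
\[
\sup_{t\in[0,T]}\Bigl|\lambda^{1/2}L_{t/\lambda}-\lambda^{1/2}\int_0^{t/\lambda} f(P_r)\,dr\Bigr|\longrightarrow 0 \quad\text{in probability}.
\]

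Next I rescale time: with $q^\lambda_s:=\lambda^{1/2}P_{s/\lambda}$ and $\phi_\lambda(q):=\lambda^{-1/2}f(q/\lambda^{1/2})$,
\[
\lambda^{1/2}\int_0^{t/\lambda} f(P_r)\,dr=\int_0^t \phi_\lambda(q^\lambda_s)\,ds,
\]
where $\phi_\lambda\geq 0$, $\int_{\R}\phi_\lambda\,dq=1$, and $\mathrm{supp}(\phi_\lambda)\subset[-\sqrt{2l\lambda},\sqrt{2l\lambda}]$, so $\phi_\lambda$ is an approximate $\delta_0$. Via Skorokhod representation assume $q^\lambda\to\frak{p}$ uniformly on $[0,T]$ almost surely along a subsequence; squeezing $\phi_\lambda$ between constant multiples of indicators of shrinking intervals about $0$, and invoking the occupation density formula together with continuity of $a\mapsto\frak{l}_t^{(a)}$ for the one-dimensional diffusion $\frak{p}$, one obtains $\int_0^t \phi_\lambda(q^\lambda_s)\,ds\to\frak{l}_t$ uniformly in $t$. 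Tightness in the uniform topology of the increasing family $\lambda^{1/2}L_{\cdot/\lambda}$ follows from monotonicity, continuity of the limit $\frak{l}$, and a uniform bound on $\mathbb{E}[\lambda^{1/2}L_{T/\lambda}]$, which itself follows from an a-priori density estimate on the law of $P_r$ in the compact set $\{|p|\leq \sqrt{2l}\}$; this bound gives the first moment assertion, and convergence of expectations follows by uniform integrability, obtained from a matching second-moment estimate via the same averaging argument applied to a product of two indicators.

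The main obstacle is the averaging: the antiderivative $h\sim g/p$ is singular precisely in the $|p|\lesssim 1$ window that carries the bulk of the contribution to $L_{t/\lambda}$. Resolving this requires the $\delta$-cutoff together with quantitative control of both the occupation time of $\{|P|<\delta\}$ and the quadratic variation of the martingale remainder, both of which rely on the resolvent and moment estimates of \cite{Previous}. The cleanness of the final limit hinges on an optimal balance between the $\delta^{-1}$ blow-up of $h$ and the $\delta\lambda^{-1/2}$ upper bound on the time spent near $p=0$.
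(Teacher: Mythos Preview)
Your second step---passing from $\int_0^t \phi_\lambda(q^\lambda_s)\,ds$ to $\frak{l}_t$ via Skorokhod representation and the occupation density formula---has a genuine gap. The functional $y\mapsto \int_0^t \phi_\lambda(y_s)\,ds$ is not continuous in the uniform topology once the width of $\phi_\lambda$ shrinks with $\lambda$, so a.s.\ uniform convergence $q^\lambda\to\frak{p}$ is not enough. For your squeezing to work you would need $\sup_{s\le T}|q^\lambda_s-\frak{p}_s|=o(\lambda^{1/2})$, since the window carried by $\phi_\lambda$ has width of order $\lambda^{1/2}$; but Skorokhod supplies no rate, and the actual fluctuations of $q^\lambda$ about $\frak{p}$ are of order $\lambda^{1/4}$ (this is exactly the scale of $D_{t/\lambda}$ in Theorem~\ref{ThmMain}), far larger than the window. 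Continuity of $a\mapsto\frak{l}^{(a)}_t$ is a property of the \emph{limit} process $\frak{p}$ and says nothing about occupation times of the prelimit $q^\lambda$ at scale $\lambda^{1/2}$. There is also a secondary issue in your first step: Dynkin's formula for $h$ produces, besides the jump correction you mention, the Hamiltonian drift term $-\int_0^{t/\lambda}\frac{dV}{dx}(X_r)\,\partial_p h(S_r)\,dr$; since $\partial_p h=O(p^{-2})$ on your cutoff region, after the $\lambda^{1/2}$ scaling this contributes a quantity of the same order as $\lambda^{1/2}L_{t/\lambda}$ itself and does not vanish without a further corrector.

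The paper circumvents the discontinuity of local time by never taking a limit in the occupation functional directly. It works instead with the square-root-energy process $\mathbf{Q}_t=(2H_t)^{1/2}$, whose semimartingale decomposition $\mathbf{Q}=\mathbf{Q}_0+\mathbf{M}+\mathbf{A}^+-\mathbf{A}^-$ plays the role of a Tanaka--Meyer formula already at the prelimit level. A ratio-limit argument via Nummelin splitting (Lem.~\ref{LocalTimeBnd}) shows $\lambda^{1/2}L_{\cdot/\lambda}\approx\lambda^{1/2}\mathbf{A}^+_{\cdot/\lambda}$, and then $\mathbf{A}^+$ is rewritten as $\mathbf{Q}-\mathbf{Q}_0-\mathbf{M}+\mathbf{A}^-$. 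Each of these four pieces has a uniform-metric limit---$\mathbf{Q}$ and $\mathbf{A}^-$ by continuous mapping from $\lambda^{1/2}P_{\cdot/\lambda}\Rightarrow\frak{p}$, and $\mathbf{M}$ by a martingale argument---so the singular object $\frak{l}$ is obtained as a difference of regular limits rather than as a direct limit of shrinking occupation times.
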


We begin by making some remarks on the local time process $\frak{l}$.  Appendix~\ref{SecIdealProc} contains more information although without proofs.   
Define $\mathbf{\tilde{B}}_{t} =\int_{0}^{t}dr \textup{sgn}(\frak{p}_{r})d\mathbf{B}_{r}'$, where $\mathbf{B}'$  is the Brownian motion driving the Langevin equation~(\ref{TheLimit}) and $\textup{sgn}:\R\rightarrow \{\pm 1\}$ is the sign function.  The Tanaka-Meyer formula yields the local time at zero for $\frak{p}$ as
\begin{align}\label{ItoLocal}
\frak{l}_{t}= |\frak{p}_{t}|-|\frak{p}_{0}|-\mathbf{\tilde{B}}_{t}+\frac{1}{2}\int_{0}^{t}dr|\frak{p}_{r}|.
\end{align}
  The above relation follows from the formal definition $\frak{l}_{t}=\int_{0}^{t}dr \delta_{0}(\frak{p}_{r})$ and a formal application of the Ito formula for the function $|\cdot|$ of the process $\frak{p}$ that has differential $d\frak{p}_{t}=-\frac{1}{2}\frak{p}_{t}dt+d\mathbf{B}_{t}'$.  In~(\ref{ItoLocal}) $\frak{l}$ is the positive part of the drift for the diffusion process $\frak{p}$.

  Theorem~\ref{ThmLocalTime} states that a rescaling of the process $L_{t}$  converges in law to the local time $\frak{l}_{t}$.   Since $h(x,p)$  is compactly supported, it is not surprising that this quantity would be related to the local time when considered on  the appropriate scale: $\lambda^{\frac{1}{2}}L_{\frac{t}{\lambda}}$, $\lambda\ll 1$.    
The strategy in the proof resembles~\cite[Thm. 3.1]{Quantum} in which information related to the limiting behavior for the momentum process $P_{t}$ is found through a study of the semimartingale decomposition of the square root energy process $\mathbf{Q}_{t}:=(2H_{t})^{\frac{1}{2}}=\big(P_{t}^{2}+2V(X_{t})\big)^{\frac{1}{2}}$.   Since the potential $V(x)$ is bounded,  we have that  $\lambda^{\frac{1}{2}} |P_{\frac{t}{\lambda} }|\approx \lambda^{\frac{1}{2}}\mathbf{Q}_{\frac{t}{\lambda} }$.  The advantage of working with a function of the Hamiltonian is that there is no drift between collisions.  Let the processes $\mathbf{M}_{t}$, $\mathbf{A}_{t}^{+}$, and $-\mathbf{A}_{t}^{-}$ be respectively the martingale, predictable increasing, and predictable decreasing parts in the semimartingale decomposition of the process $\mathbf{Q}_{t}$.  The processes $\mathbf{A}_{t}^{\pm}$ and the predictable quadratic variation $\langle\mathbf{M}\rangle_{t}$ of the martingale $\mathbf{M}_{t}$ have the forms 
 \begin{align}\label{ZipZap}
  \mathbf{A}_{t}^{\pm}= \int_{0}^{t}dr \mathcal{A}_{\lambda}^{\pm}(S_{r})  \hspace{1cm} \text{and} \hspace{1cm}
  \langle \mathbf{M} \rangle_{t}= \int_{0}^{t}dr \mathcal{V}_{\lambda}(S_{r}),      
  \end{align}
where $\mathcal{A}_{\lambda}^{\pm}, \mathcal{V}_{\lambda}:\Sigma\rightarrow \R$ are defined below.

Also, since $L_{t}$ is difficult to work with directly, our strategy is to approximate it by $\mathbf{A}_{t}^{+}$.  Notice that we can rewrite the components in the semimartingale decomposition as
\begin{align}\label{Siren}
 \mathbf{A}_{t}^{+}=\mathbf{Q}_{t}-\mathbf{Q}_{0}-\mathbf{M}_{t}+\mathbf{A}_{t}^{-}.  
 \end{align}
in analogy with the Tanaka-Meyer formula~(\ref{ItoLocal}).  We approach the term $\lambda^{\frac{1}{2}}\mathbf{A}_{\frac{t}{\lambda}}^{+}$ through a study of the joint convergence of the terms
$$ \lambda^{\frac{1}{2}} \mathbf{Q}_{\frac{t}{\lambda}}\stackrel{\frak{L}}{\Longrightarrow} |\frak{p}_{t}|, \quad \, \lambda^{\frac{1}{2}}\mathbf{M}_{\frac{t}{\lambda} }   \stackrel{\frak{L}}{\Longrightarrow} \mathbf{\tilde{B}}_{t},\quad  \lambda^{\frac{1}{2}} \mathbf{A}_{\frac{t}{\lambda}}^{-}\stackrel{\frak{L}}{\Longrightarrow}\frac{1}{2}\int_{0}^{t}dr | \frak{p}_{r} |.  $$

Readers accustomed to the limit theory in Jacod and Shiryaev's book~\cite{Jacod} may find the appearance of the uniform  metric rather than the Skorokhod metric in the statement of Thm.~\ref{ThmLocalTime} unusual.  There is a result for the weak convergence of martingales with respect to the uniform metric in~\cite[Thm. VIII.2.13]{Pollard}.  This limit theorem for martingales has a role in proving the convergence in law of $\lambda^{\frac{1}{2}}P_{\frac{t}{\lambda}}$ with respect to the uniform metric of~\cite[Thm. 1.3]{Previous}.   Although $\lambda^{\frac{1}{2}}L_{\frac{t}{\lambda}}$ and equivalently  $\lambda^{\frac{1}{2}}\mathbf{A}_{\frac{t}{\lambda}}^{+}$ may seem to be converging to a ``singular" limit that would be awkward to treat with the uniform metric, the process $\lambda^{\frac{1}{2}}\mathbf{A}_{\frac{t}{\lambda}}^{+}$ can be viewed through~(\ref{Siren}) as a sum of terms that are more clearly treatable in the uniform metric. 

The next lemma gives a limiting procedure in which the trajectories for $\frak{l}$ and  $\mathbf{\tilde{B}}$ in the Tanaka-Meyer formula~(\ref{ItoLocal}) are determined by the trajectories for $|\frak{p}|$.

\begin{lemma}\label{LocalEstimate}
Let $\frak{p}_{t}$ be the Ornstein-Uhlenbeck process.  As $\epsilon\rightarrow 0$, the local time at zero $\frak{l}$ satisfies
$$\mathbb{E}\Big[\sup_{0\leq t
\leq T} \Big| \frak{l}_{t}-\frac{1}{2\epsilon}\int_{0}^{t}dr e^{-\frac{|\frak{p}_{r}  |}{\epsilon} }\Big|\Big]=\mathit{O}(\epsilon^{\frac{1}{2}}). $$
Also, the Brownian motion $\mathbf{\tilde{B}}_{t}$ in the Tanaka-Meyer formula~(\ref{ItoLocal}) satisfies
$$\mathbb{E}\Big[\sup_{0\leq t
\leq T} \Big| \mathbf{\tilde{B}}_{t}- |\frak{p}_{t}|+|\frak{p}_{0}|-\epsilon e^{-\frac{|\frak{p}_{t}| }{\epsilon} } - \frac{1}{2} \int_{0}^{t}dr |\frak{p}_{r}|\, \big(1-e^{-\frac{|\frak{p}_{r}| }{\epsilon} }\big)+\frac{1}{2\epsilon} \int_{0}^{t}dr e^{-\frac{|p_{r}|}{\epsilon} }\Big|\Big]=\mathit{O}(\epsilon^{\frac{1}{2}}). 
$$

\end{lemma}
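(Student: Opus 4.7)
The plan is to apply Itô's formula to a smooth, convex, even $C^{2}$ approximation of $|x|$ whose second derivative is precisely the delta approximation appearing in the statement. Specifically, set $f_{\epsilon}(x):=|x|+\epsilon\,e^{-|x|/\epsilon}$. One checks directly that
$$f_{\epsilon}'(x)=\textup{sgn}(x)\bigl(1-e^{-|x|/\epsilon}\bigr),\qquad f_{\epsilon}''(x)=\tfrac{1}{\epsilon}e^{-|x|/\epsilon},$$
so that $f_{\epsilon}\in C^{2}(\R)$. Applying Itô's formula to $f_{\epsilon}(\frak{p}_{t})$ with $d\frak{p}_{t}=-\tfrac{1}{2}\frak{p}_{t}\,dt+d\mathbf{B}_{t}'$ (and $\langle\frak{p}\rangle_{t}=t$), and using that $\textup{sgn}(\frak{p}_{r})d\mathbf{B}_{r}'=d\mathbf{\tilde{B}}_{r}$, yields
$$f_{\epsilon}(\frak{p}_{t})-f_{\epsilon}(\frak{p}_{0})=-\tfrac{1}{2}\!\int_{0}^{t}\!|\frak{p}_{r}|\bigl(1-e^{-|\frak{p}_{r}|/\epsilon}\bigr)dr+\!\int_{0}^{t}\!\bigl(1-e^{-|\frak{p}_{r}|/\epsilon}\bigr)d\mathbf{\tilde{B}}_{r}+\frac{1}{2\epsilon}\!\int_{0}^{t}\!e^{-|\frak{p}_{r}|/\epsilon}dr.$$

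Next, I would subtract this identity from the Tanaka-Meyer formula~(\ref{ItoLocal}) for $\frak{l}_{t}$. After cancellation, the difference collapses to
$$\frak{l}_{t}-\frac{1}{2\epsilon}\int_{0}^{t}e^{-|\frak{p}_{r}|/\epsilon}\,dr=\epsilon\bigl(e^{-|\frak{p}_{0}|/\epsilon}-e^{-|\frak{p}_{t}|/\epsilon}\bigr)+\frac{1}{2}\int_{0}^{t}|\frak{p}_{r}|\,e^{-|\frak{p}_{r}|/\epsilon}\,dr-\int_{0}^{t}e^{-|\frak{p}_{r}|/\epsilon}\,d\mathbf{\tilde{B}}_{r}.$$
The first term is bounded uniformly in $t$ by $2\epsilon$. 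For the second, using $\int_{\R}|x|e^{-|x|/\epsilon}dx=2\epsilon^{2}$, its expected supremum is at most $\epsilon^{2}\int_{0}^{T}\|p_{r}\|_{\infty}\,dr$, where $p_{r}$ is the density of $\frak{p}_{r}$. For the stochastic integral, Doob's $L^{2}$ inequality and the Itô isometry give
$$\mathbb{E}\Big[\sup_{t\leq T}\Bigl|\int_{0}^{t}e^{-|\frak{p}_{r}|/\epsilon}\,d\mathbf{\tilde{B}}_{r}\Bigr|^{2}\Big]\leq 4\int_{0}^{T}\mathbb{E}\bigl[e^{-2|\frak{p}_{r}|/\epsilon}\bigr]\,dr\leq 2\epsilon\int_{0}^{T}\|p_{r}\|_{\infty}\,dr,$$
where I used $\int_{\R}e^{-2|x|/\epsilon}\,dx=\epsilon$. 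A Cauchy-Schwarz step then gives $O(\epsilon^{1/2})$ for the expected supremum, which dominates the other two contributions.

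The one non-routine ingredient is the integrability $\int_{0}^{T}\|p_{r}\|_{\infty}\,dr<\infty$. This follows from the fact that $\frak{p}_{r}$ is Gaussian with variance $(1-e^{-r})$ (given $\frak{p}_{0}=0$, and by convolution more generally), so $\|p_{r}\|_{\infty}=(2\pi(1-e^{-r}))^{-1/2}=O(r^{-1/2})$ near $r=0$, which is integrable on $[0,T]$; any other initial distribution with finite moments is handled the same way via the OU semigroup. This is the only step that might require attention, and it is essentially immediate.

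The second estimate follows from the same Itô identity by rearranging to isolate $\mathbf{\tilde{B}}_{t}$ rather than $\frak{l}_{t}$. Writing $\int_{0}^{t}(1-e^{-|\frak{p}_{r}|/\epsilon})d\mathbf{\tilde{B}}_{r}=\mathbf{\tilde{B}}_{t}-\int_{0}^{t}e^{-|\frak{p}_{r}|/\epsilon}d\mathbf{\tilde{B}}_{r}$ and solving for $\mathbf{\tilde{B}}_{t}$ produces exactly the expression inside the absolute value in the second claim, with remainder $-\epsilon e^{-|\frak{p}_{0}|/\epsilon}+\int_{0}^{t}e^{-|\frak{p}_{r}|/\epsilon}d\mathbf{\tilde{B}}_{r}$. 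The boundary term is $O(\epsilon)$ and the martingale is $O(\epsilon^{1/2})$ in expected supremum by the same BDG/density estimate as above, completing the proof.
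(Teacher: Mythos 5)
Your proof is correct and follows essentially the same route as the paper: both rest on the observation that $f_\epsilon(x)=|x|+\epsilon e^{-|x|/\epsilon}$ is a $C^2$ approximation of $|x|$ whose second derivative $\frac{1}{\epsilon}e^{-|x|/\epsilon}$ produces the delta-approximation, combined with the Tanaka--Meyer formula and a Doob/It\^o-isometry bound on the martingale remainder $\int_0^t e^{-|\frak{p}_r|/\epsilon}\,d\mathbf{\tilde{B}}_r$, estimated via the explicit Gaussian density $\omega_r^{-1/2}(2\pi)^{-1/2}e^{-q^2/(2\omega_r)}$ for $\frak{p}_r$. The only presentational difference is that the paper introduces the martingale $\frak{m}_{t,\epsilon}=\int_0^t(1-e^{-|\frak{p}_r|/\epsilon})\,d\mathbf{\tilde{B}}_r$ first and then re-expresses it via the It\^o chain rule, whereas you apply It\^o to $f_\epsilon(\frak{p}_t)$ directly and subtract Tanaka--Meyer --- the same calculation written in a slightly different order.
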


Before proceeding to the proof of Thm.~\ref{ThmLocalTime}, we must recall some notation from~\cite{Previous}. For $n\in \R^{+}$ define the functions $\mathcal{A}_{\lambda},\mathcal{V}_{\lambda,n}:\mathbb{T}\times \R\rightarrow \R$ as 
\begin{align*}
 \mathcal{A}_{\lambda}(x,p):=&\int_{\R}dp^{\prime} \big( 2^{\frac{1}{2}}H(x,p^{\prime})^{\frac{1}{2}}- 2^{\frac{1}{2}}H(x,p)^{\frac{1}{2}}       \big)\mathcal{J}_{\lambda}(p,p^{\prime}),\\  
\mathcal{V}_{\lambda,n}(x,p):=&
\int_{\R}dp^{\prime} \big| 2^{\frac{1}{2}}H(x,p^{\prime})^{\frac{1}{2}}- 2^{\frac{1}{2}}H(x,p)^{\frac{1}{2}}\big|^{2n}      \mathcal{J}_{\lambda}(p,p^{\prime}).
\end{align*}
The function $\mathcal{V}_{\lambda}:=\mathcal{V}_{\lambda,1}$ is related to the predictable quadratic variation of the martingale $\mathbf{M}$ through~(\ref{ZipZap}).  We also denote the escape rates by $\mathcal{E}_{\lambda}(p):=\int_{\R}dp'\mathcal{J}_{\lambda}(p,p')$.  We define $\mathcal{A}_{\lambda}^{\pm}(s)= \max(\pm \mathcal{A}_{\lambda}(s),0)$ to be the positive and negative parts of  $\mathcal{A}_{\lambda}$. Proposition~\ref{AMinus}  contains some useful inequalities regarding the functions $\mathcal{A}^{\pm}_{\lambda}$, $\mathcal{V}_{\lambda,n}$, and we do not include the  proof, which is based on elementary inequalities and calculus.

\begin{proposition}\label{AMinus} There are constants $c,C, C_n>0$ such that for $\lambda<1$  the following inequalities hold:  
\begin{enumerate}

\item For all $(x,p)\in \Sigma$,  $\mathcal{V}_{\lambda,n}(x,p)\leq C_n(1+\lambda |p|)^{2n+1}$.

\item For all $(x,p)\in \Sigma$, $\mathcal{A}_\lambda^+(x,p)\leq \frac{C}{1+p^2}$.

\item  For $\lambda^{-\frac{3}{8}}\leq |p|\leq \lambda^{-\frac{3}{4} }$,
$$ \Big|\mathcal{A}_{\lambda}^{-}(x,p)-\frac{1}{2}\lambda |p| \Big| \leq C\lambda^{\frac{5}{4} }|p| \quad \text{and}\quad  \Big| \mathcal{V}_{\lambda}(x,p)-1       \Big|\leq C\lambda^{\frac{1}{2}}   .  $$

\item  For $|p|\leq \lambda^{-1}$,  $\mathcal{A}_\lambda^-(x,p)\leq C\lambda|p|$.

\item For all $(x,p)\in \Sigma$, $\left|\frac{\mathcal{A}_\lambda(x,p)}{\mathcal{E}_\lambda(p)} + \frac{2\lambda|p|}{1+\lambda}\right| \leq C$.

\item For all $p\in \R$, $ \mathcal{E}_\lambda(p)\leq \frac{1}{8(\lambda +1)}\left(1+C\lambda |p|\right)$ and $\lambda |p| \leq C\mathcal{E}_\lambda(p)$. 

\item  As $\lambda\rightarrow 0$, we have $\int_{\Sigma}ds  \mathcal{A}^{+}_{\lambda}(s)=1+\mathit{O}(\lambda^{\frac{1}{2}})$.

\end{enumerate}
\end{proposition}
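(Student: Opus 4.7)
The backbone is a single change of variable $u = \tfrac{1-\lambda}{2}p - \tfrac{1+\lambda}{2}p'$, under which $p-p' = \tfrac{2(\lambda p + u)}{1+\lambda}$, $dp' = \tfrac{2}{1+\lambda}du$, and
\begin{equation*}
\mathcal{J}_\lambda(p,p')\,dp' = \frac{|\lambda p + u|}{16(1+\lambda)}\,e^{-u^2/2}\,du.
\end{equation*}
Each of $\mathcal{E}_\lambda$, $\mathcal{A}_\lambda$, and $\mathcal{V}_{\lambda,n}$ thereby becomes a Gaussian moment in $u$ weighted by a polynomial in $\lambda p + u$, reducing every claim to a standard Gaussian calculation. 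Item (6) is immediate from the resulting closed form $\mathcal{E}_\lambda(p)=\tfrac{1}{16(1+\lambda)}\int|\lambda p+u|e^{-u^2/2}du$ combined with the crude bound $\int|\lambda p+u|e^{-u^2/2}du\leq |\lambda p|\sqrt{2\pi}+2$ (and a matching lower bound for large $|\lambda p|$). Item (1) follows once one invokes the elementary Lipschitz bound $|\sqrt{2H(x,p')}-\sqrt{2H(x,p)}|\leq|p'-p|$ (obtained from $\sqrt{a^2+c}-\sqrt{b^2+c}=(a^2-b^2)/(\sqrt{a^2+c}+\sqrt{b^2+c})$ with $c\geq 0$), which turns $\mathcal{V}_{\lambda,n}$ into $C_n\int|\lambda p+u|^{2n+1}e^{-u^2/2}du\leq C_n'(1+\lambda|p|)^{2n+1}$.

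For the drift-type bounds (3), (4), (5), I would Taylor-expand $\phi(q):=\sqrt{q^2+2V(x)}$ about $p$,
\begin{equation*}
\phi(p')-\phi(p)=\phi'(p)(p'-p)+\tfrac{1}{2}\phi''(\xi)(p'-p)^2,
\end{equation*}
with $\phi'(q)=q/\sqrt{q^2+2V(x)}$ and $\phi''(q)=2V(x)/(q^2+2V(x))^{3/2}$. The linear part of $\mathcal{A}_\lambda$ collapses under the substitution to $-\tfrac{\phi'(p)}{8(1+\lambda)^2}\int(\lambda p + u)|\lambda p + u|e^{-u^2/2}du$; expanding the integrand about $\lambda p = 0$ gives $-\tfrac{\phi'(p)\lambda p}{2(1+\lambda)^2}+O((\lambda p)^3)$. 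In the range $\lambda^{-3/8}\leq|p|\leq\lambda^{-3/4}$ we have $\lambda|p|=O(\lambda^{1/4})$ (justifying the expansion) and $\phi'(p)=\textup{sgn}(p)+O(p^{-2})=\textup{sgn}(p)+O(\lambda^{3/4})$, which delivers the leading $-\tfrac{1}{2}\lambda|p|$ and the $O(\lambda^{5/4}|p|)$ error in (3); the variance estimate follows from the same substitution applied to $\mathcal{V}_\lambda$. Item (4) is the same computation tracked in absolute value across $|p|\leq\lambda^{-1}$, and (5) follows from forming the explicit ratio with $\mathcal{E}_\lambda$.

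The main obstacle is item (2), $\mathcal{A}^+_\lambda(x,p)\leq C/(1+p^2)$. Since (4) renders $\mathcal{A}_\lambda$ strictly negative once $|p|$ exceeds an $O(1)$ threshold, $\mathcal{A}^+_\lambda$ vanishes there, and the remaining regime requires genuine cancellation. The clean tool is the symmetry of the $\lambda=0$ kernel $\mathcal{J}_0(p,p+q)=j(q):=\tfrac{|q|}{64}e^{-q^2/8}$; being even in $q$, it yields
\begin{equation*}
\mathcal{A}_0(x,p)=\tfrac{1}{2}\int\bigl(\phi(p+q)+\phi(p-q)-2\phi(p)\bigr)j(q)\,dq\geq 0,
\end{equation*}
which is controlled by $\sup_{|\xi-p|\lesssim 1}\phi''(\xi)\cdot\int q^2 j(q)\,dq\leq C/(1+p^2)$ when $V(x)>0$, and by the explicit form $\int_{|q|>|p|}(|q|-|p|)j(q)\,dq$ (exponentially small in $|p|$) at zeros of $V$. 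The $\lambda$-dependent correction $\mathcal{A}_\lambda-\mathcal{A}_0$ is handled by the same $u$-substitution applied to $\mathcal{J}_\lambda-\mathcal{J}_0$. Finally, item (7) follows from the detailed-balance identity $\mathcal{J}_\lambda(p,p')e^{-\lambda p^2/2}=\mathcal{J}_\lambda(p',p)e^{-\lambda(p')^2/2}$ (a short direct check from the exponent in the explicit form of $\mathcal{J}_\lambda$), which forces $\int\mathcal{A}_\lambda(x,p)e^{-\lambda p^2/2}dp=0$ for each $x$ and hence $\int\mathcal{A}^+_\lambda e^{-\lambda H}\,ds=\int\mathcal{A}^-_\lambda e^{-\lambda H}\,ds$; plugging in the approximation $\mathcal{A}_\lambda^-\approx\tfrac{1}{2}\lambda|p|$ from (3) on its dominant range $|p|\sim\lambda^{-1/2}$ produces $1+O(\lambda^{1/2})$, after which (2) and a tail estimate allow the $e^{-\lambda H}$ weight to be removed on the left at the same order.
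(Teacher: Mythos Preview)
The paper does not actually prove this proposition; it states only that ``we do not include the proof, which is based on elementary inequalities and calculus.'' Your proposal is precisely that: a change of variable reducing everything to Gaussian moments, a Taylor expansion of $\phi(q)=\sqrt{q^{2}+2V(x)}$, and detailed balance for item~(7). This is the natural route and is consistent with the paper's description.

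One small point of imprecision: you write that ``(4) renders $\mathcal{A}_\lambda$ strictly negative once $|p|$ exceeds an $O(1)$ threshold,'' but item~(4) is only an upper bound on $\mathcal{A}_\lambda^{-}$ and says nothing about the sign of $\mathcal{A}_\lambda$. What actually makes your argument for~(2) work is not a sign observation but the decomposition you already have: the linear Taylor term is nonpositive (since $\phi'(p)$ and $\int(\lambda p+u)|\lambda p+u|e^{-u^{2}/2}du$ share the sign of $p$), so $\mathcal{A}_\lambda^{+}\leq[\text{second-order remainder}]$, and that remainder is controlled by $\phi''$ and the Gaussian tail. The intermediate range $1\lesssim|p|\lesssim\lambda^{-1/3}$, where the positive second-order term can exceed the negative linear term, is covered by this bound directly. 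You should also note (as you partially do) that when $V(x)=0$ the second-derivative argument degenerates and one must instead use $\phi(q)=|q|$, for which $\mathcal{A}_0(x,p)$ has the explicit exponentially decaying form you mention.
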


Lemmas~\ref{FirstEnergyLem} and~\ref{LowEnergyLemma} below  characterize the typical energy behavior over the time interval $[0,\frac{T}{\lambda}]$ for $\lambda\ll 1$.  In particular, Lem.~\ref{FirstEnergyLem} states that the energy $H(X_{t},P_{t})=\frac{1}{2}P_{t}^{2}+V(X_{t}):=H_{t}$ does not typically go above the scale $\lambda^{-1}$, and Lem.~\ref{LowEnergyLemma} states that the energy typically does not spend much time smaller than $\lambda^{-\varrho}$ for any $0\leq \varrho<1$.  The proof for Lem.~\ref{LowEnergyLemma} is contained in Sect.~\ref{SecMiscProofs} and  Lem.~\ref{FirstEnergyLem} is from~\cite[Lem. 3.2]{Previous}.

\begin{lemma}\label{FirstEnergyLem}   For any $n\in \mathbb{N}$, there exists a $C>0$ such that
$$  \mathbb{E}^{(\lambda)}\Big[\sup_{0\leq r\leq \frac{T}{\lambda} } (H_{r})^{\frac{n}{2}} \Big] \leq  C\Big(\frac{ T}{\lambda}\Big)^{\frac{n}{2}}    $$
for all $T>0$ and $\lambda<1$.  

\end{lemma}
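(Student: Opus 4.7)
I would set $Q_t := \sqrt{2 H_t}$, so that, up to a harmless factor $2^{n/2}$, it suffices to bound $\mathbb{E}^{(\lambda)}\bigl[\sup_{0\le r\le T/\lambda} Q_r^n\bigr]$. Because the Hamiltonian flow $(\dot x, \dot p) = (p, -V'(x))$ preserves $H$, the backward generator $\mathcal{L}_\lambda$ applied to any function of $H$ alone reduces to the jump term
\[
  (\mathcal{L}_\lambda Q^n)(x,p) = \int_{\R} dp'\,\bigl[Q(x,p')^n - Q(x,p)^n\bigr]\,\mcJ_\lambda(p, p'),
\]
and Dynkin's formula yields the semimartingale decomposition $Q_t^n = Q_0^n + \tilde M_t^{(n)} + \int_0^t (\mathcal{L}_\lambda Q^n)(S_r)\, dr$, where $\tilde M^{(n)}$ is a local martingale.

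The heart of the argument is a Lyapunov-style drift bound
\[
  (\mathcal{L}_\lambda Q^n)(x,p) \le C_n\bigl(1 + Q(x,p)^{n-2}\bigr), \qquad \lambda < 1,
\]
with constant independent of $\lambda$. I would derive it by expanding $(Q')^n - Q^n = n Q^{n-1}(Q' - Q) + R_n$ with $|R_n| \le C_n (Q' - Q)^2 \bigl(Q^{n-2} + |Q' - Q|^{n-2}\bigr)$ and integrating against $\mcJ_\lambda$. The first-order piece becomes $n Q^{n-1} \calA_\lambda$: by Proposition~\ref{AMinus}(2) it is at worst $C_n Q^{n-1}/(1 + p^2) \le C_n Q^{n-2}$ in the low-$|p|$ regime, and by Proposition~\ref{AMinus}(5) it is strictly negative (frictional) once $|p|$ is large. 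The remainder contributes $C_n Q^{n-2}\mathcal{V}_\lambda + C_n\mathcal{V}_{\lambda,n/2}$, controlled by Proposition~\ref{AMinus}(1).

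Given the drift bound, induction on $n$ in steps of two (with base cases $n=0,1$ trivial and $n=2$ reading off $\mathcal{L}_\lambda Q^2 \le C$) combined with Dynkin yields $\mathbb{E}^{(\lambda)}[Q_t^n] \le C_n(1 + t^{n/2})$. For the supremum, the martingale $\tilde M^{(n)}$ has predictable quadratic variation of the form $\int_0^t \mathcal{V}^{*}_{\lambda,n}(S_r)\,dr$ whose integrand is bounded using Proposition~\ref{AMinus}(1) and the moment estimate just obtained; Doob's $L^2$ inequality then gives $\mathbb{E}[\sup_{r\le t}(\tilde M^{(n)}_r)^2] \le C_n(1+t^n)$. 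The drift integral is of bounded variation and controlled by the same induction. Evaluating at $t = T/\lambda$ produces the claimed $(T/\lambda)^{n/2}$ scaling.

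The main obstacle is achieving the Lyapunov drift estimate with a constant independent of $\lambda$. The bound $\mathcal{V}_{\lambda,n/2}(x,p) \le C(1+\lambda|p|)^{n+1}$ from Proposition~\ref{AMinus}(1) degrades badly when $\lambda|p|$ is large and must be absorbed by the negative contribution $n Q^{n-1}\calA_\lambda$, which by Proposition~\ref{AMinus}(5) behaves like $-c\,n\,\lambda Q^n$ in the same regime. Carefully matching these two pieces across the zones $|p|\lesssim 1$, $1\ll |p| \ll \lambda^{-1}$, and $|p|\gtrsim \lambda^{-1}$ is the technical heart of the argument; if a fully uniform bound proves too optimistic, a weaker version retaining a $-c\lambda Q^n$ term on the right-hand side still closes the induction by a Gronwall argument on the time scale $t = T/\lambda$.
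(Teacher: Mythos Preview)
The paper does not actually prove this lemma: immediately before the statement it says ``Lem.~\ref{FirstEnergyLem} is from~\cite[Lem.~3.2]{Previous},'' and no argument is given here. So there is no proof in this paper to compare against; the result is imported wholesale from the companion article.

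That said, your sketch is a reasonable outline of how such an energy estimate is typically obtained and is likely close in spirit to what appears in~\cite{Previous}: semimartingale decomposition of a power of $\mathbf{Q}_t=(2H_t)^{1/2}$, a Lyapunov-type drift bound on $\mathcal{L}_\lambda \mathbf{Q}^n$ uniform in $\lambda$, induction on $n$, and Doob's inequality for the supremum. You have also correctly identified the only genuinely delicate point, namely that Proposition~\ref{AMinus}(1) gives $\mathcal{V}_{\lambda,n/2}\le C(1+\lambda|p|)^{n+1}$, which blows up for $\lambda|p|\gg 1$ and must be compensated by the frictional drift $\sim -\lambda \mathbf{Q}^n$ coming from Proposition~\ref{AMinus}(5). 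Your fallback of retaining a $-c\lambda \mathbf{Q}^n$ term and closing via Gronwall on the interval $[0,T/\lambda]$ is exactly the right mechanism: the exponential factor $e^{c\lambda \cdot T/\lambda}=e^{cT}$ is harmless, and this is what makes the bound uniform in $\lambda$.
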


\begin{lemma}\label{LowEnergyLemma}
Define $\mathbf{T}_{t}=\lambda \int_{0}^{t}dr \chi(H_{r}\leq \epsilon \lambda^{-\varrho } )$ for  $0\leq   \varrho\leq 1$.  For any fixed $T>0$, there is a $C>0$ such that for small enough    $\lambda$ and all $ \epsilon \in [ \lambda^{\varrho},1]$,
$$ \mathbb{E}^{(\lambda)}\big[\mathbf{T}_{\frac{T}{\lambda}} \big] \leq C \epsilon^{\frac{1}{2}} \lambda^{\frac{1-\varrho}{2}}.       $$
\end{lemma}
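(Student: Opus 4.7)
The plan is to combine an exponential Chebyshev bound with a uniform-in-time estimate on the exponential moments $\mathbb{E}^{(\lambda)}[e^{-\alpha H_{r}}]$. The heuristic is that under the invariant Maxwell-Boltzmann density $\Psi_{\infty,\lambda}\propto e^{-\lambda H}$ the low-energy region $\{H\leq \Lambda\}$ with $\Lambda:=\epsilon\lambda^{-\varrho}$ carries mass of order $\sqrt{\lambda\Lambda}=\sqrt{\epsilon\lambda^{1-\varrho}}$, because the normalization $N(\lambda)\asymp \lambda^{-1/2}$ coming from the Gaussian momentum tail dominates the phase-space volume $\asymp\sqrt{\Lambda}$ of the region. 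The proof should promote this equilibrium heuristic to a dynamic estimate.

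First I would apply the pointwise bound $\chi(H_{r}\leq\Lambda)\leq e^{\alpha\Lambda}e^{-\alpha H_{r}}$, valid for every $\alpha>0$, together with Fubini to obtain
$$\mathbb{E}^{(\lambda)}\big[\mathbf{T}_{T/\lambda}\big]\leq \lambda\, e^{\alpha\Lambda}\int_{0}^{T/\lambda}\mathbb{E}^{(\lambda)}\big[e^{-\alpha H_{r}}\big]\,dr.$$
Granted the uniform bound $\mathbb{E}^{(\lambda)}[e^{-\alpha H_{r}}]\leq C\sqrt{\lambda/(\lambda+\alpha)}$ for $\alpha\in[\lambda,1]$, the optimal choice is $\alpha=\lambda^{\varrho}/\epsilon$, which lies in $[\lambda,1]$ under the hypothesis $\epsilon\in[\lambda^{\varrho},1]$ and makes $e^{\alpha\Lambda}=e$. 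The factor $\sqrt{\lambda/(\lambda+\alpha)}$ is then at most $\sqrt{\epsilon\lambda^{1-\varrho}}$, and substituting yields the asserted bound.

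To establish the exponential moment estimate I would analyze the semimartingale $F_{\alpha}(S_{t}):=e^{-\alpha H_{t}}$ through its infinitesimal generator. The Hamiltonian flow preserves $H$, so only the collision part contributes,
$$(\mathcal{L}_{\lambda}F_{\alpha})(x,p)=e^{-\alpha V(x)}\int_{\R}dp'\,\big(e^{-\alpha (p')^{2}/2}-e^{-\alpha p^{2}/2}\big)\mathcal{J}_{\lambda}(p,p').$$
A direct check of~(\ref{JumpRates}) confirms detailed balance, $e^{-\lambda p^{2}/2}\mathcal{J}_{\lambda}(p,p')=e^{-\lambda (p')^{2}/2}\mathcal{J}_{\lambda}(p',p)$. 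Exploiting this symmetry, together with explicit Gaussian integrals and the escape-rate bounds from Prop.~\ref{AMinus}, one should obtain a Lyapunov inequality $(\mathcal{L}_{\lambda}F_{\alpha})(s)\leq -c_{\alpha}F_{\alpha}(s)+d_{\alpha}$ with rate $c_{\alpha}\asymp\lambda+\alpha$ and forcing $d_{\alpha}\asymp\sqrt{\lambda(\lambda+\alpha)}$. Dynkin's formula and Gronwall then yield $\mathbb{E}^{(\lambda)}[e^{-\alpha H_{r}}]\leq e^{-c_{\alpha}r}+C\sqrt{\lambda/(\lambda+\alpha)}$; the transient first term integrates to $1/c_{\alpha}\lesssim 1/(\lambda+\alpha)$ over $[0,T/\lambda]$, which after multiplication by $\lambda e^{\alpha\Lambda}$ is dominated by the main contribution $\sqrt{\epsilon\lambda^{1-\varrho}}$.

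The hard part will be the Lyapunov inequality itself: extracting the sharp square-root dependence on $\lambda+\alpha$ in $d_{\alpha}$ requires careful use of detailed balance, since a cruder bound would degrade too quickly as $\alpha\to 0$ and miss the regime $\alpha\asymp\lambda$ relevant when $\varrho$ is close to $1$. As a fallback, one could instead appeal directly to the convergence $\lambda^{1/2}P_{\cdot/\lambda}\Rightarrow\mfp$ of~\cite{Previous} together with standard Ornstein-Uhlenbeck occupation-time bounds near zero, but this route would need additional uniform integrability and would be more sensitive to the tail structure of the initial distribution $\mu$.
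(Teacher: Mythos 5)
The outer scaffolding of your argument is fine: the bound $\chi(H_{r}\leq\Lambda)\leq e^{\alpha\Lambda}e^{-\alpha H_{r}}$, the choice $\alpha=\lambda^{\varrho}/\epsilon\in[\lambda^{\varrho},1]$ making $e^{\alpha\Lambda}=e$, and the arithmetic showing that a bound $\mathbb{E}^{(\lambda)}[e^{-\alpha H_{r}}]\leq e^{-c_{\alpha}r}+C\sqrt{\lambda/(\lambda+\alpha)}$ with $c_{\alpha}\asymp\lambda+\alpha$ would give the claimed order after integrating $\lambda\,dr$ over $[0,T/\lambda]$. The genuine gap is that the Lyapunov inequality $(\mathcal{L}_{\lambda}e^{-\alpha H})\leq -c_{\alpha}e^{-\alpha H}+d_{\alpha}$ with $d_{\alpha}\asymp\sqrt{\lambda(\lambda+\alpha)}$ is \emph{not achievable} once $\alpha\gg\lambda$, which is exactly the regime you need whenever $\epsilon$ is near its lower end $\lambda^{\varrho}$ with $\varrho<1$. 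Expanding the collision operator on $\phi(p)=e^{-\alpha p^{2}/2}$ in the range $\lambda|p|\ll 1$ gives, with $a=\frac{1+\lambda}{2}$, $b=\frac{1-\lambda}{2}$, $\beta=\frac{a^{2}}{\alpha+a^{2}}$, $\gamma=\frac{b^{2}}{\alpha+a^{2}}$, and $x=\frac{\alpha}{2}p^{2}$,
$$(\mathcal{L}_{\lambda}e^{-\alpha H})(x,p)\approx\frac{e^{-\alpha V(x)}}{16a}\bigl(\beta e^{-\gamma x}-e^{-x}\bigr),$$
which turns positive at $x_{0}=\frac{\log(1/\beta)}{1-\gamma}$ and attains its maximum $e^{-x^{*}}\frac{1-\gamma}{\gamma}\asymp\alpha+\lambda$ at $x^{*}=\frac{\log(1/(\gamma\beta))}{1-\gamma}=O(1)$; there $e^{-\alpha H}\asymp e^{-x^{*}}$ is itself of order one. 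Imposing the inequality at $x^{*}$ therefore forces $d_{\alpha}\gtrsim\alpha$, while imposing it at $p=0$ (where $(\mathcal{L}_{\lambda}e^{-\alpha H})(x,0)\approx-\alpha/2$) forces $c_{\alpha}\lesssim\alpha$, so that $d_{\alpha}/c_{\alpha}$ is bounded below by a constant uniformly as $\lambda\to 0$ at fixed $\alpha$. Gronwall then yields only $\mathbb{E}^{(\lambda)}[e^{-\alpha H_{r}}]\lesssim 1$, which, after multiplying by $\lambda e$ and integrating in $r$, gives $O(T)$ instead of $O(\epsilon^{1/2}\lambda^{(1-\varrho)/2})$.

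The obstruction is structural: over the transient window $\alpha^{-1}\ll r\ll\lambda^{-1}$, the true behavior of $\mathbb{E}^{(\lambda)}[e^{-\alpha H_{r}}]$ is a power-law decay $\asymp(\alpha r)^{-1/2}$ coming from the diffusive spreading of $\mathbf{Q}_{r}=\sqrt{2H_{r}}$, and it reaches the equilibrium level $\asymp\sqrt{\lambda/(\lambda+\alpha)}$ only for $r\gtrsim\lambda^{-1}$. A time-independent Lyapunov function combined with Gronwall can only produce ``exponential decay plus a constant'' and cannot reproduce this slow intermediate decay, yet the integrability of $(\alpha r)^{-1/2}$ against $\lambda\,dr$ is precisely what makes the estimate close. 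The paper bypasses this by a pathwise occupation-time argument: it shows $\kappa_{t}=H_{t}^{1/2}+c\lambda H_{t}^{3/2}$ is a submartingale at energies $\lesssim\lambda^{-2}$ (using the drift lower bound and Part~(4) of Prop.~\ref{AMinus}), invokes the up-crossing inequality to bound the number of excursions into $\{H\leq\epsilon\lambda^{-\varrho}\}$ by $O(\epsilon^{-1/2}\lambda^{(\varrho-1)/2})$, and bounds the mean duration of each excursion by $O(\epsilon\lambda^{-\varrho})$ via optional sampling of $H_{t}$ together with the overshoot control of Lem.~\ref{ReStuff}; the product, times $\lambda$, gives the result. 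Your fallback via the weak convergence $\lambda^{1/2}P_{\cdot/\lambda}\Rightarrow\mathfrak{p}$ would need an additional uniform-integrability input for occupation-time functionals (which are not continuous in the Skorokhod topology), and supplying that input is essentially as hard as the lemma itself.
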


The following lemma is reminiscent of ratio limit theorems for additive functionals of null-recurrent Markov processes since $L_{t}$ and $\mathbf{A}_{t}^{+}$ are time integrals of $S_{r}$ evaluating the velocity functions $U^{-1}\chi(H(s)\leq l   )$ and $\mathcal{A}^{+}(s)$, respectively.  To support this intuition, recall that the invariant measure for the Markov process $S_{t}$ ``approaches" Lebesgue measure on $\Sigma$ for small $\lambda\in \R^{+}$ and observe that $$U^{-1}\int_{\Sigma}ds\chi(H(s)\leq l   )=1= \int_{\Sigma}ds\mathcal{A}_{\lambda}^{+}(s)+\mathit{O}(\lambda^{\frac{1}{2}}),$$ where the second equality is by Part (7) of Prop.~\ref{AMinus}.  The proof of Lem.~\ref{LocalTimeBnd} is placed in Sect.~\ref{SecMiscProofs}.

\begin{lemma}\label{LocalTimeBnd} As $\lambda\rightarrow 0$,
$$ \mathbb{E}^{(\lambda)}\Big[\sup_{0\leq t\leq T}\Big|\lambda^{\frac{1}{2}} L_{\frac{t}{\lambda}}    -\lambda^{\frac{1}{2} }\mathbf{A}_{\frac{t}{\lambda}}^{+} \Big|      \Big]=\mathit{O}(\lambda^{\frac{1}{4}}).     $$
Moreover, there is a $C>0$ such that for $\lambda<1$, 
$$\mathbb{E}^{(\lambda)}\big[ \lambda^{\frac{1}{2}} L_{\frac{T}{\lambda}}\big] \leq C .$$
\end{lemma}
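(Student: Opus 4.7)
My plan is to handle the two assertions separately.  The moment bound is an immediate consequence of Lem.~\ref{LowEnergyLemma}, while the supremum bound on the difference will follow from a pre-limit Tanaka--Meyer argument built on the semimartingale decomposition~(\ref{Siren}).

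For the moment bound, I note that $l=1+2\sup_{x}V(x)$ is a fixed constant.  Choosing any $\varrho \in (0,1)$ and setting $\epsilon := l\lambda^{\varrho}$, so that $\epsilon\lambda^{-\varrho}=l$ and $\epsilon \in [\lambda^{\varrho},1]$ for all sufficiently small $\lambda$, Lem.~\ref{LowEnergyLemma} yields
$$\mathbb{E}^{(\lambda)}\Big[\lambda \int_{0}^{T/\lambda}\chi(H_{r}\le l)\,dr\Big] \le C(l\lambda^{\varrho})^{1/2}\lambda^{(1-\varrho)/2} = Cl^{1/2}\lambda^{1/2}.$$
Dividing by $\lambda^{1/2}$ and $U$ gives $\mathbb{E}^{(\lambda)}\big[\lambda^{1/2}L_{T/\lambda}\big] \le U^{-1}Cl^{1/2}$, proving the moment estimate uniformly in $\lambda < 1$.

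For the difference bound, I exploit the identity $\mathcal{L}_{\lambda}\mathbf{Q}=\mathcal{A}_{\lambda}^{+}-\mathcal{A}_{\lambda}^{-}$ underlying~(\ref{Siren}) to rewrite
$$L_{t}-\mathbf{A}_{t}^{+} \;=\; \int_{0}^{t}\Big(U^{-1}\chi(H_{r}\le l)+\mathcal{A}_{\lambda}^{-}(S_{r})\Big)\,dr \;-\;\big(\mathbf{Q}_{t}-\mathbf{Q}_{0}-\mathbf{M}_{t}\big).$$
The last group of terms already satisfies $\lambda^{1/2}\mathbb{E}^{(\lambda)}\big[\sup_{t\le T}|\mathbf{Q}_{t/\lambda}-\mathbf{Q}_{0}|\big] = \mathit{O}(1)$ by Lem.~\ref{FirstEnergyLem}, and $\lambda^{1/2}\sup_{t}|\mathbf{M}_{t/\lambda}|$ is controlled by BDG together with Prop.~\ref{AMinus}(1); the problem therefore reduces to approximating the non-negative integrand $\phi_{\lambda}(s):=U^{-1}\chi(H(s)\le l)+\mathcal{A}_{\lambda}^{-}(s)$ by the image of a corrector.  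Concretely, I would seek $F_{\lambda}:\Sigma\to\R$ together with auxiliary functions so that Dynkin's formula produces the pre-limit analogue of the Tanaka--Meyer identity~(\ref{ItoLocal}); matching this with the identity displayed above will give $\lambda^{1/2}(L_{t/\lambda}-\mathbf{A}_{t/\lambda}^{+})$ as a sum of boundary terms, a martingale, and a remainder, each of order $\mathit{O}(\lambda^{1/4})$.

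The hard step will be constructing $F_{\lambda}$ with uniform-in-$\lambda$ estimates.  Lem.~\ref{LocalEstimate} suggests the correct scale: in the OU limit the exponential smoothing at scale $\epsilon\sim \lambda^{1/2}$ produces an $\mathit{O}(\epsilon^{1/2})=\mathit{O}(\lambda^{1/4})$ error, so a natural candidate is a pre-limit smoothing of $\mathbf{Q}$ on the same scale, shaped so that its generator reproduces $\phi_{\lambda}$ up to a remainder of the right size.  The principal obstacle is that the spectral gap of $\mathcal{L}_{\lambda}$ closes as $\lambda\to 0$, so the usual soft arguments for the reduced resolvent fail; to beat this loss and secure both a sup-bound on $\lambda^{1/2}|F_{\lambda}|$ and an $L^{1}$-bound on the remainder, I would import the generalized-resolvent machinery of~\cite{Previous} together with the oscillation-averaging intuition for $\frac{dV}{dx}(X_{r})$ at high momentum recorded in~(\ref{YinYang}).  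The control of $\mathcal{A}_{\lambda}^{-}$ away from the central energy window via Prop.~\ref{AMinus}(3)--(4) and the time-in-low-energy estimates from Lem.~\ref{LowEnergyLemma} should then suffice to close the argument.
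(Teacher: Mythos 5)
Your treatment of the moment bound $\mathbb{E}^{(\lambda)}\big[\lambda^{1/2}L_{T/\lambda}\big]\le C$ is correct, and it is in fact more direct than what the paper does. You observe that $L_{T/\lambda}$ is just $U^{-1}\lambda^{-1}\mathbf{T}_{T/\lambda}$ with $\epsilon\lambda^{-\varrho}=l$, and then Lem.~\ref{LowEnergyLemma} gives the estimate in one line, whereas the paper routes this through the splitting identity $\mathbf{u}\tilde{\mathbb{E}}^{(\lambda)}[\tilde N_{t}]=\tilde{\mathbb{E}}^{(\lambda)}[L_{t}]$ and the bound on $\tilde N$ from~\cite[Lem.~3.3]{Previous}. (One small technicality: Lem.~\ref{LowEnergyLemma} holds only for $\lambda$ small enough, so you should note the trivial bound $\lambda^{1/2}L_{T/\lambda}\le T\lambda^{-1/2}$ for the remaining compact range of $\lambda$.)

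The first assertion, however, is not proved. First, a sign error: with $\mathbf{A}_{t}^{+}=\mathbf{Q}_{t}-\mathbf{Q}_{0}-\mathbf{M}_{t}+\mathbf{A}_{t}^{-}$ the integrand you should obtain is $U^{-1}\chi(H_{r}\le l)-\mathcal{A}_{\lambda}^{-}(S_{r})$, not the sum you wrote, so it is not sign-definite. But the real issue is structural. You correctly identify that the heart of the argument is building a corrector $F_{\lambda}$ so that $\mathcal{L}_{\lambda}F_{\lambda}$ reproduces $\mathbf{u}^{-1}h - \mathcal{A}_{\lambda}^{+}$ with uniform-in-$\lambda$ control, and you explicitly label this ``the hard step'' without doing it. This is exactly the obstruction the paper flags in Sect.~\ref{SecRelatedLit}: for this model it is unclear how to control the reduced resolvent of $\mathcal{L}_{\lambda}$ as $\lambda\to 0$, which is why the paper abandons the corrector/Dynkin route and instead decomposes $L_{t}-\mathbf{A}_{t}^{+}=-\int_{0}^{t}g_{\lambda}'(S_{r})\,dr$ over the Nummelin life cycles $[R_{n},R_{n+1})$. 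In that decomposition the per-cycle mean $b_{\lambda}$ of $g_{\lambda}'$ is shown to be $\mathit{O}(\lambda^{1/2})$ using Part~(7) of Prop.~\ref{AMinus} and~\cite[Prop.~2.4]{Previous}, the centered odd/even sums over cycles are honest $\tilde{\mathcal{F}}'$-martingales, and the boundary terms are handled by~\cite[Prop.~4.3]{Previous} applied to $g_{\lambda}=\mathcal{A}_{\lambda}^{+}+\mathbf{u}^{-1}h$. Your plan, by contrast, collapses the estimate back onto the very object ($\frak{R}^{(\lambda)}$-type bounds) that the paper deems unavailable, and your own intermediate observation that $\lambda^{1/2}\sup_{t}|\mathbf{Q}_{t/\lambda}-\mathbf{Q}_{0}|$ and $\lambda^{1/2}\sup_{t}|\mathbf{M}_{t/\lambda}|$ are only $\mathit{O}(1)$ shows that the crude triangle inequality on your decomposition cannot yield $\mathit{O}(\lambda^{1/4})$; the cancellations must be exhibited, and your sketch does not do so.
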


\vspace{.5cm}

\begin{proof}[Proof of Thm.~\ref{ThmLocalTime}]
By~\cite[Thm. 1.3]{Previous} the process $\lambda^{\frac{1}{2}} P_{\frac{\cdot}{\lambda}}$ converges in law to the Ornstein-Uhlenbeck process $\frak{p}$ with respect to the uniform metric.  As a consequence of limiting scheme in Lem.~\ref{LocalEstimate}, the trajectories for the first component  of the limiting pair $(\frak{p},\frak{l})$ determine the trajectories of the second component through the absolute value $|\frak{p}|$.
It is sufficient for us to show that $( |\lambda^{\frac{1}{2}} P_{\frac{\cdot}{\lambda}}|  ,\,\lambda^{\frac{1}{2}} L_{\frac{\cdot}{\lambda}})$ converges in law to the pair $(|\frak{p}|,\frak{l})$.  Our approach will be to approximate the pair $( |\lambda^{\frac{1}{2}} P_{\frac{t}{\lambda}}|  ,\,\lambda^{\frac{1}{2}} L_{\frac{t}{\lambda}})$ by the pair $(\lambda^{\frac{1}{2}}\mathbf{Q}_{\frac{t}{\lambda}},\, \lambda^{\frac{1}{2}} \mathbf{A}_{\frac{t}{\lambda}}^{+} )$ in Part (i) below,  and then to apply an argument based on the Tanaka-Meyer formula to analyze $(\lambda^{\frac{1}{2}}\mathbf{Q}_{\frac{t}{\lambda}},\, \lambda^{\frac{1}{2}} \mathbf{A}_{\frac{t}{\lambda}}^{+} )$ in  Part (ii).   All convergences in law in this proof are with respect to the uniform metric.

  \vspace{.5cm}

\noindent (i).  Showing that $|\lambda^{\frac{1}{2}} P_{\frac{t}{\lambda}}|$ is close to  $\lambda^{\frac{1}{2}}\mathbf{Q}_{\frac{t}{\lambda}}$ is easy since
$$\big| \big(p^{2}+2V(x) \big)^{\frac{1}{2}}-|p|   \big|\leq \big(2\sup_{x}V(x)\big)^{\frac{1}{2}}  \quad \text{and thus}\quad \big|\lambda^{\frac{1}{2}}\mathbf{Q}_{\frac{t}{\lambda}}-\lambda^{\frac{1}{2}}|P_{\frac{t}{\lambda}}|\big|\leq \lambda^{\frac{1}{2}}\big(2\sup_{x}V(x)\big)^{\frac{1}{2}}. $$ 
By Lem.~\ref{LocalTimeBnd}, $ \lambda^{\frac{1}{2}} L_{\frac{\cdot}{\lambda}}$ can be approximated by  $\lambda^{\frac{1}{2} }\mathbf{A}_{\frac{\cdot}{\lambda}}^{+}$  for small $\lambda$, and the expectation $\mathbb{E}^{(\lambda)}\big[ \lambda^{\frac{1}{2} }L_{\frac{t}{\lambda}}\big] $ is uniformly bounded for $\lambda<1$.  A consequence  of Part (ii) will be that  $\lambda^{\frac{1}{2} }L_{\frac{\cdot}{\lambda}}$ converges in law to $\frak{l}$ as $\lambda\rightarrow 0$.  This implies convergence of the first moment.

\vspace{.5cm}

\noindent (ii).    The process $\lambda^{\frac{1}{2}}|P_{\frac{\cdot}{\lambda}}|$ converges in law to $|\frak{p}|$ because $|\cdot|$ is a continuous map on functions in $L^{\infty}([0,T])$ with respect to the supremum norm and  $\lambda^{\frac{1}{2}}P_{\frac{\cdot}{\lambda}}$ converges in law to  $\frak{p} $  by~\cite[Thm. 1.3]{Previous}. 
 With Part (i) it follows that $\lambda^{\frac{1}{2}}\mathbf{Q}_{\frac{\cdot}{\lambda}}$ converges in law  to  $|\frak{p}|$.  Our main work is to incorporate the component $ \lambda^{\frac{1}{2}} \mathbf{A}_{\frac{\cdot}{\lambda}}^{+}$ for the convergence in law of the pair  $(\lambda^{\frac{1}{2}}\mathbf{Q}_{\frac{\cdot}{\lambda}},\, \lambda^{\frac{1}{2}} \mathbf{A}_{\frac{\cdot}{\lambda}}^{+} )$.

For the process $ \mathbf{A}_{t}^{+}$, we may write
\begin{align} \label{LikeItoLocal}
\mathbf{A}_{t}^{+}=  \mathbf{Q}_{t}- \mathbf{Q}_{0}-\mathbf{M}_{t}+\mathbf{A}_{t}^{-}.
\end{align}
Now, we will begin the analysis of $\lambda^{\frac{1}{2}}\mathbf{A}_{\frac{t}{\lambda}}^{+}$ through a study of the terms on the right side of the above equation.   By our assumptions on the  initial distribution $\mu$ for $(X_{0},P_{0})$, the random variable $\lambda^{\frac{1}{2}}\mathbf{Q}_{0}$ converges to zero in probability.   We will show that there is convergence in law
\begin{align}\label{Triple}
\mathbf{Y}^{(\lambda)}_{t}=\big(\lambda^{\frac{1}{2}}\mathbf{Q}_{\frac{t}{\lambda}},\,\lambda^{\frac{1}{2}} \mathbf{M}_{\frac{t}{\lambda}} ,\, \lambda^{\frac{1}{2}}\mathbf{A}_{\frac{t}{\lambda}}^{-}\big) \stackrel{\frak{L}}{\Longrightarrow} \Big(|\frak{p}_{t}|,\,\mathbf{\tilde{B}}_{t},\,\frac{1}{2}\int_{0}^{t}dr|\frak{p}_{r}| \Big), 
\end{align}
where $\mathbf{\tilde{B}}$ is the copy of Brownian motion in the Tanaka-Meyer formula~(\ref{ItoLocal}).  With the identities~(\ref{ItoLocal}) and~(\ref{LikeItoLocal}), the above convergence implies that  $( \lambda^{\frac{1}{2}}\mathbf{Q}_{\frac{\cdot}{\lambda}},\lambda^{\frac{1}{2}}\mathbf{A}^{+}_{\frac{\cdot}{\lambda}})$ converges in law to $(|\frak{p}|,\frak{l})$.  To prove the convergence~(\ref{Triple}), we will first show that $\lambda^{\frac{1}{2}}\mathbf{A}_{\frac{t}{\lambda}}^{-}$ can be approximated by $\frac{1}{2}\int_{0}^{t}dr \lambda^{\frac{1}{2}}\mathbf{Q}_{\frac{r}{\lambda}}$; see (I) below. 
 It is then enough to show functional convergence of the pair $\big(\lambda^{\frac{1}{2}}\mathbf{Q}_{\frac{\cdot}{\lambda}},\,\lambda^{\frac{1}{2}} \mathbf{M}_{\frac{\cdot}{\lambda}}\big) $ because the map sending $q\in L^{\infty}([0,T])$ to the element $\frac{1}{2}\int_{0}^{\cdot}dr q_{r}\in  L^{\infty}([0,T])$ is continuous with respect to the supremum norm. 
 A similar idea applies in the proof of the  convergence in law of  $\big(\lambda^{\frac{1}{2}}\mathbf{Q}_{\frac{\cdot}{\lambda}},\,\lambda^{\frac{1}{2}} \mathbf{M}_{\frac{\cdot}{\lambda}}\big) $.  It is clear from the statement of Lem.~\ref{LocalEstimate} that the trajectories for  $|\frak{p}|$ determine the trajectories for  $\mathbf{\tilde{B}}$, and the same relation emerges between $ \lambda^{\frac{1}{2}}\mathbf{Q}_{\frac{\cdot}{\lambda}} $ and $\lambda^{\frac{1}{2}} \mathbf{M}_{\frac{\cdot}{\lambda}} $ in the limit $\lambda\rightarrow 0$.  The main idea of the proof is to reduce everything to the functional convergence of $ \lambda^{\frac{1}{2}}\mathbf{Q}_{\frac{\cdot}{\lambda}} $ to the absolute value of the Ornstein-Uhlenbeck process $|\frak{p}|$, which we know to occur by the observation following (ii) above.


The analysis below will be split into the proof of statements (I)-(III) below.  The proofs of (II) and (III) work toward the convergence of the pair $\big(\lambda^{\frac{1}{2}}\mathbf{Q}_{\frac{\cdot}{\lambda}},\,\lambda^{\frac{1}{2}} \mathbf{M}_{\frac{\cdot}{\lambda}} \big) $.    
\begin{enumerate}[(I).]

\item There is a $C>0$ such that for all $\lambda< 1$,
$$\mathbb{E}^{(\lambda)}\Big[ \sup_{0\leq t\leq T}  \Big|  \lambda^{\frac{1}{2}}\mathbf{A}_{\frac{t}{\lambda}}^{-}-\frac{1}{2}\int_{0}^{t}dr \lambda^{\frac{1}{2}}\mathbf{Q}_{\frac{r}{\lambda}}  \Big| \Big]\leq C\lambda^{\frac{1}{8}}.$$

\item The martingales $ \mathbf{m}^{(\lambda)}_{t,\epsilon}$ defined as     
 $$  \frak{m}_{t,\epsilon}^{(\lambda)}:= \lambda^{\frac{1}{2}} \int_{0}^{\frac{t}{\lambda} }d\mathbf{M}_{r}\big(1-e^{-\epsilon^{-1}\lambda^{\frac{1}{2}}\mathbf{Q}_{r^{-}}}     \big)   $$
 are close to $\lambda^{\frac{1}{2}} \mathbf{M}_{\frac{t}{\lambda}}$ for small $\lambda$ and $\epsilon$ in the sense 
\begin{align}\label{Calcutta}
\mathbb{E}^{(\lambda)}\big[ \sup_{0\leq t\leq T}  \big| \lambda^{\frac{1}{2}} \mathbf{M}_{\frac{t}{\lambda} }- \frak{m}_{t,\epsilon}^{(\lambda)} \big|^{2} \big]\leq C \big(\textup{max}(\epsilon, \lambda)\big) ^{\frac{1}{2}} 
\end{align}
for some $C$ and all $\lambda, \epsilon <1$.

\item  For each fixed $\epsilon \in \R^{+}$,  there is convergence in law as $\lambda\rightarrow 0$
$$  \big(\lambda^{\frac{1}{2}}\mathbf{Q}_{\frac{t}{\lambda}} ,\, \frak{m}_{t,\epsilon}^{(\lambda)}\big)\stackrel{\frak{L}}{\Longrightarrow}  \big(|\frak{p}_{t}|,\,\mathbf{m}_{t,\epsilon}\big),    $$
for $\mathbf{m}_{t,\epsilon}= \int_{0}^{t }d\mathbf{\tilde{B}}_{r}\big(1-e^{-\frac{|\frak{p}_{r}|}{\epsilon}  }\big)$.

\end{enumerate}

The $\textup{max}(\epsilon , \lambda) $ on the right side of the inequality~(\ref{Calcutta}) can be replaced with $\epsilon$ by having a slightly more refined version of Lem.~\ref{LowEnergyLemma}, which we do not require here.  By combining the results (II) and (III) with Lem.~\ref{LocalEstimate}, which gives the convergence as $\epsilon\rightarrow 0$ of $(\frak{p},  \mathbf{m}_{\cdot,\epsilon}  )$ to $(\frak{p},  \mathbf{\tilde{B}} )$ in the norm $\|\cdot\|_{\frak{s}}= \mathbb{E}\big[\sup_{0\leq t\leq T}|\cdot|\big]$, then a standard argument which we sketch below shows that $\big(\lambda^{\frac{1}{2}}\mathbf{Q}_{\frac{\cdot}{\lambda}},\,\lambda^{\frac{1}{2}} \mathbf{M}_{\frac{\cdot}{\lambda}} \big) $ converges in law to $ (\frak{p},  \mathbf{\tilde{B}} )$.  These statements can be summarized by the marked arrows in the diagram below
        $$  \renewcommand{\arraystretch}{1.5} \begin{array}[c]{ccc} (  \lambda^{\frac{1}{2}}\mathbf{Q}_{\frac{t}{\lambda}},\, \mathbf{m}^{(\lambda)}_{t,\epsilon} )& \stackrel{\frak{L}}{\Longrightarrow   }  &  (\frak{p}_{t},\, \mathbf{m}_{t,\epsilon})\\   \Big \downarrow\scriptstyle{ \|\cdot\|_{\frak{s}} }&  &  \Big \downarrow\scriptstyle{  \|\cdot\|_{\frak{s}} }\\   (  \lambda^{\frac{1}{2}}\mathbf{Q}_{\frac{t}{\lambda}},\,\lambda^{\frac{1}{2}}\mathbf{M}_{\frac{t}{\lambda}}) &\Longrightarrow  &  (\frak{p}_{t},\,\mathbf{\tilde{B}}_{t})
\end{array},
$$
where the convergence on the right side of the diagram is by Lem.~\ref{LocalEstimate}, the top of the diagram is by (III), and the convergence on the left side of the diagram is from (II) and requires both $\epsilon$ and $\lambda$ to be small.  Let us sketch the proof of the convergence in law at the bottom line of the diagram.  By~\cite[Cor. IV.2.9]{Pollard} it is enough to show the convergence as $\lambda\rightarrow 0$ of
\begin{align}\label{Thorns}
\big|  \mathbb{E}^{(\lambda)}\big[F(  \lambda^{\frac{1}{2}}\mathbf{Q}_{\frac{\cdot}{\lambda}},\,\lambda^{\frac{1}{2}}\mathbf{M}_{\frac{\cdot}{\lambda}})   \big]- \mathbb{E}^{(\lambda)}\big[F(\frak{p},\,\mathbf{\tilde{B}})  \big]\big|\
\end{align}
to zero for functionals $F:L^{\infty}([0,T],\R^{2})\rightarrow \R$ that are bounded and uniformly continuous with respect to the supremum norm.  By the triangle inequality,~(\ref{Thorns}) is smaller than
\begin{align}
  \big| & \mathbb{E}^{(\lambda)}\big[F(  \lambda^{\frac{1}{2}}\mathbf{Q}_{\frac{\cdot}{\lambda}},\,\lambda^{\frac{1}{2}}\mathbf{M}_{\frac{\cdot}{\lambda}})   \big]- \mathbb{E}^{(\lambda)}\big[F (  \lambda^{\frac{1}{2}}\mathbf{Q}_{\frac{\cdot}{\lambda}},\, \mathbf{m}^{(\lambda)}_{\cdot,\epsilon} )  \big]\big|+\big|  \mathbb{E}^{(\lambda)}\big[F (  \lambda^{\frac{1}{2}}\mathbf{Q}_{\frac{\cdot}{\lambda}},\, \mathbf{m}^{(\lambda)}_{\cdot,\epsilon} )  \big]-\mathbb{E}^{(\lambda)}\big[F (\frak{p},\, \mathbf{m}_{\cdot,\epsilon})   \big]\big|\nonumber \\ &  +\big|  \mathbb{E}^{(\lambda)}\big[F (\frak{p},\, \mathbf{m}_{\cdot,\epsilon})   \big]-\mathbb{E}^{(\lambda)}\big[F (\frak{p},\,\mathbf{\tilde{B}}) \big]\big|.
\end{align}
Since $F$ is bounded and uniformly continuous, we can choose $\epsilon\vee \lambda$ and $\epsilon$ to make both the first and third terms small by (III) and Lem.~\ref{LocalEstimate}, respectively.  We can then choose $\lambda \in (0,\epsilon]$ to make the second term arbitrarily small by the convergence (II).

\vspace{.5cm}

Next, we prove statements (I)-(III).  The definition of constants $C_{n},C_{n}'>0$, $n\in \mathbb{N}$ will reset in different parts of the analysis. \vspace{.2cm}

\noindent (I).  \hspace{.25cm}  By the remark (ii), it is sufficient to bound the difference between $\lambda^{\frac{1}{2}}\mathbf{A}_{\frac{t}{\lambda}}^{-} $ and  $\frac{1}{2} \int_{0}^{t}dr|\lambda^{\frac{1}{2}} P_{\frac{r}{\lambda} }|$ for small $\lambda$. Conditioned on the event that $ \sup_{0\leq r\leq \frac{T}{\lambda}}|P_{r }|\leq \lambda^{-\frac{3}{4}}$ for $r\in [0,\frac{T}{\lambda}]$, then 
\begin{align*}
\sup_{0\leq t\leq T}& \Big| \lambda^{\frac{1}{2}}\mathbf{A}_{\frac{t}{\lambda}}^{-}-\frac{1}{2} \int_{0}^{t}dr |\lambda^{\frac{1}{2}} P_{\frac{r}{\lambda} }| \Big|\\ &\leq  \lambda^{\frac{1}{8}}C_{1} \int_{0}^{T}dr\chi\big(|P_{\frac{r}{\lambda}}|\leq \lambda^{-\frac{3}{8}}   \big)+ \lambda^{\frac{1}{2}} \int_{0}^{T}dr\chi\big(|P_{\frac{r}{\lambda}}|\geq \lambda^{-\frac{3}{8}}   \big)\, \Big| \lambda^{-1}\mathcal{A}_{\lambda}^{-}\big( X_{\frac{r}{\lambda}},P_{\frac{r}{\lambda}}    \big) -  \frac{1}{2}|P_{\frac{r}{\lambda}}|  \Big|  \\ &\leq  C_{1}T\lambda^{\frac{1}{8}}+C_{2} T\lambda^{\frac{3}{4}}\sup_{0\leq r\leq \frac{T}{\lambda} }|P_{r}| , 
\end{align*}
where $C_{1}:=\frac{1}{2}+  \sup_{ |p|\leq \lambda^{-\frac{3}{8}} } \lambda^{-\frac{5}{8} }\mathcal{A}_{\lambda}^{-}(x,p)   $, and $C_{1}$ is finite by Part (4) of Prop.~\ref{AMinus}.  The $C_{2}>0$ in the second inequality is from Part (3) of Prop.~\ref{AMinus}.

The above implies the first inequality below:
\begin{align*}
\mathbb{E}^{(\lambda)}&\Big[ \chi\big(\sup_{0\leq r\leq \frac{T}{\lambda}}   |P_{r}|\leq \lambda^{-\frac{3}{4}}\big) \sup_{0\leq t\leq T} \Big| \lambda^{\frac{1}{2}}\mathbf{A}_{\frac{t}{\lambda}}^{-}-\frac{1}{2} \int_{0}^{t}dr |\lambda^{\frac{1}{2}} P_{\frac{r}{\lambda} }| \Big|   \Big] \leq C_{1}T\lambda^{\frac{1}{8}}+C_{2}\lambda^{\frac{3}{4}}\mathbb{E}^{(\lambda)}\big[  \sup_{0\leq r\leq \frac{T}{\lambda} }\big| P_{r}\big|\big]\\ &\leq  C_{1}T\lambda^{\frac{1}{8}}+C_{2}2^{-\frac{1}{2}}T\lambda^{\frac{3}{4}}\mathbb{E}^{(\lambda)}\big[  \sup_{0\leq r\leq \frac{T}{\lambda} }\mathbf{Q}_{r}\big]\leq C_{1}T\lambda^{\frac{1}{8}}+C_{2}'T\lambda^{\frac{1}{4}},    
\end{align*}
where the second and third inequalities follows from $P_{r}^{2}\leq \mathbf{Q}_{r}^{2}=2H_{r}$ and  by Lem.~\ref{FirstEnergyLem}, respectively.  Moreover, for the event   $\sup_{0\leq r\leq \frac{T}{\lambda}}|P_{r}|> \lambda^{-\frac{3}{4}}$, then 
\begin{align*}
 \mathbb{E}^{(\lambda)}&\Big[\chi\big(\sup_{0\leq r\leq \frac{T}{\lambda}}   |P_{r}|> \lambda^{-\frac{3}{4}}\big) \sup_{0\leq t\leq T} \Big| \lambda^{\frac{1}{2}}\mathbf{A}_{\frac{t}{\lambda}}^{-}-\frac{1}{2} \int_{0}^{t}dr |\lambda^{\frac{1}{2}} P_{\frac{r}{\lambda} }| \Big|   \Big]& \nonumber \\& \leq   \mathbb{P}^{(\lambda)}\Big[ \sup_{0\leq r\leq \frac{T}{\lambda}}   |P_{r}|> \lambda^{-\frac{3}{4}} \Big]^{\frac{1}{2}}\mathbb{E}^{(\lambda)}\Big[ \Big|  \int_{0}^{T}dr\Big( |\lambda^{\frac{1}{2}} P_{\frac{r}{\lambda} }| +\lambda^{\frac{1}{2}}\mathcal{A}_{\lambda}^{-}(X_{\frac{r}{\lambda}},P_{\frac{r}{\lambda}}\big) \Big)\Big|^{2}   \Big]^{\frac{1}{2}}\nonumber   \\ &\leq  C_{1}'\lambda^{\frac{1}{2}}T^{\frac{1}{2}}\mathbb{E}^{(\lambda)}\Big[\Big( \sup_{0\leq r\leq \frac{T}{\lambda}}  \lambda^{\frac{1}{2}}  |P_{r}|  \Big)^{2}\Big]^{\frac{1}{2}}\mathbb{E}^{(\lambda)}\Big[\sup_{0\leq r\leq \frac{T}{\lambda} }\big(  \lambda^{\frac{1}{2}}+\lambda^{\frac{1}{2}}|P_{r}|+\lambda^{\frac{5}{2}} |P_{r}|^{2}\big)^{2}  \Big]^{\frac{1}{2}}\nonumber =\mathit{O}(\lambda^{\frac{1}{4}}) .
 \end{align*}
 The first inequality is Cauchy-Schwarz, and the second is Chebyshev's for the first term.   For the second term in the second inequality, Parts (5) and (6) of Prop.~\ref{AMinus} imply that there are $C_{1},C_{1}'>0$ such that  
 $$|p|+\mathcal{A}_{\lambda}^{-}(x,p)\leq |p|+4\lambda |p| \mathcal{E}_{\lambda}(p)+ C_{1}\mathcal{E}_{\lambda}(p)\leq C_{1}'(1+|p|+\lambda^{2}|p|^{2}).         $$
 The expectations on the last line above are finite by Lem.~\ref{FirstEnergyLem} since $|P_{r}|\leq (2 H_{r})^{\frac{1}{2}}$.

 \vspace{.5cm}
  
\noindent (II). \hspace{.25cm}  
The difference  between $\lambda^{\frac{1}{2}} \mathbf{M}_{\frac{t}{\lambda}}$ and $\frak{m}_{t,\epsilon}^{(\lambda)}$  can be bounded by
\begin{align*}
\mathbb{E}^{(\lambda)}\big[ \sup_{0\leq t\leq T}  \big| \lambda^{\frac{1}{2}} \mathbf{M}_{\frac{t}{\lambda} }- \frak{m}_{t,\epsilon}^{(\lambda)} \big|^{2} \big]&\leq 4 \mathbb{E}^{(\lambda)}\big[ \big| \lambda^{\frac{1}{2}}\mathbf{M}_{\frac{T}{\lambda}} -\frak{m}_{T,\epsilon}^{(\lambda)} \big|^{2} \big]= 4\lambda\mathbb{E}^{(\lambda)}\Big[ \Big|  \int_{0}^{\frac{T}{\lambda}}d\mathbf{M}_{r}e^{-\epsilon^{-1}\lambda^{\frac{1}{2}}\mathbf{Q}_{r^{-}}}  \Big|^{2} \Big]\\ & =  4\mathbb{E}^{(\lambda)}\Big[  \lambda\int_{0}^{\frac{T}{\lambda}}dr\mathcal{V}_{\lambda}(S_{r}) e^{-2\epsilon^{-1}\lambda^{\frac{1}{2}}\mathbf{Q}_{r^{-}}}  \Big].
\end{align*}
The first inequality is Doob's, and the second equality uses that $\frac{d}{dt}\langle \mathbf{M}\rangle_{t}=\mathcal{V}_{\lambda}(S_{t})$.  For $ \epsilon \in [\lambda,1]$ the right side is smaller than 
\begin{align*}
\mathbb{E}^{(\lambda)}\Big[ \lambda  \int_{0}^{\frac{T}{\lambda}}dr\mathcal{V}_{\lambda}(S_{r}) e^{-2\epsilon^{-1}\lambda^{\frac{1}{2}}\mathbf{Q}_{r^{-}}}   \Big]& \leq C_{1}\mathbb{E}^{(\lambda)}\big[\mathbf{T}_{\frac{T}{\lambda}}  \big]+ T\sup_{|p|>\epsilon^{\frac{1}{2}}\lambda^{-\frac{1}{2}} }  \mathcal{V}_{\lambda}(x,p) e^{-2^{\frac{3}{2}}\epsilon^{-1} \lambda^{\frac{1}{2}}H^{\frac{1}{2}}(x,p) }\\ &\leq  C_{1} \mathbb{E}^{(\lambda)}\big[\mathbf{T}_{\frac{T}{\lambda}}  \big]+ C_{2}T\sup_{|p|>\epsilon^{\frac{1}{2}}\lambda^{-\frac{1}{2}} } (1+\lambda|p|) e^{-2\epsilon^{-1}\lambda^{\frac{1}{2}}|p|} \\ &\leq C_{1}'\big(\textup{max}(\epsilon , \lambda )\big)^{\frac{1}{2}}+2C_{2}T  e^{-2\epsilon^{-\frac{1}{2} } }=\mathit{O}\big(\textup{max}(\epsilon^{\frac{1}{2}}, \lambda^{\frac{1}{2}} )  \big),
\end{align*}
where $C_{1}:=\sup_{\lambda<1}\sup_{|p|\leq \lambda^{-1} } \mathcal{V}_{\lambda}(x,p) $ and  $\mathbf{T}_{t}=\lambda\int_{0}^{t}dr\chi( H_{r}\leq \epsilon \lambda^{-1}  ) $. The value $C_{1}$ is finite by Part (1) of Prop.~\ref{AMinus}.  The second inequality uses Part (1) of Prop.~\ref{AMinus} again  and  that $|p|\leq 2^{\frac{1}{2}}H^{\frac{1}{2}}(x,p)$ in the exponent.  The  $C_{1}'$ in the third inequality is from Lem.~\ref{LowEnergyLemma}. \vspace{.5cm}

\noindent (III). \hspace{.25cm} We will show that $\frak{m}_{t,\epsilon}^{(\lambda)}$ becomes close  in the norm  $\|\cdot\|_{s} $  to $F_{t}(\lambda^{\frac{1}{2}}\mathbf{Q}_{\frac{\cdot}{\lambda}} )$ as $\lambda\rightarrow 0$   for a  function $F:L^{\infty}([0,T])\rightarrow L^{\infty}([0,T])$ that is continuous with respect to the supremum norm.  The convergence in law of the pair $\big(\lambda^{\frac{1}{2}}\mathbf{Q}_{\frac{t}{\lambda}} ,\, F_{t}(\lambda^{\frac{1}{2}}\mathbf{Q}_{\frac{\cdot}{\lambda}} )\big)$ is then determined be the convergence of the first component. 

For $q\in L^{\infty}([0,T])$ we define $F_{t}(q)$ as
\begin{align}\label{Misery}
F_{t}(q):=q_{t} +\epsilon e^{-\epsilon^{-1}q_{t}} +\frac{1}{2}\int_{0}^{t}dr q_{r}   (1-e^{-\epsilon^{-1}q_{r} })- \frac{1}{2\epsilon}\int_{0}^{t}dr e^{-\epsilon^{-1}q_{r}}. 
\end{align}
$F:L^{\infty}([0,T])$ is Lipschitz continuous with respect the supremum norm for a constant that scales as $\propto \epsilon^{-1}$ for small $\epsilon$.   
Let  $\mathbf{m}_{t,\epsilon}^{(\lambda),\prime}:=F_{t}(\lambda^{\frac{1}{2}}\mathbf{Q}_{\frac{\cdot}{\lambda}})$.  
Notice that since $\frak{p}_{0}=0$
\begin{align*}
F_{t}(|\frak{p}|) &=
 |\frak{p}_{t}|+\epsilon e^{-\frac{|\frak{p}_{t}| }{\epsilon} } + \frac{1}{2} \int_{0}^{t}dr|\frak{p}_{r}| \big(1-e^{-\frac{|\frak{p}_{r}| }{\epsilon} }\big)-\frac{1}{2\epsilon} \int_{0}^{t}dr e^{-\frac{|p_{r}|}{\epsilon} }\\ &= \int_{0}^{t}d\mathbf{\tilde{B}}_{r}\big(1-e^{-\frac{|\frak{p}_{r}| }{\epsilon} }\big)=\mathbf{m}_{t,\epsilon}
\end{align*}
where the second equality is from $d\mathbf{\tilde{B}}_{t}=d|\frak{p}_{t}|+\frac{1}{2}|\frak{p}_{t}|dt-d\frak{l}_{t}$, the chain rule, and that $(d|\frak{p}_{t}|)^{2}=dt$.  
By (i) and the convergence in law of $\lambda^{\frac{1}{2}}P_{\frac{t}{\lambda}}$ to $\frak{p}_{t}$ by~\cite[Thm. 1.3]{Previous}, there is  convergence in law as $\lambda\rightarrow 0$,
\begin{eqnarray*}
\big( \lambda^{\frac{1}{2}}\mathbf{Q}_{\frac{t}{\lambda}},  \mathbf{m}_{t,\epsilon}^{(\lambda),\prime}) \stackrel{\frak{L}}{\Longrightarrow} ( |\frak{p}_{t}|,\mathbf{m}_{t,\epsilon}).   
 \end{eqnarray*}
The remainder of the proof will focus on showing that the difference between $\mathbf{m}_{t,\epsilon}^{(\lambda)}$ and $ \mathbf{m}_{t,\epsilon}^{(\lambda),\prime}$ converges to zero in the norm $\|\cdot\|_{\frak{s}}$ as $\lambda \rightarrow 0$.  More precisely, we show that $\|\mathbf{m}_{t,\epsilon}^{(\lambda)}- \mathbf{m}_{t,\epsilon}^{(\lambda),\prime}\|_{\frak{s}}$ is $\mathit{O}(\lambda^{\frac{1}{8}})$ for small $\lambda$.

By substituting $d\mathbf{M}_{r}= d\mathbf{Q}_{r}-d\mathbf{A}_{r}^{+}+d\mathbf{A}_{r}^{-}       $, the martingale $\mathbf{m}_{t,\epsilon}^{(\lambda)}$ can be written as
$$\frak{m}_{t,\epsilon}^{(\lambda)}= \lambda^{\frac{1}{2}} \int_{0}^{\frac{t}{\lambda} }\big(d\mathbf{Q}_{r}-d\mathbf{A}_{r}^{+}+d\mathbf{A}_{r}^{-}\big)  \big(1-e^{-\epsilon^{-1}\lambda^{\frac{1}{2}}\mathbf{Q}_{r^{-}}}     \big) .  $$  
It is sufficient to show that  
\begin{eqnarray} \label{Plus}
& & -\lambda^{\frac{1}{2}} \int_{0}^{\frac{t}{\lambda} }d\mathbf{A}_{r}^{+} \big(1-e^{-\epsilon^{-1}\lambda^{\frac{1}{2}}\mathbf{Q}_{r^{-}}}     \big) \longrightarrow 0 ,\\ \label{Minus}
& &\lambda^{\frac{1}{2}} \int_{0}^{\frac{t}{\lambda} }d\mathbf{A}_{r}^{-} \big(1-e^{-\epsilon^{-1}\lambda^{\frac{1}{2}}\mathbf{Q}_{r^{-}}}     \big) -\frac{1}{2}\int_{0}^{t}dr \lambda^{\frac{1}{2}}\mathbf{Q}_{\frac{r}{\lambda}}   (1-e^{-\epsilon^{-1}\lambda^{\frac{1}{2}}\mathbf{Q}_{\frac{r}{\lambda}} })\longrightarrow 0,  \\ \label{SREnergy}
& & \lambda^{\frac{1}{2}} \int_{0}^{\frac{t}{\lambda} } d\mathbf{Q}_{r}  \big(1-e^{-\epsilon^{-1}\lambda^{\frac{1}{2}}\mathbf{Q}_{r^{-}}}     \big)- \lambda^{\frac{1}{2}}\mathbf{Q}_{\frac{t}{\lambda}}  -\epsilon e^{-\epsilon^{-1}\lambda^{\frac{1}{2}}\mathbf{Q}_{\frac{t}{\lambda}}}+ \frac{1}{2\epsilon}\int_{0}^{t}dr e^{-\epsilon^{-1}\lambda^{\frac{1}{2}}\mathbf{Q}_{\frac{r}{\lambda}}}\longrightarrow  0,\hspace{1.5cm}
\end{eqnarray}
since the expressions sum up to $\mathbf{m}_{t,\epsilon}^{(\lambda)}- \mathbf{m}_{t,\epsilon}^{(\lambda),\prime}$.

 Since $d\mathbf{A}^{+}_{t}=dt\mathcal{A}^{+}_{\lambda}(X_{t},P_{t})$ the value~(\ref{Plus}) is bounded by
\begin{multline*}
 \mathbb{E}^{(\lambda)}\Big[ \sup_{0\leq t\leq T }  \Big| \lambda^{\frac{1}{2}} \int_{0}^{\frac{t}{\lambda} }d\mathbf{A}_{r}^{+} \big(1-e^{-\epsilon^{-1}\lambda^{\frac{1}{2}}\mathbf{Q}_{r^{-}}}     \big)  \Big|\Big]=  \mathbb{E}^{(\lambda)}\Big[\lambda\int_{0}^{\frac{T}{\lambda} }dr\mathcal{A}_{\lambda}^{+}(X_{r},P_{r})  \big(1-e^{-\epsilon^{-1}\lambda^{\frac{1}{2}}\mathbf{Q}_{r}}   \big)\Big]\\ \leq C\lambda \mathbb{E}^{(\lambda)}\Big[\int_{0}^{\frac{T}{\lambda} }dr \frac{1}{1+|P_{r}|^{2}} \big(1-e^{-\epsilon^{-1}\lambda^{\frac{1}{2}} \mathbf{Q}_{r}}   \big)\Big]\leq  C\mathbb{E}^{(\lambda)}\big[\mathbf{T}_{\frac{T}{\lambda}}  \big]+CT\sup_{|p|>\epsilon^{\frac{1}{2}}\lambda^{-\frac{1}{2}}}\frac{ e^{-\epsilon^{-1}\lambda^{\frac{1}{2}}|p| }  }{1+p^{2}} =\mathit{O}(\epsilon^{\frac{1}{2}}), 
\end{multline*}
where $\mathbf{T}_{t}$ is defined as above.  The first inequality is from Part (2) of Prop.~\ref{AMinus}, and the second inequality is similar to the analysis in Part (I).  For the convergence~(\ref{Minus}),  $d\mathbf{A}^{-}_{t}=dt\mathcal{A}^{-}_{\lambda}(X_{t},P_{t})$ and     
\begin{align*}
\mathbb{E}^{(\lambda)}\Big[ &\sup_{0\leq t\leq T }  \Big| \lambda^{\frac{1}{2}} \int_{0}^{\frac{t}{\lambda} }d\mathbf{A}_{r}^{-} \big(1-e^{-\epsilon^{-1}\lambda^{\frac{1}{2}}\mathbf{Q}_{r^{-}}}     \big) - \lambda^{\frac{1}{2}}\frac{1}{2} \int_{0}^{t }dr \mathbf{Q}_{\frac{r}{\lambda} }  \big(1-e^{-\epsilon^{-1}\lambda^{\frac{1}{2}}\mathbf{Q}_{\frac{r}{\lambda} }}     \big)  \Big|\Big]\\ &\leq \mathbb{E}^{(\lambda)}\Big[ \sup_{0\leq t\leq T }   \int_{0}^{t }dr\big| \lambda^{-\frac{1}{2} }\mathcal{A}^{-}_{\lambda}(X_{\frac{r}{\lambda}},P_{\frac{r}{\lambda}})-\frac{1}{2}\lambda^{\frac{1}{2}}\mathbf{Q}_{\frac{r}{\lambda} }       \big|  \Big]  .   
\end{align*}
By adding and subtracting $\frac{1}{2}\lambda^{\frac{1}{2}} |P_{\frac{r}{\lambda}}|$ in the integrand and applying the triangle inequality, we are left with the terms
$$\Big| \lambda^{-\frac{1}{2} }\mathcal{A}^{-}_{\lambda}(X_{\frac{r}{\lambda}},P_{\frac{r}{\lambda}}) -\frac{1}{2}\lambda^{\frac{1}{2}} |P_{\frac{r}{\lambda}}|\Big|\quad  \text{and} \quad \Big|\frac{1}{2}\lambda^{\frac{1}{2}} |P_{\frac{r}{\lambda}}|  -\frac{1}{2}\lambda^{\frac{1}{2}}\mathbf{Q}_{\frac{r}{\lambda} }       \Big|,$$ which are bounded by the analysis in Part (II) and at the beginning of Part (i), respectively.

The convergence~(\ref{SREnergy}) requires more work.  The terms $\lambda^{\frac{1}{2}} \int_{0}^{\frac{t}{\lambda} } d\mathbf{Q}_{r}$ and $\lambda^{\frac{1}{2}}\mathbf{Q}_{\frac{t}{\lambda}}-\lambda^{\frac{1}{2}}\mathbf{Q}_{0}$  are equal, and $\lambda^{\frac{1}{2}}\mathbf{Q}_{0}$ is small, so we must bound
\begin{align}\label{Forlan}
\mathbb{E}^{(\lambda)}\Big[ \sup_{0\leq t\leq T}\Big| \epsilon e^{-\epsilon^{-1}\lambda^{\frac{1}{2}}\mathbf{Q}_{\frac{t}{\lambda}}}+\lambda^{\frac{1}{2}} \int_{0}^{\frac{t}{\lambda} } d\mathbf{Q}_{r}  e^{-\epsilon^{-1}\lambda^{\frac{1}{2}}\mathbf{Q}_{r^{-}}}- \frac{1}{2\epsilon}\int_{0}^{\frac{t}{\lambda} }dr e^{-\epsilon^{-1}\lambda^{\frac{1}{2}}\mathbf{Q}_{\frac{r}{\lambda}}}\Big|\Big].    
\end{align}
 The difference would be zero by the Ito chain rule if $\lambda^{\frac{1}{2}}\mathbf{Q}_{\frac{t}{\lambda}}$ were replaced by $|\frak{p}_{r}|$, and the norm of the difference is essentially a measure of how close the chain rule is to holding.  We start with a Taylor expansion around each collision time $t_{n}$.  Let $\Delta \mathbf{Q}_{r}=\mathbf{Q}_{r}-\mathbf{Q}_{r^{-}}$, then  $ \epsilon e^{-\epsilon^{-1}\lambda^{\frac{1}{2}}\mathbf{Q}_{\frac{t}{\lambda}}}$ can be written as
\begin{align*}
 \epsilon e^{-\epsilon^{-1}\lambda^{\frac{1}{2}}\mathbf{Q}_{\frac{t}{\lambda}}}-&\epsilon e^{-\epsilon^{-1}\lambda^{\frac{1}{2}}\mathbf{Q}_{0} } =\epsilon \sum_{n=1}^{\mathcal{N}_{\frac{t}{\lambda}}} \Big(e^{-\epsilon^{-1}\lambda^{\frac{1}{2}}\mathbf{Q}_{t_{n}}}-e^{-\epsilon^{-1}\lambda^{\frac{1}{2}}\mathbf{Q}_{t_{n}^{-}} }\Big)  =  -\lambda^{\frac{1}{2}}\sum_{n=1}^{\mathcal{N}_{\frac{t}{\lambda}}}\Delta \mathbf{Q}_{t_{n}} e^{-\epsilon^{-1}\lambda^{\frac{1}{2}}\mathbf{Q}_{t_{n}^{-}} }\\ &+ \frac{\lambda }{2\epsilon}
 \sum_{n=1}^{\mathcal{N}_{\frac{t}{\lambda}}}\big(\Delta \mathbf{Q}_{t_{n}} \big)^{2}e^{-\epsilon^{-1}\lambda^{\frac{1}{2}}\mathbf{Q}_{t_{n}^{-}} }  -\frac{\lambda^{\frac{3}{2}} }{2\epsilon^{2}} \sum_{n=1}^{\mathcal{N}_{\frac{t}{\lambda}}}\int_{0}^{  \Delta \mathbf{Q}_{t_{n}} }dw \big(\Delta \mathbf{Q}_{t_{n}}-w \big)^{2}e^{-\epsilon^{-1}\lambda^{\frac{1}{2}}(\mathbf{Q}_{t_{n}^{-}}+w) } \\ =&\lambda^{\frac{1}{2}} \int_{0}^{\frac{t}{\lambda} } d\mathbf{Q}_{r}  e^{-\epsilon^{-1}\lambda^{\frac{1}{2}}\mathbf{Q}_{r^{-}}}+ \frac{\lambda}{2\epsilon}\int_{0}^{\frac{t}{\lambda}}(d\mathbf{Q}_{r})^{2} e^{-\epsilon^{-1}\lambda^{\frac{1}{2}}\mathbf{Q}_{r^{-}}}+  \mathbf{R}_{\lambda,\epsilon,t},
\end{align*}
where $\mathcal{N}_{t}$ is the number of collisions up to time $t$, and $\mathbf{R}_{\lambda,\epsilon,t}$  denotes the third term between the two equalities.  By the triangle inequality, the expectation (\ref{Forlan}) is smaller than  
\begin{align}\label{Largesse}
\epsilon+
\mathbb{E}^{(\lambda)}\big[ \sup_{0\leq t\leq \frac{T}{\lambda} }\big|   \mathbf{R}_{\lambda,\epsilon,t}\big|\big]+\mathbb{E}^{(\lambda)}\Big[ \sup_{0\leq t\leq \frac{T}{\lambda} }\Big| \frac{\lambda}{2\epsilon}\int_{0}^{t}\big(dr-(d\mathbf{Q}_{r})^{2}\big)  e^{-\epsilon^{-1}\lambda^{\frac{1}{2}}\mathbf{Q}_{r^{-}}}\Big|\Big],
\end{align}
where $\epsilon\in \R^{+}$ bounds  $\mathbb{E}^{(\lambda)}\big[\epsilon e^{-\epsilon^{-1}\lambda^{\frac{1}{2}}\mathbf{Q}_{0} }\big]$.

To bound the remainder term $\mathbf{R}_{\lambda,\epsilon,t}$ in~(\ref{Largesse}), we may write
\begin{align*}
\mathbb{E}^{(\lambda)}\big[ \sup_{0\leq t\leq \frac{T}{\lambda} }\big|   \mathbf{R}_{\lambda,\epsilon,t}\big|\big]&\leq \frac{\lambda^{\frac{3}{2}} }{6\epsilon^{2}} \mathbb{E}^{(\lambda)}\Big[  \sum_{n=1}^{\mathcal{N}_{\frac{T}{\lambda}}} \big|\Delta \mathbf{Q}_{t_{n}}\big|^{3}\Big]   =  \frac{\lambda^{\frac{3}{2}} }{6\epsilon^{2}} \mathbb{E}^{(\lambda)}\Big[  \int_{0}^{\frac{T}{\lambda}}dr\mathcal{V}_{\lambda,\frac{3}{2}}(X_{r},P_{r})      \Big] \\ & \leq C_{1}\frac{\lambda^{\frac{3}{2}} }{6\epsilon^{2}} \mathbb{E}^{(\lambda)}\Big[  \int_{0}^{\frac{T}{\lambda}}dr\big(1+\lambda \mathbf{Q}_{r}\big)^{4} \Big]  \leq C_{1}'T\frac{\lambda^{\frac{1}{2}} }{\epsilon^{2}}=\mathit{O}(\lambda^{\frac{1}{2}}),
\end{align*}
where the first inequality is by Part (1) of Prop.~\ref{AMinus}, and the $C_{1}'>0$ in the second inequality exists by bounding the moments of $\mathbf{Q}_{r}=(2H_{r})^{\frac{1}{2}}$ over $0\leq r\leq \frac{T}{\lambda}$ using Lem.~\ref{FirstEnergyLem}.

By adding and subtracting  $\int_{0}^{t}dr\mathcal{V}_{\lambda}(X_{r},P_{r})   $ in the expression for the last term in~(\ref{Largesse}) and using the triangle inequality,
\begin{align*}
\mathbb{E}^{(\lambda)}\Big[& \sup_{0\leq t\leq \frac{T}{\lambda} }\Big| \frac{\lambda}{2\epsilon}\int_{0}^{t}\big(dr-(d\mathbf{Q}_{r})^{2}\big)  e^{-\epsilon^{-1}\lambda^{\frac{1}{2}}\mathbf{Q}_{r^{-}}}\Big|\Big] \leq \mathbb{E}^{(\lambda)}\Big[ \frac{\lambda}{2\epsilon}\int_{0}^{\frac{T}{\lambda} }dr\big| 1-\mathcal{V}_{\lambda}(X_{r},P_{r})    \big| \Big]\\ &+\mathbb{E}^{(\lambda)}\Big[ \sup_{0\leq t\leq \frac{T}{\lambda} }\Big| \frac{\lambda}{2\epsilon}\int_{0}^{t}\big(dr\mathcal{V}_{\lambda}(X_{r},P_{r}) -(d\mathbf{Q}_{r})^{2}\big)  e^{-\epsilon^{-1}\lambda^{\frac{1}{2}}\mathbf{Q}_{r^{-}}}\Big|\Big].
\end{align*}
The first term on the right side is smaller than 
\begin{align}\label{LondonBurning}
\mathbb{E}^{(\lambda)}\Big[ \frac{\lambda}{2\epsilon}\int_{0}^{\frac{T}{\lambda} }dr\big| 1-\mathcal{V}_{\lambda}(X_{r},P_{r})    \big| \Big]\nonumber \leq & C_{1}\frac{1}{\epsilon}\mathbb{P}^{(\lambda)}\big[ \mathbf{T}_{\frac{T}{\lambda}}'    \big]+C_{2}\frac{\lambda^{\frac{1}{2}} }{\epsilon}\\ &+C_{3}\frac{\lambda}{\epsilon}\mathbb{E}^{(\lambda)}\Big[ \int_{0}^{\frac{T}{\lambda} }dr\chi\big(\mathbf{Q}_{r}\geq \lambda^{-\frac{3}{4}} \big)\big(1+\lambda \mathbf{Q}_{r}\big)^{3}  \Big]
\end{align}
for some $C_{1},C_{2},C_{3}>0$, where $\mathbf{T}_{t}':=\lambda\int_{0}^{t}dr\chi\big( \mathbf{Q}_{r}\leq \lambda^{-\frac{3}{8}}   \big)$, and the three terms on the right correspond to the parts of the trajectory such that $\mathbf{Q}_{r}\leq \lambda^{-\frac{3}{8}}$, $\lambda^{-\frac{3}{8}} \leq \mathbf{Q}_{r}\leq \lambda^{-\frac{3}{4}}$,  and $ \lambda^{-\frac{3}{4}}\leq \mathbf{Q}_{r}$.  For the first and third terms on the right side of~(\ref{LondonBurning}), we have applied Part (1) of Prop.~\ref{AMinus}.  For the second term, we applied Part (3) of Prop.~\ref{AMinus}.   The first term is $\mathit{O}(\lambda^{\frac{1}{8} })$ by Lem.~\ref{LowEnergyLemma}.  For the last term on the right side of~(\ref{LondonBurning}), we can apply Cauchy-Schwarz and an analogous argument 
to that at the end of Part (I).  

Moreover, the expression $\int_{0}^{t}\big(dr\mathcal{V}_{\lambda}(X_{r},P_{r}) -(d\mathbf{Q}_{r})^{2}\big)  \,e^{-\epsilon^{-1}\lambda^{\frac{1}{2}}\mathbf{Q}_{r^{-}}}$ is a martingale with predictable quadratic variation
$$\int_{0}^{t}dr\mathcal{V}_{\lambda,2}(X_{r},P_{r})e^{-2\epsilon^{-1}\lambda^{\frac{1}{2}}\mathbf{Q}_{r}}. $$
Hence, by Doob's maximal inequality
\begin{align*}
\mathbb{E}^{(\lambda)}&\Big[ \sup_{0\leq t\leq \frac{T}{\lambda} }\Big| \frac{\lambda}{2\epsilon}\int_{0}^{t}\big(dr\mathcal{V}_{\lambda}(X_{r},P_{r}) -(d\mathbf{Q}_{r})^{2}\big)  \,e^{-\epsilon^{-1}\lambda^{\frac{1}{2}}\mathbf{Q}_{r^{-}}}\Big|^{2}\Big]^{\frac{1}{2}}& \\ &\leq  \frac{\lambda}{\epsilon}\mathbb{E}^{(\lambda)}\Big[ \int_{0}^{\frac{T}{\lambda} }dr\mathcal{V}_{\lambda,2}(X_{r},P_{r})e^{-2\epsilon^{-1}\lambda^{\frac{1}{2}}\mathbf{Q}_{r}}\Big]^{\frac{1}{2}}&\\  & \leq  C_{1}\frac{\lambda}{\epsilon}\mathbb{E}^{(\lambda)}\Big[ \int_{0}^{\frac{T}{\lambda} }dr(1+\lambda\mathbf{Q}_{r})^{5}     \Big]^{\frac{1}{2}}\leq C_{1}'\frac{T\lambda^{\frac{1}{2}} }{\epsilon}.
\end{align*}
The second inequality holds for some $C_{1}>0$ by Part (1) of Prop.~\ref{AMinus} and $|p|\leq 2^{\frac{1}{2}}H^{\frac{1}{2}}(x,p)$. Lem.~\ref{FirstEnergyLem} yields the third inequality for some $C_{1}'>0$.

\end{proof}

\section{The martingale problem}\label{SecMartProb}

In the lemma below, we consider the class of process pairs  $(\frak{p},\frak{m})\in \R^{2}$ such that the first component is an Ornstein-Uhlenbeck process and the second component is a continuous martingale.  With the additional criterion that $\langle \frak{m}\rangle$ is the local time of the process $\frak{p}$ at zero, Lem.~\ref{MartProblem} states that the law for the pair $(\frak{p},\frak{m})$ is determined uniquely as  $(\frak{p},\mathbf{B}_{\frak{l}})$, where $\mathbf{B}$ is a standard Brownian motion independent of $\frak{p}$.  For the process inverse $\frak{s}$ of $\frak{l}$,  we can immediately observe that the process $\mathbf{B}_{t}:=\frak{m}_{\frak{s}_{t}}$ is a Brown motion since it is a continuous martingale with quadratic variation $t$.  Thus the question concerns the independence of $\mathbf{B}$ from $\frak{p}$.  Lemma~\ref{MartProblem}  is a formulation of the \textit{martingale problem} in the sense of~\cite{Jacod}.  For example, a standard Brownian motion is the unique continuous martingale $\frak{m}$ satisfying that $\frak{m}_{t}^{2}-t$ is a martingale.  Our criterion could be formulated analogously by demanding that
$$  \frak{m}_{t}^{2}-\frak{l}_{t}   $$
is a martingale.   The proof of the lemma makes  use of the fact that $\frak{l}$ almost surely makes all of its movement on a set of times having measure zero.  If we only needed to show that $\big(\frak{l},\frak{m}\big)$ with the condition above necessarily has the law of $(\frak{l},\mathbf{B}_{\frak{l}})$ for $\mathbf{B}$ independent of $\frak{l}$, then we could apply the argument in~\cite[Thm. 4.21]{Hopfner} since $\frak{l}$ is the process inverse of the one-sided Levy process $\frak{s}$.   However, $\frak{p}$ contains information that $\frak{l}$ does not so there is the logical possibility that  $\frak{p}$ and $\mathbf{B}$ are still dependent.

\begin{lemma}\label{MartProblem}
Consider a process $(\frak{p},\frak{m})\in \R^{2}$ and let $\mathbb{F}_{t}$ be the filtration generated by it. Let $\frak{p}$ be a copy of the  Ornstein-Uhlenbeck process satisfying the Markov property with respect to $\mathbb{F}_{t}$ and $\frak{l}$ be the local time of $\frak{p}$ at zero.  Moreover, let $\frak{m}$ be continuous, a martingale with respect to $\mathbb{F}_{t}$, and have predictable quadratic variation satisfying $\langle \frak{m}\rangle=\frak{l}$.  It follows that   
$(\frak{p},\frak{m})$ is equal in law to $(\frak{p},\mathbf{B}_{\frak{l}})$, where $\mathbf{B}$ is a standard Brownian motion independent of $\frak{p}$.

\end{lemma}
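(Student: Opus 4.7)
The plan is to reduce the lemma to an application of Knight's theorem on orthogonal continuous martingales. The first step is to extract from $\frak{p}$ its driving Brownian motion by setting $\mathbf{B}_t' := \frak{p}_t - \frak{p}_0 + \frac{1}{2}\int_0^t \frak{p}_r\, dr$. Because $\frak{p}$ is an Ornstein-Uhlenbeck process with the Markov property relative to the larger filtration $\mathbb{F}_t$, the process $\mathbf{B}'$ is a continuous $\mathbb{F}_t$-martingale with predictable quadratic variation $\langle \mathbf{B}'\rangle_t = t$, hence an $\mathbb{F}_t$-Brownian motion by L\'evy's characterization. Moreover, the full trajectory of $\frak{p}$ is a measurable functional of $(\frak{p}_0,\mathbf{B}')$, so producing a Brownian motion independent of $\mathbf{B}'$ is equivalent to producing one independent of $\frak{p}$.

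The core step I would carry out next is to show $\langle \frak{m},\mathbf{B}'\rangle \equiv 0$. By the Kunita-Watanabe inequality, the signed random measure $d\langle \frak{m},\mathbf{B}'\rangle_s$ is absolutely continuous with respect to both $d\langle \frak{m}\rangle_s = d\frak{l}_s$ and $d\langle \mathbf{B}'\rangle_s = ds$. Since $\frak{l}$ makes all of its increase on the level set $\{s:\frak{p}_s = 0\}$, which is almost surely a Lebesgue null set (as can be read off from the Tanaka-Meyer representation~(\ref{ItoLocal})), the measures $d\frak{l}_s$ and $ds$ are mutually singular, forcing $\langle \frak{m},\mathbf{B}'\rangle_t = 0$ for every $t$.

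With orthogonality in hand, I would invoke Knight's theorem for orthogonal continuous martingales. Let $\frak{s}_t := \inf\{r\geq 0:\frak{l}_r > t\}$, which is almost surely finite because the recurrence of the Ornstein-Uhlenbeck process forces $\frak{l}_\infty = \infty$ a.s. Knight's theorem then yields that $\mathbf{B}'$ and $\mathbf{B}_t := \frak{m}_{\frak{s}_t}$ are two independent standard Brownian motions. Combined with the functional dependence of $\frak{p}$ on $\mathbf{B}'$, this implies independence of $\mathbf{B}$ and $\frak{p}$. To finish, a continuous martingale is constant on every interval of constancy of its predictable quadratic variation, so $\frak{m}_t = \frak{m}_{\frak{s}_{\frak{l}_t}} = \mathbf{B}_{\frak{l}_t}$, yielding the required identification of joint laws.

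The step that I expect to require the most care is the singularity argument: the Kunita-Watanabe domination is routine, but one needs to be firmly confident that $d\frak{l}$ is carried on a Lebesgue-null set of times. This is standard for the local time at a point of a continuous semimartingale whose fluctuations are Brownian-like, but it is the only place in the plan where concrete information about $\frak{p}$ beyond its abstract martingale-problem structure enters.
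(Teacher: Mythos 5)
Your proof is correct, but it takes a genuinely different and more streamlined route than the paper. You establish the key independence via Knight's theorem: after checking $\langle\frak{m},\mathbf{B}'\rangle\equiv 0$ --- the Kunita--Watanabe inequality forces this bracket to be absolutely continuous with respect to both $d\frak{l}_s$ and $ds$, which are mutually singular since $d\frak{l}$ is carried on the Lebesgue-null zero set of $\frak{p}$ --- Knight's theorem delivers $\mathbf{B}'$ and $\mathbf{B}_t:=\frak{m}_{\frak{s}_t}$ as jointly independent Brownian motions (using $\frak{l}_\infty=\infty$ from recurrence), and the remainder (representing $\frak{p}$, started at $0$, as a measurable functional of $\mathbf{B}'$, plus the identity $\frak{m}_t=\mathbf{B}_{\frak{l}_t}$ via constancy of a continuous martingale on flats of its bracket) is bookkeeping. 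The paper instead makes this orthogonality explicit by hand: for each $\epsilon>0$ it partitions time into ``excursion'' intervals (where $|\frak{p}|$ stays bounded away from zero) and ``incursion'' intervals (near zero), assembles the excursion increments of $\mathbf{B}'$ into a Brownian motion $\mathbf{B}^{(\epsilon)}$ that drives a fresh Ornstein--Uhlenbeck process $\frak{p}^{(\epsilon)}$, argues $\frak{p}^{(\epsilon)}\perp\mathbf{B}$ because $\langle\frak{m}\rangle=\frak{l}$ increases only during incursions, and then spends the bulk of the argument on quantitative $\mathit{O}(\epsilon^{\frac{1}{2}-\delta})$ estimates in $\mathbb{E}\big[\sup_{0\leq t\leq T}|\cdot|\big]$ showing $\frak{p}^{(\epsilon)}\rightarrow\frak{p}$. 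Both approaches exploit the same separation of noise supports; yours sidesteps the entire approximation apparatus of the paper's Parts (I)--(II) at the cost of invoking a somewhat less elementary classical theorem.
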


\begin{proof}

By definition the process $\frak{p}$ satisfies the Langevin equation $d\frak{p}_{t}=-\frac{1}{2}\frak{p}_{t}dt+ d\mathbf{B}_{t}'   $ for a standard Brownian motion $\mathbf{B}'$.  Since $\frak{p}$ satisfies the Markov property with respect $\mathbb{F}_{t}$, the Brownian motion $\mathbf{B}'$ must also.   We denote the right-continuous process inverse of $\frak{l}$ by  $\frak{s}$.  The time-changed martingale $\mathbf{B}_{t}=\frak{m}_{\frak{s}_{t}}$ is continuous and has quadratic variation $\langle \mathbf{B}\rangle_{t}=t$, and is thus a copy of Brownian motion.   We will construct a family of processes $\frak{p}^{(\epsilon)}$ such that

\begin{enumerate}[(I).]
\item $\frak{p}^{(\epsilon)}$ is independent of $\mathbf{B}$ for each $\epsilon>0$.

\item As $\epsilon\rightarrow 0$, $\mathbb{E}\big[ \sup_{0\leq t\leq T} \big|\frak{p}_{t}^{(\epsilon)}-\frak{p}_{t}\big|        \big]=\mathit{O}(\epsilon^{\frac{1}{2}-\delta })$ for any $\delta>0$.

\end{enumerate}

The above statements imply that the processes $\mathbf{B}$ and $\frak{p}$ are independent.  Since $\frak{l}$ is the process inverse of $\frak{s}$,  $\frak{m}_{t}=\mathbf{B}_{\frak{l}_{t}}$.  Thus (I) and (II) imply the result.\vspace{.5cm}

\noindent(I). \hspace{.1cm} First, we give definitions that are  prerequisite 
to defining $\frak{p}^{(\epsilon)}$.  If $|\frak{p}_{0}|< \epsilon$ let  the stopping times  $\varsigma_{n},\varsigma_{n}^{\prime}$ be defined such that 
 $\varsigma_{0}=\varsigma_{0}'=\varsigma^{\prime}_{1}=0$ and
\begin{align*}
\varsigma^{\prime}_{n}= \min \{r\in (\varsigma_{n-1},\infty)\,\big| \,|\frak{p}_{r}|\leq \frac{1}{2}\epsilon       \},\quad \quad
\varsigma_{n}= \min \{ r\in (\varsigma^{\prime}_{n},\infty)\,\big| \, |\frak{p}_{r}|\geq  \epsilon           \}.
\end{align*}
Also let  $\mathbf{n}_{t}$ be the number of $\varsigma_{n}$ up to time $t$. If $|\frak{p}_{0}|\geq \epsilon $ then we use the same recursive definition with $\varsigma_{0}=\varsigma^{\prime}_{0}=0$.  The intervals  $[\varsigma_{n}',\varsigma_{n})$, $n\geq 0$   and  $[\varsigma_{n},\varsigma_{n+1}')$, $n\geq 1$  will be referred to as the incursions and excursions respectively.   Let $\tau_{t}$ be the hitting time that 
$$ t= \tau_{t} -\varsigma_{\mathbf{n}_{\tau_{t}}}+ \sum_{n=0}^{\mathbf{n}_{\tau_{t}}-1}\varsigma_{n+1}'-\varsigma_{n}.   $$
In other terms, $\tau_{t}$ is the first time that the total excursion time sums up to $t$.

   Define another copy of Brownian motion 
$\mathbf{B}^{(\epsilon)}$
$$\mathbf{B}_{t}^{(\epsilon)}= \mathbf{B}_{\tau_{t}}'-\mathbf{B}_{\varsigma_{\mathbf{n}_{\tau_{t} }}}' + \sum_{n=0}^{\mathbf{n}_{\tau_{t} }-1}\mathbf{B}_{\varsigma_{n+1}'}'-\mathbf{B}_{\varsigma_{n}}'. $$
 Define $\frak{p}^{(\epsilon)}$ and $\tilde{\frak{p}}^{(\epsilon)}$  to be the solutions of the Langevin equations
 \begin{align*}
  d\frak{p}_{t}^{(\epsilon)}=&-\frac{1}{2}\frak{p}_{t}^{(\epsilon)}dt+d \mathbf{B}_{t}^{(\epsilon)},  \\    d\tilde{\frak{p}}_{t}^{(\epsilon)}=&\chi\big(t\in \cup_{n=0}^{\infty}[\varsigma_{n},\,\varsigma_{n+1}'     ]  \big)\big(-\frac{1}{2}\tilde{\frak{p}}_{t}^{(\epsilon)}dt+  d \mathbf{B}_{t}'\big),  
  \end{align*}
with $\frak{p}_{0}^{(\epsilon)}=\tilde{\frak{p}}_{0}^{(\epsilon)}=\frak{p}_{0}$.  We will use the process $\tilde{\frak{p}}^{(\epsilon)}$ as an intermediary between $\frak{p}^{(\epsilon)}$ and $\frak{p}$ in (II).   

We claim that our construction makes the Brownian motion $ \mathbf{B}^{(\epsilon)}$ independent of $\mathbf{B}$ and thus $\frak{p}^{(\epsilon)}$ is also independent of $\mathbf{B}$.  Construct the stopping time $\gamma_{t}$ and the martingale  $\frak{m}^{(\epsilon)} $  such that 

$$ t= \gamma_{t} -\varsigma_{\mathbf{n}_{\gamma_{t}}}+ \sum_{n=1}^{\mathbf{n}_{\gamma_{t}}-1}\varsigma_{n}-\varsigma_{n}' \quad\text{and}\quad \frak{m}_{t}^{(\epsilon)}= \frak{m}_{\gamma_{t}}-\frak{m}_{\varsigma_{\mathbf{n}_{\gamma_{t} }}} + \sum_{n=1}^{\mathbf{n}_{\gamma_{t} }-1}\frak{m}_{\varsigma_{n}}-\frak{m}_{\varsigma_{n}'}.  $$
Analogously to $\tau_{t}$  the above means that $\gamma_{t}$ is the first time that the duration of all the incursions sums up to $t$.  The martingale $\frak{m}^{(\epsilon)}  $ is a time-change of $\frak{m} $ with  $\frak{m}_{\gamma_{t}}=\frak{m}_{t}^{(\epsilon)}$ in which a portion of the pauses during which $\langle m\rangle=\frak{l}$ remains constant have been cut out.  Since only pauses have been cut out,  $\sigma( \frak{m}^{(\epsilon)})$  contains all of the information regarding $\mathbf{B}$.   However, the $\sigma$-algebras $\sigma(\mathbf{B}^{(\epsilon)})$ and  $\sigma( \frak{m}^{(\epsilon)})$ are independent.  This follows since $\sigma(\mathbf{B}^{(\epsilon)})$ has no information about the incursions--including their durations, and vice versa for $\sigma(\frak{m}^{(\epsilon)})$.

\vspace{.5cm}

\noindent (II).\hspace{.2cm} By the triangle inequality,
\begin{align}\label{Asfalto}
\mathbb{E}\Big[ \sup_{0\leq t\leq T} \big|\frak{p}_{t}^{(\epsilon)}-\frak{p}_{t}\big|        \Big]\leq \mathbb{E}\Big[\sup_{0\leq t\leq T} \big| \frak{p}_{t}^{(\epsilon)}-\tilde{\frak{p}}_{t}^{(\epsilon)}    \big|\Big]+\mathbb{E}\Big[\sup_{0\leq t\leq T} \big|\tilde{\frak{p}}_{t}^{(\epsilon)}-\frak{p}_{t}   \big|   \Big].
\end{align}
  We bound the first and second terms on the right side of~(\ref{Asfalto}) in (i) and (ii) below.   First we show that $\mathbb{E}\big[ \tau_{T}-T   \big] =\mathit{O}(\epsilon) $, which is used in both parts.   A Riemann over-sum using that $4n\geq 2(n+1)$ for $n\geq 1 $ gives the first inequality below.
 \begin{align}
\mathbb{E}\big[ \tau_{T}-T   \big] \leq &  \nonumber \mathbb{E}\big[\tau_{T}\wedge (2T)-T    \big]+4T\sum_{n=1}^{\infty}\mathbb{P}\big[ \tau_{T}\geq 2nT \big] \nonumber    \\
   \leq & \nonumber \mathbb{E}\big[\tau_{T}\wedge (2T)-T    \big]+4T\sum_{n=1}^{\infty} \Big(\sup_{q\in \R}\,\mathbb{P}_{q}\big[ \tau_{T}\geq 2T \big]\Big)^{n}\nonumber \\
 = &   \mathbb{E}\big[\tau_{T}\wedge (2T)-T    \big]+4T \frac{ \mathbb{P}_{0}\big[ \tau_{T}>2T \big] }{1-   \mathbb{P}_{0}\big[ \tau_{T}>2T \big]  }=\mathit{O}(\epsilon) \label{Brenen}.
  \end{align}
In order for the event $\tau_{T}>2nT $ to occur, the random walker must fail to accumulate a duration $T$  of excursion time over $n$ disjoint intervals of length $2T$.  Thus $ \mathbb{P}\big[ \tau_{T}\geq 2nT \big] \leq \big(\sup_{q\in \R}\mathbb{P}_{q}\big[ \tau_{T}\geq 2T \big]\big)^{n}$, as we have used in the second inequality.  The equality in~(\ref{Brenen}) is from summing the geometric series,  and since  $ \mathbb{P}_{q}\big[ \tau_{T}\geq 2T \big]$ is maximized for $q=0$.  The starting point $q=0$ maximizes the probability that $\tau_{T}$ is large (e.g. $\geq 2T$) because the process must travel the furthest to attain a value $|\frak{p}_{t}|\geq \epsilon$ in which the excursion clock may begin to run.

 To show the order equality~(\ref{Brenen}), we  show that $ \mathbb{P}_{0}\big[ \tau_{T}>2T \big]$ and  $\mathbb{E}\big[\tau_{T}\wedge (2T)-T    \big]$ are $\mathit{O}(\epsilon)$.  We first note that 
\begin{align*}
\mathbb{P}_{0}\big[\tau_{T}\geq 2T     \big] & \leq \mathbb{P}_{0}\Big[\int_{0}^{2T}dr\chi\big(|\frak{p}_{r}|\leq \epsilon\big)\geq T   \Big]\\ &\leq \frac{1}{T}\mathbb{E}_{0}\Big[\int_{0}^{2T}dr\chi\big(|\frak{p}_{r}| \leq \epsilon\big)   \Big]=\frac{1}{T}\int_{0}^{2T}dt\int_{[-\epsilon, \epsilon]}dq \frac{ e^{-\frac{ q^{2}   }{2\omega_{t} } }  }{ (2\pi \omega_{t} )^{\frac{1}{2}}   }=\mathit{O}(\epsilon),
\end{align*}
where $\omega_{t}=1-e^{-\frac{1}{2}t}$.   The first inequality uses that the event  $ \tau_{T}\geq 2T$ implies the event $\int_{0}^{2T}dr\chi\big(|\frak{p}_{r}|\leq \epsilon\big)\geq T $ since the incursions have $|\frak{p}_{r}|\leq \epsilon$. The second inequality is Chebyshev's, and the first equality uses that the density $\frac{ e^{-\frac{ q^{2}   }{2\omega_{t} } }  }{ (2\pi \omega_{t} )^{\frac{1}{2}}   }  $ is the explicit solution to Ornstein-Uhlenbeck forward equation, i.e., Kramer's equation, starting from zero.  The other term is similar: 
\begin{align*}
\mathbb{E}\big[\tau_{T}\wedge (2T)-T    \big] & \leq\mathbb{E}\Big[\int_{0}^{2T}dr\chi\big(|\frak{p}_{r}| \leq \epsilon\big)   \Big]\\ &\leq \mathbb{E}_{0}\Big[\int_{0}^{2T}dr\chi\big(|\frak{p}_{r}| \leq \epsilon\big)   \Big]
=\int_{0}^{2T}dt\int_{[-\epsilon, \epsilon]}dq \frac{ e^{-\frac{ q^{2}   }{2\omega_{t} } }  }{ (2\pi \omega_{t} )^{\frac{1}{2}}   }=\mathit{O}(\epsilon).
\end{align*}

  \vspace{.5cm}

\noindent (i).\hspace{.2cm} Notice that $ \frak{p}^{(\epsilon)}$ is a stochastic time-change of    $\tilde{\frak{p}}^{(\epsilon)}   $ with $\frak{p}_{t}^{(\epsilon)}=\tilde{\frak{p}}_{\tau_{t}}^{(\epsilon)} $.  Thus the first term on right side of~(\ref{Asfalto}) is smaller than
\begin{align}
\mathbb{E}\Big[&\sup_{0\leq t\leq T} \big| \frak{p}_{t}^{(\epsilon)}-\tilde{\frak{p}}_{t}^{(\epsilon)}    \big|\Big]\leq 
\mathbb{E}\Big[ \sup_{ \substack{ 0\leq r\leq \tau_{T}-T  \\ 0\leq t\leq T   }   }\big| \frak{p}_{t+r}^{(\epsilon)}-\frak{p}_{t}^{(\epsilon)}  \big|  \Big] \nonumber \\
 &= \mathbb{E}\Big[\mathbb{E}\Big[ \sup_{ \substack{ 0\leq r\leq \tau_{T}-T  \\  0\leq t\leq T   }   }\big| \frak{p}_{t+r}^{(\epsilon)}-\frak{p}_{t}^{(\epsilon)}  \big|      \,\Big|\,\tau_{T}-T \Big]\Big]=\mathbb{E}\Big[\delta_{\tau_{T}-T}(v) \mathbb{E}\Big[ \sup_{ \substack{ 0\leq r\leq v  \\ 0\leq t\leq T   }   }\big| \frak{p}_{t+r}^{(\epsilon)}-\frak{p}_{t}^{(\epsilon)}  \big|     \Big]\Big]
\nonumber \\ &\leq \mathbb{E}\Big[ (1-e^{-\frac{1}{2}(\tau_{T}-T)  })\sup_{  0\leq t\leq \tau_{T}  }\big| \frak{p}_{t}^{(\epsilon)} \big|     \Big]+\mathbb{E}\Big[ \delta_{\tau_{T}-T}(v) \mathbb{E}\Big[ \sup_{ \substack{ 0\leq r\leq v  \\ 0\leq t\leq T   }   }\Big|  \int_{t}^{t+r}d\mathbf{B}_{s}^{(\epsilon)}e^{-\frac{1}{2}(t+r-s)}  \Big|     \Big]\Big] \label{Minkowski}
\end{align}
 The second equality follows from the independence of the process $ \frak{p}^{(\epsilon)} $ and the difference $\tau_{T}-T$.  
For the last inequality, we have used the triangle inequality with the explicit form in the first equality below:
\begin{align}
\frak{p}_{t+r}^{(\epsilon)}-\frak{p}_{t}^{(\epsilon)}&=(e^{-\frac{1}{2}r}-1)\frak{p}_{t}^{(\epsilon)}+ \int_{t}^{t+r}d\mathbf{B}_{s}^{(\epsilon)}e^{-\frac{1}{2}(r+t-s)} \nonumber \nonumber \\ &=(e^{-\frac{1}{2}r}-1)\frak{p}_{t}^{(\epsilon)}+\mathbf{B}_{t+r}^{(\epsilon)}-\mathbf{B}_{t}^{(\epsilon)} -\frac{1}{2}\int_{t}^{t+r}ds \big(\mathbf{B}_{s+t}^{(\epsilon)}-\mathbf{B}_{t}^{(\epsilon)}\big)e^{-\frac{1}{2}(r+t-s)} \label{Tray}.
\end{align}
The second equality is Ito's product rule.  Note that for $m\geq 1$
\begin{align}\label{CheapBnd}
\mathbb{E}\Big[  \sup_{ 0\leq v\leq r   }\Big|  \int_{t}^{t+v}d\mathbf{B}_{s}^{(\epsilon)}e^{-\frac{1}{2}(t+r-s)}  \Big|^{2m}  \Big]\leq &  2^{m}\mathbb{E}\Big[  \sup_{ 0\leq v\leq r   }\Big| \mathbf{B}_{t+v}^{(\epsilon)}-\mathbf{B}_{t}^{(\epsilon)}  \Big|^{2m}  \Big] \nonumber \\ \leq & 2^{m}\big(\frac{2m}{2m-1}\big)^{2m}\mathbb{E}\Big[ \Big| \mathbf{B}_{t+r}^{(\epsilon)}-\mathbf{B}_{t}^{(\epsilon)}  \Big|^{2m}  \Big]\leq m! \big(\frac{4m}{2m-1}\big)^{2m}r^{m}.
\end{align}
The first inequality comes from rewriting $\int_{t}^{t+v}d\mathbf{B}_{s}^{(\epsilon)}e^{-\frac{1}{2}(t+r-s)}$ as in~(\ref{Tray}), applying the triangle inequality, and using that $\int_{t}^{t+r}ds e^{-\frac{1}{2} (t+r-s)}\leq 2$. The second inequality is Doob's, and the last is a computation of the Gaussian moment.

For the first term on the right side of~(\ref{Minkowski}), we have following routine inequalities using that $\mathbb{E}[ \tau_{T}-T]\leq C\epsilon$ for some $C>0$:
\begin{align*}
\mathbb{E}\Big[ (1-e^{-\frac{1}{2}(\tau_{T}-T)  })\sup_{  0\leq t\leq \tau_{T}  }\big| \frak{p}_{t}^{(\epsilon)}  \big|     \Big] & \leq \mathbb{E}\Big[ \big(1-e^{-\frac{1}{2}(\tau_{T}-T)  }\big)^{2}\Big]^{\frac{1}{2}} \mathbb{E}\Big[  \sup_{  0\leq t\leq \tau_{T}  }\big| \frak{p}_{t}^{(\epsilon)}  \big|^{2}     \Big]^{\frac{1}{2}} \\ &\leq \mathbb{E}\big[ \big(\tau_{T}-T)\wedge 1 \big]^{\frac{1}{2}} \mathbb{E}\Big[ \sup_{  0\leq t\leq \tau_{T}   }\big| \frak{p}_{t}^{(\epsilon)}  \big|^{2}   \Big]^{\frac{1}{2}} \\ &  \leq C\epsilon^{\frac{1}{2}} \mathbb{E}\big[ \big| \frak{p}_{0}  \big|^{2}     \big]^{\frac{1}{2}}+C\epsilon^{\frac{1}{2}} \mathbb{E}\Big[ \sup_{ 0\leq t\leq \tau_{T}   }\Big|  \int_{0}^{t}d\mathbf{B}_{r}^{(\epsilon)}e^{-\frac{1}{2}(t-r)}  \Big|^{2}     \Big]^{\frac{1}{2}} \\ &\leq C\epsilon^{\frac{1}{2}} \mathbb{E}\big[ \big| \frak{p}_{0}  \big|^{2}     \big]^{\frac{1}{2}}+C\epsilon^{\frac{1}{2}}2\mathbb{E}\big[\tau_{T} \big]^{\frac{1}{2}}=\mathit{O}(\epsilon^{\frac{1}{2}}).
\end{align*}
The last inequality follows from the independence of $\tau_{T}$ and the Brownian motion $\mathbf{B}^{(\epsilon)} $ and   ~(\ref{CheapBnd}).

Now we bound  the second  term on the right side of~(\ref{Minkowski}).   We have the following relations for $m\geq 1$:
\begin{align*}
\mathbb{E}\Big[ \sup_{ \substack{ 0\leq r\leq v  \\ 0\leq t\leq T   }   }\Big|  \int_{0}^{r}d\mathbf{B}_{t+s}^{(\epsilon)}e^{-\frac{1}{2}(r-s)}  \Big|     \Big] & =\mathbb{E}\Big[ \sup_{ \substack{ 0\leq r\leq v  \\ 0\leq z+r\leq T+v   }   }\Big|  \int_{z}^{z+r}d\mathbf{B}_{s}^{(\epsilon)}e^{-\frac{1}{2}(z+r-s) }  \Big|     \Big] \\& \leq 2 \mathbb{E}\Big[ \sup_{0\leq n \leq \lfloor \frac{T+v}{v}\rfloor } \sup_{  0\leq r\leq v     }\Big|  \int_{nv}^{nv+r}d\mathbf{B}_{s}^{(\epsilon)}e^{-\frac{1}{2}(nv+r-s )}  \Big|        \Big]\\ &\leq 2\mathbb{E}\Big[ \sum_{n=0}^{\lfloor \frac{T+v}{v} \rfloor }  \sup_{  0\leq r\leq v    }\Big|  \int_{z}^{z+r}d\mathbf{B}_{s}^{(\epsilon)}e^{-\frac{1}{2}(z+r-s) }  \Big|^{2m}     \Big]^{\frac{1}{2m}}\\  &= 2\left\lfloor \frac{T+v}{v} \right\rfloor^{\frac{1}{2m}}\mathbb{E}\Big[   \sup_{  0\leq r\leq v    }\Big|  \int_{0}^{r}d\mathbf{B}_{s}^{(\epsilon)}e^{-\frac{1}{2}(r-s) }  \Big|^{2m}     \Big]^{\frac{1}{2m}} \\ &\leq 2(m!)^{\frac{1}{2m}}\frac{4m}{2m-1}   \left\lfloor \frac{T+v}{v} \right\rfloor^{\frac{1}{2m}} v^{\frac{1}{2}}< 8 m^{\frac{1}{2}} |T+v|^{\frac{1}{2m}}|v|^{\frac{m-1}{2m}}.
\end{align*}  
The  second inequality is $ (\sup_{n} a_{n})^{2m}\leq \sum_{n}a_{n}^{2m}$ followed by Jensen's inequality,  the second equality is from the stationarity of the increments for $\mathbf{B}^{(\epsilon)}$, and the third inequality is from~(\ref{CheapBnd}).  With the above 
\begin{align*}
\mathbb{E}\Big[ \delta_{\tau_{T}-T}(v) \mathbb{E}\Big[ \sup_{ \substack{ 0\leq r\leq v  \\ 0\leq t\leq T   }   }\Big|  \int_{t}^{t+r}d\mathbf{B}_{s}^{(\epsilon)}e^{-\frac{1}{2}(t+r-s)}  \Big|     \Big]\Big] & \leq 8m^{\frac{1}{2}}\mathbb{E}\Big[ |\tau_{T}|^{\frac{1}{2m}}|\tau_{T}-T|^{\frac{m-1}{2m}}\Big] \\ &\leq  8m^{\frac{1}{2}}\mathbb{E}\big[ \tau_{T}^{\frac{1}{m+1}}\big]^{\frac{m+1}{2m}}\mathbb{E}\big[  \tau_{T}-T\big]^{\frac{m-1}{2m}}=\mathit{O}(\epsilon^{\frac{m-1}{2m}}),
\end{align*}
where the second inequality is Holder's.  The value $m$ can be picked to make the power of $\epsilon>0$ arbitrarily close to $\frac{1}{2}$.

\vspace{.5cm}

\noindent (ii). \hspace{.2cm} Notice that $\frak{p}$ and $\tilde{\frak{p}}^{(\epsilon)}$  satisfy the equations
\begin{align}
\frak{p}_{t}=&e^{-\frac{1}{2}t}\frak{p}_{0}+\int_{0}^{t}d\mathbf{B}_{r}'e^{-\frac{1}{2}(t-r)} \label{Orig}\\
\tilde{\frak{p}}_{t}^{(\epsilon)}=&e^{-\frac{1}{2}t}\frak{p}_{0}+\int_{0}^{t}d\mathbf{B}_{r}'\chi_{r}^{(\epsilon)}e^{-\frac{1}{2}(t-r)} +\frac{1}{2}\int_{0}^{t}dr\tilde{\frak{p}}_{r}^{(\epsilon)}e^{-\frac{1}{2}(t-r)}(1-\chi_{r}^{(\epsilon)}) \label{Epsified} ,
\end{align}
where $\chi_{r}^{(\epsilon)}=\chi\big(r\in \cup_{n=0}^{\infty}[\varsigma_{n},\,\varsigma_{n+1}'     ]  \big)$.  The Ito product rule for the martingale $\int_{0}^{t}d\mathbf{B}_{r}'\big(1-\chi_{r}^{(\epsilon)}\big)$ gives  
\begin{align}\label{Arab}
\int_{0}^{t}d\mathbf{B}_{r}'\big(1-\chi_{r}^{(\epsilon)}\big)e^{-\frac{1}{2}(t-r)}= \int_{0}^{t}d\mathbf{B}_{r}'\big(1-\chi_{r}^{(\epsilon)}\big)-\frac{1}{2}\int_{0}^{t}dr e^{-\frac{1}{2}(t-r)}\int_{0}^{r}d\mathbf{B}_{s}'\big(1-\chi_{s}^{(\epsilon)}\big).
\end{align}
Similarly to~(\ref{CheapBnd}),  
\begin{align}
\mathbb{E}\Big[&  \sup_{ 0\leq t\leq T  }\Big|  \int_{0}^{t}d\mathbf{B}_{r}'\big(1-\chi_{r}^{(\epsilon)}\big)e^{-\frac{1}{2}(t-r)}  \Big|^{2}  \Big]\leq  4\mathbb{E}\Big[  \sup_{ 0\leq t\leq T }\Big|\int_{0}^{t}d\mathbf{B}_{t}'\big(1-\chi_{t}^{(\epsilon)}\big) \Big|^{2}  \Big] \nonumber \\  & \leq 16\mathbb{E}\Big[ \Big|\int_{0}^{T}d\mathbf{B}_{t}'\big(1-\chi_{t}^{(\epsilon)}\big)    \Big|^{2}  \Big]=16\mathbb{E}\Big[ \int_{0}^{T}dt\big(1-\chi_{t}^{(\epsilon)}\big)      \Big]\leq 16\mathbb{E}\Big[ \int_{0}^{T}dt\chi\big(|\frak{p}_{t}|<\epsilon\big)     \Big] \nonumber \\ &\leq 16\mathbb{E}_{0}\Big[ \int_{0}^{T}dt\chi\big(|\frak{p}_{t}|<\epsilon\big)     \Big]=\int_{0}^{T}dt\int_{[-\epsilon, \epsilon]}dq\frac{ e^{-\frac{ q^{2}   }{2\omega_{t} } }  }{ (2\pi \omega_{t} )^{\frac{1}{2}}   }=\mathit{O}(\epsilon).\label{Punjabi}
\end{align}
The first inequality is from~(\ref{Arab}) with the triangle inequality, and the second inequality is Doob's.  The fourth inequality uses that the initial value $\frak{p}_{0}=0$ will maximize the expectation of the quantity $\int_{0}^{T}dt\chi_{t}\big(|\frak{p}_{t}|<\epsilon\big)$.

Using~(\ref{Orig}) and~(\ref{Epsified}) with the triangle inequality, we have the first inequality below:
\begin{align}
 \mathbb{E}\big[& \sup_{0\leq t\leq T} \big| \tilde{\frak{p}}_{t}^{(\epsilon)}-\frak{p}_{t}   \big|  \Big] \nonumber \\ &\leq \mathbb{E}\Big[ \sup_{0\leq t\leq T} \Big| \int_{0}^{t}d\mathbf{B}_{r}'e^{-\frac{1}{2}(t-r)}(1-\chi_{r}^{(\epsilon)} )  \Big|  \Big] + \mathbb{E}\Big[ \sup_{0\leq t\leq T} \Big| \int_{0}^{t}dr\tilde{\frak{p}}_{r}^{(\epsilon)}e^{-\frac{1}{2}(t-r)}(1-\chi_{r}^{(\epsilon)})  \Big|  \Big] \nonumber  \\  &\leq \mathit{O}(\epsilon^{\frac{1}{2}})+  \mathbb{E}\Big[ \sup_{0\leq t\leq T} \big|\tilde{\frak{p}}_{t}^{(\epsilon)}\big|^{2}\Big]^{\frac{1}{2}}  \mathbb{E}\Big[ \Big(\int_{0}^{T}dt\big(1-\chi_{t}^{(\epsilon)}\big) \Big)^{2}  \Big]^{\frac{1}{2}} \nonumber  \\ &\leq \mathit{O}(\epsilon^{\frac{1}{2}})+  T^{\frac{1}{2}}\mathbb{E}\Big[ \sup_{0\leq t\leq T} \big|\frak{p}_{r}\big|^{2}\Big]^{\frac{1}{2}}  \mathbb{E}\Big[ \int_{0}^{T}dt\big(1-\chi_{t}^{(\epsilon)}\big)  \Big]^{\frac{1}{2}} =\mathit{O}(\epsilon^{\frac{1}{2}}).\label{Fellon}
\end{align}
The second inequality uses~(\ref{Punjabi}) for the first term and  Holder's equality twice  for the second term.  The second inequality follows from the fact that $ \tilde{\frak{p}}^{(\epsilon)}_{\tau_{t}}$ has the same law as $\frak{p}_{t}$ and  $\tau_{t}\geq t$.  In other words, $\frak{p}$ has the same law as a sped-up version of   $ \tilde{\frak{p}}^{(\epsilon)}$.   Finally,  $\mathbb{E}\Big[ \int_{0}^{T}dt\big(1-\chi_{t}^{(\epsilon)}\big)  \Big] =\mathit{O}(\epsilon)$ by~(\ref{Punjabi}).

\end{proof}

\section{Nummelin splitting}\label{SecNumSplit}

We will now summarize the particular splitting structure defined in~\cite[Sect.  2]{Previous}, which extends the state space of the process.  This construction is contained in the first two components of the split process introduced in~\cite{Loch}.   The resulting process behaves nearly as though the state space contains a recurrent atom.  This has the advantage that the life cycles between returns to the ``atom" are nearly uncorrelated.  To do this we first introduce a resolvent chain embedded in the original process.  We then split the chain using the standard technique~\cite{Athreya,Nummelin}, and we extend the resolvent chain to a non-Markovian process that contains an embedded version of the original process.

Let $e_{m}$, $m\in \mathbb{N}$ be mean one exponential random variables that  are independent of each other and of the process $S_{t}=(X_{t},P_{t})\in \Sigma$.  Define $\tau_{n}:=\sum_{m=1}^{n}e_{m}$, and  by convention, we set $\tau_{0}=0$.  The $\tau_{n}$ will be referred to as the \textit{partition times}.  Define $\mathbf{N}_{t}$ to be the number of non-zero $\tau_{n}$ less than $t$, and the Markov chain $\sigma_{n}:=(X_{\tau_{n}},P_{\tau_{n}})\in \Sigma$,  
 which is referred to as the \textit{resolvent chain}.  The  transition kernel $\mathcal{T}_{\lambda}$ for the resolvent chain, which acts on functions from the left and on  measures from the right, has the form   
$$\mathcal{T}_{\lambda} = \int_{0}^{\infty}dr e^{-r+r\mathcal{L}_{\lambda}}=(1-\mathcal{L}_{\lambda} )^{-1},       $$
 where $\mathcal{L}_{\lambda}$ is the backward Markov generator for the process.   The resolvent chain has the same invariant probability density~(\ref{MaxBolt}) as the original process.   By Nummelin splitting, which we outline presently,  the state space  $\Sigma$ is extended to   $\tilde{\Sigma}=\Sigma\times \{0,1\}$  in order to construct a chain $(\tilde{\sigma}_{n})\in \tilde{\Sigma}$ with a \textit{recurrent atom} and such that the statistics for $(\sigma_{n})$ are embedded in the first component of $(\tilde{\sigma}_{n})$.  For a Markov chain, an \textit{atom} is a subset of the state space such that the transition measure is independent of the element within the subset.  The atom is \textit{recurrent} if the event of returning to the atom  in the future has probability one.

  A probability measure $\nu$ on $\Sigma$ paired with a non-zero function $h:\Sigma\rightarrow [0,1] $ are said to satisfy the \textit{minorization condition} with respect to $\mathcal{T}_{\lambda}$  if
\begin{align}\label{NummelinCrit}
 \mathcal{T}_{\lambda}(s_{1},ds_{2})\geq h(s_{1}) \nu(ds_{2}).   
 \end{align}
By Part (1) of \cite[Prop. 2.3]{Previous}, there exists a $\mathbf{u}>0$ such that
\begin{align}\label{Philobuster}
 h(s)= \mathbf{u}  \frac{ \chi\big(H(s)\leq l   \big)   }{ U } \hspace{1cm} \text{and} \hspace{1cm}  \nu(ds)=   ds\frac{ \chi\big( H(s)\leq l   \big)   }{U},   \end{align}
satisfy the minorization condition,
where  $l=1+2\sup_{x}V(x)$ and $U>0$ is the normalization constant of $\nu$.  The specific choice of $h$ and $\nu$ satisfying~(\ref{NummelinCrit}) is not important in this section although we will take them to be defined as in~(\ref{Philobuster}) for future sections.

     We define the following forward transition operator $\tilde{\mathcal{T}}_{\lambda}$, which sends  the state $(s_{1},z_{1})\in \tilde{\Sigma}$ to the infinitesimal region $(ds_{2}, z_{2})$ with measure:
\begin{align*}
\tilde{\mathcal{T}}_{\lambda}( s_{1},z_{1};ds_{2},z_{2})=\left\{  \begin{array}{ccc} \frac{1-h(s_{2}) }{1-h(s_{1})}  \big( \mathcal{T}_{\lambda}-  h \otimes \nu\big) (s_{1},ds_{2})            & \hspace{.5cm} & z_{1}=z_{2}=0 , \\  \frac{h(s_{2}) }{1-h(s_{1})}  \big( \mathcal{T}_{\lambda}-  h\otimes \nu\big) (s_{1},ds_{2})         &  \hspace{.5cm}& z_{1}=1-z_{2}=0 ,   \\ \big(1- h(s_{2})\big)  \nu (ds_{2})             & \hspace{.5cm} & z_{1}=1- z_{2}=1 , \\ h(s_{2}) \nu(ds_{2})    & \hspace{.5cm}& z_{1}=z_{2}=1. 
    \end{array} \right.  
\end{align*}
Given a measure $\mu$ on $\Sigma$, we refer to its \textit{splitting} $\tilde{\mu}$ as the measure on $\tilde{\Sigma}$ given by 
\begin{align}\label{MeasureSplitting}
\tilde{\mu}(ds,z)=  \chi(z=0)\big(1-h(s)\big)\mu(ds)+\chi(z=1)h(s)\mu(ds). 
\end{align}
In particular, the split chain is taken to have initial distribution given by the splitting of the initial distribution for the original (pre-split) chain.    The invariant measure for the chain $(\tilde{\sigma}_{n})$  is the splitting $\tilde{\Psi}_{\infty,\lambda}$   of the Maxwell-Boltzmann distribution   defined in~(\ref{MaxBolt}).  The split chain is positive-recurrent for any $\lambda>0$ since the original process is positive-recurrent and, in fact, exponentially ergodic to $\Psi_{\infty,\lambda}$.   The jump rates from $(s_{1},1)$ are independent of $s_{1}\in \Sigma$, and thus the set $\Sigma\times 1 \subset \tilde{\Sigma} $ is an atom.  The atom is recurrent since the original chain is positive-recurrent with stationary state $\Psi_{\infty,\lambda}$  and  $\tilde{\Psi}_{\infty,\lambda}(\Sigma\times 1)=\Psi_{\infty,\lambda}(h)>0$.    Notice that according to the above transition rates, the probability that $z_{2}=1$ is $h(s_{2})$ when given $s_{1}$, $s_{2}$, and $z_{1}$.

  Using the law for the split chain $\tilde{\sigma}_{n}\in \tilde{\Sigma}$ determined by the transition rates $\tilde{\mathcal{T}}_{\lambda}$ above, we may construct a split process $(\tilde{S}_{t})\in \tilde{\Sigma}$ and a sequence of times $\tilde{\tau}_{n}$ with the recipe below.  The $\tilde{\tau}_{n}$ should be thought of as the partition times $\tau_{n}$ embedded in the split statistics although we temporarily denote them with the tilde to emphasize their axiomatic role in the construction of the split process.  Let $\tilde{\tau}_{n}$ and $\tilde{S}_{t}=(S_{t},Z_{t})$ be such that 
\begin{enumerate}
\item  $0=\tilde{\tau}_{0}$, $\tilde{\tau}_{n}\leq \tilde{\tau}_{n+1}$, and $\tilde{\tau}_{n}\rightarrow \infty$ almost surely.  

\item The chain $(\tilde{S}_{\tilde{\tau}_{n}})$ has the same law as $(\tilde{\sigma}_{n})$.  

\item  For $t\in [\tilde{\tau}_{n},\tilde{\tau}_{n+1})$, then $Z_{t}=Z_{\tilde{\tau}_{n}}$.

\item  Conditioned on the information known up to time $\tilde{\tau}_{n}$ for $\tilde{S}_{t}$, $t\in [0,\tilde{\tau}_{n}]$ and $\tilde{\tau}_{m}$, $m\leq n$, and also the value $\tilde{S}_{\tilde{\tau}_{n+1} }$, the law for the trajectories $S_{t}$, $t\in[\tilde{\tau}_{n},\tilde{\tau}_{n+1}]$ (which includes the length $\tilde{\tau}_{n+1}-\tilde{\tau}_{n}$) agrees with the law for the original dynamics conditioned on knowing the values $S_{\tilde{\tau}_{n}}$ and $S_{\tilde{\tau}_{n+1}}$.   
\end{enumerate}
The marginal distribution for the first component $S_{t}$ agrees with the original process and the times $\tilde{\tau}_{n}$ are independent mean one  exponential random variables which are independent of $S_{t}$. Of course the times $\tilde{\tau}_{n}$ are not independent of the process $\tilde{S}_{t}$, and we emphasize that the increment $ \tilde{\tau}_{n+1}-\tilde{\tau}_{n}$ is not necessarily exponential given the state $\tilde{S}_{\tilde{\tau}_{n} }$.  The process $\tilde{S}_{t}$ is not Markovian due to the conditioning in (4), although the chain $(\tilde{S}_{\tilde{\tau}_{n}})$ is  Markovian.   By~\cite{Loch} we can construct a Markov process by including an extra component to the process: the triple $(S_{t}, Z_{t}, S_{\tau(t)})\in \Sigma\times \{0,1\}\times \Sigma$ is Markovian, where $\tau(t)$ is the first partition time to occur after time $t$.  We refer to the statistics of the split process by $\tilde{\mathbb{E}}^{(\lambda)}$ and $\tilde{\mathbb{P}}^{(\lambda)}$ for expectations and probabilities, respectively.  We will neglect the tilde from the symbol $\tilde{\tau}_{n}$ for the remainder of the text.

Note that once we have defined the split process $\tilde{S}_{t}$, we can proceed to define the life cycles.  Let $R_{m}'$, $m\geq 1$ be the time $\tau_{\tilde{n}_{m}}$ for $\tilde{n}_{m}=\textup{min} \{ n\in \mathbb{N}\,\big|\,\sum_{r=0}^{n}\chi(Z_{\tau_{r}}= 1)  =m  \}   $.   The random variable $R_{m}'$ is the $m$th partition time corresponding to a visit of the atom set $\Sigma\times 1$, and we set $R_{0}'=0$ by convention.  We define $R_{m}$ to be the partition time following $R_{m}'$.  The $m$th life cycle is the time interval $[R_{m},R_{m+1})$.  
Successive life cycle trajectories over $[R_{n-1},R_{n})$ and $[R_{n},R_{n+1})$ are obviously not independent since  a.s. $S_{R_{n}^{-} }=S_{R_{n} }$.  However, non-successive life cycles are pairwise independent.   When considering the random variables $\int_{R_{n}}^{R_{n+1}}dr\frac{dV}{dx}(X_{r})$, the correlations between successive terms can be removed by adding and subtracting certain resolvent terms as seen in the summand in the lemma below.      

Let  $\tilde{N}_{t}$ be the number of $R_{n}'$ to have occurred up to time $t$.  Define $\tilde{\mathcal{F}}_{t}'$ to be the filtration containing all information for the partition times $\tau_{n}$ and the split process $\tilde{S}_{t}$ before time  $R_{n+1}$ where $t\in [R_{n}',R_{n+1}')$.  Also define $\frak{R}^{(\lambda)}  $ as the reduced resolvent of the backward generator $\mathcal{L}_{\lambda}$ corresponding to the master equation~(\ref{TheModel}).  The reduced resolvent formally satisfies $\frak{R}^{(\lambda)}=\int_{0}^{\infty}dr e^{r\mathcal{L}_{\lambda}}$ on elements $g\in L^{\infty}(\Sigma)$ with $\Psi_{\infty,\lambda}(g)=0$.  Notice that the martingale defined in the lemma below resembles~(\ref{MartConst}).

\begin{lemma}\label{LemKeyMart}
Let the process  $\tilde{M}_{t}$ be defined as
$$ \tilde{M}_{t}:= \sum_{n=1}^{ \tilde{N}_{t} }\Big(\int_{R_{n}}^{R_{n+1}}dr\frac{dV}{dx}(X_{r})-\big(\frak{R}^{(\lambda)}\frac{dV}{dx}\big)( S_{R_{n}} ) +\big(\frak{R}^{(\lambda)}\,\frac{dV}{dx}\big)( S_{R_{n+1}} ) \Big). $$

 The process  $\tilde{M}_{t}$ is a martingale with respect to the filtration $\tilde{\mathcal{F}}_{t}'$.    Moreover, the predictable quadratic variation $\langle \tilde{M}\rangle_{t}$ has the form 
$$
   \langle\tilde{M}\rangle_{t}= \sum_{n=1}^{\tilde{N}_{t}}\check{\upsilon}_{\lambda}\big(S_{R_{n}}   \big),
$$
where $\check{\upsilon}_{\lambda}:\Sigma\rightarrow \R^{+}$ is defined as
  \begin{align*} \check{\upsilon}_{\lambda}\big(s  \big):= & 2\tilde{\mathbb{E}}^{(\lambda)}_{\tilde{\delta}_{s} }\Big[\int_{0}^{R_{1}}dr\frac{dV}{dx}(X_{r})\big(\frak{R}^{(\lambda)}\,\frac{dV}{dx}\big)(S_{r} ) \Big]\\ &+\int_{\Sigma}d\nu (s')\Big(\big(\frak{R}^{(\lambda)}\,\frac{dV}{dx}\big)( s' )\Big)^{2}
  -\Big( \big(\frak{R}^{(\lambda)}\,\frac{dV}{dx}\big)( s )    \Big)^{2}.
 \end{align*}  
In the above, $\tilde{\delta}_{s}$ is the splitting of the $\delta$-distribution at $s\in \Sigma$; see~(\ref{MeasureSplitting}). 

\end{lemma}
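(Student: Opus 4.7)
The plan is to recognise the $n$th summand in the definition of $\tilde{M}_t$ as the increment $Y_n := M_{R_{n+1}} - M_{R_n}$ of the local martingale
\[ M_t := (\frak{R}^{(\lambda)} g)(S_t) - (\frak{R}^{(\lambda)} g)(S_0) + \int_0^t g(S_r)\,dr, \qquad g := \tfrac{dV}{dx}, \]
so that $\tilde{M}_t = M_{R_{\tilde{N}_t + 1}} - M_{R_1}$ telescopes. That $M$ is a local martingale for the original process follows from Dynkin's formula applied to $F := \frak{R}^{(\lambda)} g$ together with the identity $\mathcal{L}_\lambda F = -g$; the reduced resolvent is well-defined because the periodicity of $V$ forces $\Psi_{\infty,\lambda}(g) = 0$.

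For the martingale property of $\tilde{M}_t$ with respect to $\tilde{\mathcal{F}}_t'$, the structural input from the Nummelin splitting is that the state $\tilde{S}_{R_m}$ immediately following any atom visit has distribution $\tilde{\nu}$ (the splitting of $\nu$), regardless of the prior history. Consequently $\tilde{\mathbb{E}}^{(\lambda)}_{\tilde{s}}[(\frak{R}^{(\lambda)} g)(S_{R_1})] = \int (\frak{R}^{(\lambda)} g)\,d\nu =: c$ for every $\tilde{s}$, and iterating the cycle decomposition of $\int_0^\infty g(S_r)\,dr$ (whose expectation is $(\frak{R}^{(\lambda)} g)(s)$) gives $\tilde{\mathbb{E}}^{(\lambda)}_{\tilde{s}}[\int_0^{R_1} g(S_r)\,dr] = (\frak{R}^{(\lambda)} g)(s) - c$. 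Together these yield $\tilde{\mathbb{E}}^{(\lambda)}_{\tilde{s}}[Y_1] = 0$, and the strong Markov property at $R_n$ then produces $\mathbb{E}[Y_{n+1} | \tilde{\mathcal{F}}_{R_n'}'] = 0$.

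For the quadratic variation, I use that $\tilde{\mathcal{F}}_{R_{n-1}'}'$ resolves $S_{R_n}$ while leaving $Z_{R_n}$ averaged under its conditional distribution, which by the atom transition rule is $\mathrm{Bernoulli}(h(S_{R_n}))$; the conditional law of $\tilde{S}_{R_n}$ given $\tilde{\mathcal{F}}_{R_{n-1}'}'$ is therefore exactly $\tilde{\delta}_{S_{R_n}}$. This identifies $\mathbb{E}[Y_n^2 | \tilde{\mathcal{F}}_{R_{n-1}'}']$ with $\tilde{\mathbb{E}}^{(\lambda)}_{\tilde{\delta}_{S_{R_n}}}[Y_1^2]$, and the task is to show this equals $\check{\upsilon}_\lambda(S_{R_n})$. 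Writing $Y_1 = A - B + C$ with $A = \int_0^{R_1} g(S_r)\,dr$, $B = (\frak{R}^{(\lambda)} g)(s)$, $C = (\frak{R}^{(\lambda)} g)(S_{R_1})$, the relation $\tilde{\mathbb{E}}_{\tilde{\delta}_s}[A] + \tilde{\mathbb{E}}_{\tilde{\delta}_s}[C] = B$ cancels the linear-in-$B$ terms after expanding the square, leaving
\[ \tilde{\mathbb{E}}^{(\lambda)}_{\tilde{\delta}_s}[Y_1^2] = \tilde{\mathbb{E}}[A^2] + 2\tilde{\mathbb{E}}[AC] + \tilde{\mathbb{E}}[C^2] - B^2. \]
The boundary piece $\tilde{\mathbb{E}}[C^2] = \int (\frak{R}^{(\lambda)} g)^2\,d\nu$ is immediate from $\tilde{S}_{R_1} \sim \tilde{\nu}$. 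For the remaining identity
\[ \tilde{\mathbb{E}}[A^2] + 2\tilde{\mathbb{E}}[AC] = 2\,\tilde{\mathbb{E}}\Big[\int_0^{R_1} g(S_r)(\frak{R}^{(\lambda)} g)(S_r)\,dr\Big], \]
I apply the Ito product formula to $F(S_t) \int_0^t g(S_r)\,dr$: since $\int_0^t g\,dr$ has finite variation and vanishing cross variation with $F(S_\cdot)$, and since under $\tilde{\mathbb{P}}_{\tilde{\delta}_s}$ the marginal dynamics of $S$ match the original process, using $dF(S) = dM - g\,dt$ gives at time $R_1$ a decomposition into a stochastic integral $\int_0^{R_1} A_r\,dM_r$ (whose expectation vanishes by optional stopping) plus the deterministic pieces $\int_0^{R_1} Fg\,dr$ and $-\tfrac{1}{2} A^2$ (the latter from $\int_0^{R_1} A_r g(S_r)\,dr = \tfrac{1}{2} A^2$ by Fubini); rearrangement yields the claim.

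The principal technical obstacle is making the optional stopping and Ito manipulations rigorous at the cycle endpoint $R_1$, given the paper's remark that global $L^p$ control of $\frak{R}^{(\lambda)} g$ as $\lambda \to 0$ is delicate. Since $R_1$ is almost surely finite, the $L^2$ bounds needed on $A$, $C$, and on the stochastic integral $\int_0^{R_1} A_r\,dM_r$ should come by truncating at $R_1 \wedge T$, using Lem.~\ref{FirstEnergyLem} together with the generalised-resolvent estimates of~\cite[Sect. 4]{Previous} to bound the relevant moments uniformly in $T$, and passing to $T \to \infty$ by dominated convergence.
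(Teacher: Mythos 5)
The paper states Lemma~\ref{LemKeyMart} without proof (it is used in the proof of Thm.~\ref{ThmMain} and refers to the splitting machinery developed in~\cite{Previous}), so there is no in-text proof to compare against. Taking your proposal on its own merits: the architecture is sound and, in my judgment, the natural one. You correctly recognise the summands as increments of the ``cycle martingale'' $M_t = (\frak{R}^{(\lambda)}g)(S_t) - (\frak{R}^{(\lambda)}g)(S_0) + \int_0^t g(S_r)\,dr$ sampled at the life-cycle times $R_n$, and the reduction of the martingale-difference and quadratic-variation claims to the single-cycle quantities $\tilde{\mathbb{E}}^{(\lambda)}_{\tilde{\delta}_s}[Y_1]$ and $\tilde{\mathbb{E}}^{(\lambda)}_{\tilde{\delta}_s}[Y_1^2]$ is exactly the right use of Prop.~\ref{BasicsOfNum} and the fact that $\tilde{S}_{R_n}\sim\tilde{\nu}$. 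Expanding $Y_1 = A - B + C$, using $\tilde{\mathbb{E}}[A] + \tilde{\mathbb{E}}[C] = B$ to remove the linear terms, identifying $\tilde{\mathbb{E}}[C^2] = \int(\frak{R}^{(\lambda)}g)^2\,d\nu$ from $S_{R_1}\sim\nu$, and converting $\tilde{\mathbb{E}}[A^2] + 2\tilde{\mathbb{E}}[AC]$ into $2\tilde{\mathbb{E}}[\int_0^{R_1} g(S_r)(\frak{R}^{(\lambda)}g)(S_r)\,dr]$ via the Ito product formula (with $\int_0^{R_1}A_r g(S_r)\,dr = \tfrac12 A^2$ since $A$ is continuous and of finite variation) reproduces $\check{\upsilon}_\lambda$ exactly.

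Two smaller points are worth flagging. First, a notational slip: when you write ``$\tilde{\mathbb{E}}^{(\lambda)}_{\tilde{s}}[\int_0^{R_1}g(S_r)\,dr] = (\frak{R}^{(\lambda)}g)(s)-c$ for every $\tilde{s}$,'' this identity is only correct after averaging $Z_0$ over its $\mathrm{Bernoulli}(h(s))$ distribution, i.e.\ the starting law must be $\tilde{\delta}_s$ rather than a fixed $(s,z)\in\tilde{\Sigma}$, because for a fixed $z$ the law of $R_1$ and hence the cycle depends on $z$ and the $S$-marginal no longer matches the pre-split process. You do use $\tilde{\delta}_{S_{R_n}}$ correctly in the quadratic-variation step, so this is merely a phrasing issue, but the martingale step should be stated the same way. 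Second, you frame the principal technical obstacle as lack of uniform-in-$\lambda$ control of $\frak{R}^{(\lambda)}g$, citing the paper's remark about the construction~(\ref{MartConst}). That remark concerns the $\lambda\to 0$ regime relevant to Thm.~\ref{ThmMain}; the present lemma is a statement for each \emph{fixed} $\lambda>0$, where $\frak{R}^{(\lambda)}g$ is well behaved by exponential ergodicity. What actually needs care here is fixed-$\lambda$ integrability of $A$, $C$ and the stochastic integral $\int_0^{R_1}A_r\,dM_r$ so that optional stopping at the a.s.\ finite time $R_1$ is legitimate; your suggested route (truncate at $R_1\wedge T$, bound moments using Lem.~\ref{FirstEnergyLem} and the estimates of~\cite{Previous}, then pass $T\to\infty$) is the right one, but should be aimed at that fixed-$\lambda$ problem rather than at uniform control.
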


\section{Proof of Thm.~\ref{ThmMain}}\label{SecMainProof}

Let us define (or recall)  the following notations:
\begin{eqnarray*}
&\tilde{S}_{t}=(S_{t},Z_{t})       & \text{State of the split process at time $t\in \R^{+}$      }\\
&  \tau_{m}\in \R^{+}     &   \text{$m$th partition time} \\
 &\mathbf{N}_{t}\in \mathbb{N} \text{\,}    & \text{Number of non-zero partition times up to time $t\in \R^{+}$ }\\
& R_{m} \in \R^{+}         &  \text{Beginning time of the $m$th life cycle}\\
&  \tilde{N}_{t} \in \mathbb{N}         &    \text{Number of returns to the atom up to time $t\in \R^{+}$} \\
&\mathcal{F}_{t}    &    \text{Information up to time $t\in \R^{+}$ for the original process $S_{r}$ and the $\tau_{m}$ } \\ 
&\tilde{\mathcal{F}}_{t}    &    \text{Information up to time $t\in \R^{+}$ for the split process  $\tilde{S}_{t}$ and the $\tau_{m}$ }\\
&\tilde{\mathcal{F}}_{t}'    &    \text{Information for  $\tilde{S}_{t}$ and the $\tau_{m}$  before time $R_{n+1}$, where $R_{n}'\leq t<R_{n+1}'$   }
\end{eqnarray*}

 Let the constant $\mathbf{u}\in \R^{+}$, the function $h:\Sigma\rightarrow [0,1]$, and measure $\tilde{\nu}$ on $\tilde{\Sigma}$ be defined as in Sect.~\ref{SecNumSplit}. Define  $\upsilon_{\lambda}>0$  as
\begin{align*}
\upsilon_{\lambda}:=& 2\tilde{\mathbb{E}}^{(\lambda)}_{\tilde{\nu}}\Big[\int_{0}^{R_{1}}dr\frac{dV}{dx}(X_{r})\int_{r}^{R_{2}}dr'\frac{dV}{dx}(X_{r'} )\Big]\\
=&\frac{ 2 \int_{\Sigma}dxdp e^{-\lambda H(x,p)} \frac{dV}{dx}(x )\big(\frak{R}^{(\lambda)}\,\frac{dV}{dx}\big)(x,p)  }{ \int_{\Sigma}dxdp e^{-\lambda H(x,p)}     h(x,p) }
,  
\end{align*}
where the equality holds by~\cite[Prop. 2.4]{Previous}. Notice that $\upsilon_{\lambda}$ is  formally equal to $\frac{\kappa}{\mathbf{u}}$ for $\lambda=0$ since the numerator is the formal Green-Kubo expression~(\ref{DiffConst}) and the denominator is $\mathbf{u}=\int_{\Sigma}ds h(s)$. The value $\upsilon_{0}>0$ is well-defined by Lem.~\ref{LemUpsilon}, which is from~\cite[Lem. 5.2]{Previous}.  Thus we can give a rigorous definition for the diffusion constant $\kappa\in \R^{+}$ as $ \kappa:= \mathbf{u}\, \upsilon_{0}$.
\begin{lemma}\label{LemUpsilon}
The value $\upsilon_{\lambda}\in \R^{+}$ is uniformly bounded for $\lambda<1$, and $\upsilon_{\lambda}$ depends continuously on the parameter $\lambda$.
\end{lemma}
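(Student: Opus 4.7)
The plan is to treat the two assertions separately, working from the explicit ratio representation rather than the probabilistic definition of $\upsilon_\lambda$. Let $f(x) := \frac{dV}{dx}(x)$; this is bounded on the compact torus, and since $V$ is periodic we have $\Psi_{\infty,\lambda}(f) = 0$, so the reduced resolvent $\frak{R}^{(\lambda)} f$ is well-defined. I would reduce everything to pointwise control of $\bigl(\frak{R}^{(\lambda)} f\bigr)(x,p)$ that is uniform in $\lambda < 1$.

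For uniform boundedness: the denominator satisfies
$$e^{-\lambda l}\,\mathbf{u} \;\leq\; \int_\Sigma dxdp\, e^{-\lambda H(x,p)}\, h(x,p) \;\leq\; \mathbf{u}$$
for $\lambda\in(0,1)$, because $h$ is supported in $\{H\leq l\}$ and normalized so $\int h = \mathbf{u}$. So it is bounded above and away from zero by constants independent of $\lambda$. For the numerator, I would import the estimates on the generalized resolvent $U_h^{(\lambda)} g$ from Sect.~4 of \cite{Previous} (the approach sketched in Sect.~\ref{SecRelatedLit} of the present article), which deliberately avoids manipulating $\frak{R}^{(\lambda)}$ directly because of the delicate sign cancellations in the reduced resolvent. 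Converting those bounds into a pointwise envelope $\bigl|(\frak{R}^{(\lambda)} f)(x,p)\bigr| \leq C\,\omega(p)$ with $\omega$ independent of $\lambda$ and with $\int dp\, \omega(p) < \infty$ — morally $\omega(p) \sim (1+p^2)^{-1}$, as the heuristic around~(\ref{YinYang}) suggests — would immediately give the uniform bound on the numerator.

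For continuity in $\lambda$: the jump kernel $\mathcal{J}_\lambda(p',p)$ is manifestly smooth in $\lambda$ from~(\ref{JumpRates}), so $\mathcal{L}_\lambda$ depends continuously on $\lambda$, and for any fixed $T$ the map $\lambda \mapsto (e^{t\mathcal{L}_\lambda} f)(x,p)$ is continuous uniformly on $t\in[0,T]$ by standard semigroup-continuity arguments. Splitting $\frak{R}^{(\lambda)} f = \int_0^T + \int_T^\infty$, the first piece converges as $\lambda\to\lambda_0$ by dominated convergence on the $t$-variable, and the tail is controlled uniformly in $\lambda$ by the envelope from the first step, so it can be made arbitrarily small uniformly. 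Continuity of the outer $dxdp$-integration in the numerator then follows by a second dominated-convergence step, using $|f\cdot\frak{R}^{(\lambda)} f| \leq \|f\|_\infty\, C\,\omega$ as the $\lambda$-independent majorant (multiplied by $e^{-\lambda H}\leq 1$). Continuity of the denominator is immediate since $e^{-\lambda H}h$ is dominated by the $L^1$ function $h$.

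The main obstacle, as the authors themselves emphasize in Sect.~\ref{SecRelatedLit}, is precisely the uniform-in-$\lambda$ pointwise control on $\frak{R}^{(\lambda)} f$; this is exactly what the technical apparatus of Sect.~4 of \cite{Previous} was built to supply, and is also why the lemma is attributed there rather than being re-derived. Once that envelope is granted, everything above is bookkeeping with dominated convergence.
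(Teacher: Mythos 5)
The paper does not prove Lem.~\ref{LemUpsilon} here at all; it is imported verbatim as \cite[Lem.~5.2]{Previous}, so there is no in-paper argument to compare against. That said, your sketch has a genuine gap at its load-bearing step, and moreover it contradicts the paper's own account of the difficulty.

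Your uniform-boundedness argument hinges on obtaining a pointwise envelope $\bigl|(\frak{R}^{(\lambda)}\frac{dV}{dx})(x,p)\bigr|\le C\,\omega(p)$ uniformly in $\lambda<1$ by ``converting'' the generalized-resolvent estimates of \cite[Sect.~4]{Previous}. But that conversion is precisely what the authors say they do \emph{not} know how to do: Sect.~\ref{SecRelatedLit} states explicitly that ``it is not clear how to obtain the necessary bounds on the reduced resolvent $\bigl(\frak{R}^{(\lambda)}\frac{dV}{dx}\bigr)(s)$ in the limit $\lambda\to 0$,'' and that the generalized resolvent $U_h^{(\lambda)}g$ is used as a \emph{replacement} because it is an integral of nonnegative quantities, whereas $\frak{R}^{(\lambda)}\frac{dV}{dx}$ involves ``a seemingly delicate cancellation of quantities with opposite sign.'' You have inverted the relationship: the apparatus of \cite[Sect.~4]{Previous} was built because pointwise control on $\frak{R}^{(\lambda)}\frac{dV}{dx}$ was unavailable, not to supply it. Citing those estimates and then asserting the conversion as a one-line corollary skips the entire obstacle.

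The continuity argument has the same dependency: your tail estimate for $\int_T^\infty$ also relies on the $\lambda$-independent envelope $\omega$, so it inherits the same gap. And the semigroup-continuity step ($\lambda\mapsto e^{t\mathcal{L}_\lambda}f$ continuous uniformly on $[0,T]$) is plausible but not free for a jump-plus-transport generator on a noncompact phase space — you would at least need to say which topology on functions you are using and why the jump part of $\mathcal{L}_\lambda$ is a continuous perturbation in it.

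A route more consistent with the paper's own machinery would be to start from the probabilistic form
$\upsilon_\lambda = 2\,\tilde{\mathbb{E}}^{(\lambda)}_{\tilde{\nu}}\bigl[\int_0^{R_1}dr\,\frac{dV}{dx}(X_r)\int_r^{R_2}dr'\frac{dV}{dx}(X_{r'})\bigr]$
and control the inner time-averaged quantity via the function $g_\lambda$ of~(\ref{YinYang}) and the generalized-resolvent bounds directly, which is exactly what the Nummelin-splitting framework is set up to handle. You chose the ratio representation ``rather than the probabilistic definition,'' but the probabilistic definition is the one the paper's technical toolkit actually supports. Until you either supply the pointwise envelope $\omega$ by an argument that does not presuppose it, or switch to an estimate on the integral $\int f\cdot\frak{R}^{(\lambda)}f$ that bypasses pointwise control (e.g.\ via the bilinear/life-cycle structure), the proof is incomplete.
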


The following proposition is from~\cite[Prop. 2.5]{Previous} and~\cite[Lem. 5.3]{Previous}.  The martingale $\tilde{M}_{t}$ was defined in Lem.~\ref{LemKeyMart}. 
  
\begin{proposition}\label{TrivialMart} \text{  }
\begin{enumerate}
\item For the split statistics,  $ \tilde{N}_{t} - \sum_{n=1}^{\mathbf{N}_{t}}h(S_{\tau_{n}})$ is a martingale with respect to the filtration $\tilde{\mathcal{F}}_{t}$.   For the original statistics, 
$\sum_{n=1}^{\mathbf{N}_{t}}h(S_{\tau_{n}}) - \int_{0}^{t}dr h(S_{r})$ is a martingale with respect to  $\mathcal{F}_{t}$.  
In particular,
$$\tilde{\mathbb{E}}^{(\lambda)}\big[ \tilde{N}_{t}    \big]=\mathbb{E}^{(\lambda)}\Big[ \int_{0}^{t}dr h(S_{r})    \Big].  $$

\item As $\lambda\rightarrow 0$, the following asymptotics hold:
$$ \tilde{\mathbb{E}}^{(\lambda)}\Big[\sup_{0\leq t\leq T}\Big|\lambda^{\frac{1}{2}} \langle \tilde{M}\rangle_{\frac{t}{\lambda}}    -\lambda^{\frac{1}{2} }\upsilon_{\lambda} \tilde{N}_{\frac{t}{\lambda}}\Big|      \Big]=\mathit{O}(\lambda^{\frac{1}{4}}).     $$
Also, for any $t\geq 0$, the expectations are equal
 $\tilde{\mathbb{E}}^{(\lambda)}\big[\langle \tilde{M}\rangle_{t}  \big] =\upsilon_{\lambda}\tilde{\mathbb{E}}^{(\lambda)}\big[ \tilde{N}_{t}\big]$.
\end{enumerate}

\end{proposition}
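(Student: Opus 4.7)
The plan is to handle Part (1) by standard Nummelin compensator calculations and Part (2) by exploiting the regenerative i.i.d.\ structure of the life cycles for the split chain.

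For Part (1), I would first argue that the compensator of $\tilde{N}_t$ with respect to the discrete filtration indexed by the partition times $\tau_n$ is $\sum_{n\le\mathbf{N}_t}h(S_{\tau_n})$: the splitting kernel $\tilde{\mathcal{T}}_\lambda$ was constructed precisely so that the conditional probability of $z_2=1$ given $s_2$ equals $h(s_2)$, independent of $z_1$, which identifies the Bernoulli mean of each jump of $\tilde{N}_t$. For the second martingale, the partition times $\tau_n$ form a rate-one Poisson process independent of $S_t$, so the predictable compensator of the random point-mass sum $\sum_n h(S_{\tau_n})\mathbf{1}_{\tau_n\le t}$ with respect to $\mathcal{F}_t$ is $\int_0^t h(S_r)\,dr$ by classical Poisson compensation. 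Adding the two martingales produces a martingale equal to $\tilde{N}_t-\int_0^t h(S_r)\,dr$, and taking expectations at time $t$ yields the identity.

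For Part (2), Lem.~\ref{LemKeyMart} already provides $\langle\tilde{M}\rangle_t=\sum_{n=1}^{\tilde{N}_t}\check{\upsilon}_\lambda(S_{R_n})$, so the task reduces to analyzing these partial sums. The crucial observation is that $\{S_{R_n}\}_{n\ge 1}$ is an i.i.d.\ $\nu$-distributed sequence: every $R_n'$ is a visit to the atom, forcing $S_{R_n'}\sim\nu$, and reading the transition $\tilde{\mathcal{T}}_\lambda$ from the atom state $(s,1)$ shows that the next state has marginal $(1-h(s_2))\nu(ds_2)+h(s_2)\nu(ds_2)=\nu(ds_2)$, independent of $s$. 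Integrating the expression for $\check{\upsilon}_\lambda(s)$ in Lem.~\ref{LemKeyMart} against $\nu$, the two terms $\int d\nu\bigl((\frak{R}^{(\lambda)}\tfrac{dV}{dx})(s')\bigr)^2$ and $\bigl((\frak{R}^{(\lambda)}\tfrac{dV}{dx})(s)\bigr)^2$ cancel, leaving $2\tilde{\mathbb{E}}^{(\lambda)}_{\tilde{\nu}}\bigl[\int_0^{R_1}\tfrac{dV}{dx}(X_r)(\frak{R}^{(\lambda)}\tfrac{dV}{dx})(S_r)\,dr\bigr]$; this equals $\upsilon_\lambda$ by~\cite[Prop.~2.4]{Previous}, the same identity already invoked to reconcile the life-cycle form and the Green--Kubo form in the definition of $\upsilon_\lambda$. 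Consequently $M_k:=\sum_{n=1}^k(\check{\upsilon}_\lambda(S_{R_n})-\upsilon_\lambda)$ is a centered random walk and hence a martingale relative to the filtration $\mathcal{G}_k$ of information accrued by the end of the $k$th life cycle; since $\{\tilde{N}_t\ge k\}=\{R_k'\le t\}\in\mathcal{G}_k$, Wald's identity applied to the stopping time $\tilde{N}_t$ gives $\tilde{\mathbb{E}}^{(\lambda)}[\langle\tilde{M}\rangle_t]=\upsilon_\lambda\,\tilde{\mathbb{E}}^{(\lambda)}[\tilde{N}_t]$.

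For the supremum bound, I would apply Doob's $L^2$ inequality to the stopped martingale $M_{k\wedge\tilde{N}_{T/\lambda}}$ to obtain $\tilde{\mathbb{E}}^{(\lambda)}[\sup_k M_{k\wedge\tilde{N}_{T/\lambda}}^2]\le 4\sigma_\lambda^2\,\tilde{\mathbb{E}}^{(\lambda)}[\tilde{N}_{T/\lambda}]$, where $\sigma_\lambda^2:=\mathrm{Var}_\nu(\check{\upsilon}_\lambda)$, and Cauchy--Schwarz then yields $\tilde{\mathbb{E}}^{(\lambda)}[\sup_k|M_{k\wedge\tilde{N}_{T/\lambda}}|]\le 2\sigma_\lambda\,\tilde{\mathbb{E}}^{(\lambda)}[\tilde{N}_{T/\lambda}]^{1/2}$. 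Since $h=\mathbf{u}U^{-1}\chi(H\le l)$, Part (1) identifies $\tilde{\mathbb{E}}^{(\lambda)}[\tilde{N}_{T/\lambda}]=\mathbf{u}\,\mathbb{E}^{(\lambda)}[L_{T/\lambda}]$, which is $\mathit{O}(\lambda^{-1/2})$ by Lem.~\ref{LocalTimeBnd}. Multiplying by the overall prefactor $\lambda^{1/2}$ appearing in the statement produces the claimed $\mathit{O}(\lambda^{1/4})$ rate. The main technical obstacle is the uniform-in-$\lambda$ finiteness of $\sigma_\lambda^2$, which is the second-moment analogue of Lem.~\ref{LemUpsilon} and requires pointwise control of $(\frak{R}^{(\lambda)}\tfrac{dV}{dx})(s)$ on the compact support of $\nu$; this is precisely where one would need the generalized-resolvent estimates from~\cite[Sect.~4]{Previous} alluded to in the introduction.
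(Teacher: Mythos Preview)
The paper does not actually prove this proposition here; it is imported wholesale from \cite[Prop.~2.5]{Previous} and \cite[Lem.~5.3]{Previous}. Your reconstruction is a faithful version of the standard regenerative argument and is essentially correct, with two small caveats worth noting.

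First, the claim ``every $R_n'$ is a visit to the atom, forcing $S_{R_n'}\sim\nu$'' is misstated: the atom is $\Sigma\times\{1\}$, so $Z_{R_n'}=1$ but the $\Sigma$-component $S_{R_n'}$ is \emph{not} $\nu$-distributed. What is true---and what you correctly use in the very next clause---is that the transition out of any atom point $(s,1)$ sends the $\Sigma$-component to $\nu$, so $S_{R_n}\sim\nu$ independently of the past up to $R_n'$; that is what makes $\{S_{R_n}\}_{n\ge1}$ i.i.d. Second, $\tilde N_{T/\lambda}$ is not literally a stopping time for the filtration $\mathcal G_k=\tilde{\mathcal F}_{R_k}$ you want (knowing $\{\tilde N_{T/\lambda}=k\}$ requires $R_{k+1}'$), but $\tilde N_{T/\lambda}+1$ is, since $\{\tilde N_{T/\lambda}+1=k\}=\{R_{k-1}'\le T/\lambda<R_k'\}\in\mathcal G_k$; this costs only an extra summand and does not affect the $\mathit O(\lambda^{1/4})$ rate.

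Your identification of the mean $\int\check\upsilon_\lambda\,d\nu=\upsilon_\lambda$ via the cancellation of the two resolvent-square terms and the Kac-type identity $\tilde{\mathbb E}^{(\lambda)}_{\tilde\nu}\bigl[\int_0^{R_1}g(S_r)\,dr\bigr]=\Psi_{\infty,\lambda}(g)/\Psi_{\infty,\lambda}(h)$ is exactly right, and your flagging of the uniform bound on $\sigma_\lambda^2$ as the residual technical input (handled in \cite{Previous}) is accurate. Note that your argument is slightly cleaner than the even/odd splitting the paper uses in the analogous computation in the proof of Lem.~\ref{LocalTimeBnd}: there the summands $\int_{R_n}^{R_{n+1}}g'_\lambda(S_r)\,dr$ depend on the full life cycle and hence successive terms are correlated through $S_{R_{n+1}}$, whereas here the summands $\check\upsilon_\lambda(S_{R_n})$ depend only on the regeneration point and are genuinely i.i.d., so no parity split is needed.
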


The equality in Prop.~\ref{BasicsOfNum} is from~\cite[Prop. 2.3]{Previous} and is of a standard type for splitting constructions~\cite{Nummelin}.  It states that the probability of the process being at the atom at time $r\in \R^{+}$, conditioned on $r$ being a partition time (i.e. $\mathbf{N}_{r}=\mathbf{N}_{r^{-}}+1$) and the entire past  $\tilde{\mathcal{F}}_{r^{-}}$, is given by the value $h(S_{r})$.  Note that the value $S_{r}$ is a.s. contained in $\tilde{\mathcal{F}}_{r^{-}}$ since a collision will a.s. not occur at the partition time $r\in \R^{+}$ and thus $\lim_{v\nearrow r}S_{v}=S_{r}$.

\begin{proposition}\label{BasicsOfNum} \text{  } 
  $$\tilde{\mathbb{P}}^{(\lambda)}\big[ Z_{r }= 1 \,\big| \,\tilde{\mathcal{F}}_{r^{-}},\,\mathbf{N}_{r}-\mathbf{N}_{r^{-}}=1 \big]=h(S_{r})$$
\end{proposition}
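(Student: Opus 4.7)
On the event $\{\mathbf{N}_r-\mathbf{N}_{r^-}=1\}$ we have $r=\tau_n$ for the index $n=\mathbf{N}_{r^-}+1$, which is $\tilde{\mathcal{F}}_{r^-}$-measurable, and I will work with this fixed $n$. The plan is to identify exactly what $\tilde{\mathcal{F}}_{r^-}$ sees and then reduce the statement to a one-step calculation using the defining formulas of $\tilde{\mathcal{T}}_\lambda$. The $\sigma$-algebra $\tilde{\mathcal{F}}_{r^-}$ contains the split states $\tilde{\sigma}_0,\dots,\tilde{\sigma}_{n-1}$, the earlier partition times $\tau_1,\dots,\tau_{n-1}$, and the continuous path $(S_t)_{t<r}$. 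Since collisions almost surely avoid partition times, the left limit $S_r$ is $\tilde{\mathcal{F}}_{r^-}$-measurable; and by property~(3) of the construction of $\tilde{S}_t$, $Z_r=Z_{\tau_n}$, whose value is produced at the instant $\tau_n$ by the split kernel $\tilde{\mathcal{T}}_\lambda$.

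My next step will be to argue that conditioning on all of $\tilde{\mathcal{F}}_{r^-}$ amounts, for the purpose of computing the law of $Z_{\tau_n}$, to conditioning only on the pair $(\tilde{\sigma}_{n-1},S_{\tau_n})$. This follows from property~(4) of the splitting construction: conditionally on the endpoints $S_{\tau_{n-1}}$ and $S_{\tau_n}$, the trajectory $(S_t)_{t\in(\tau_{n-1},\tau_n)}$ together with the gap $\tau_n-\tau_{n-1}$ has the bridge law of the original unsplit dynamics and is therefore independent of $Z_{\tau_n}$. Combining this with the Markov property of the split chain $(\tilde{\sigma}_m)$ under $\tilde{\mathcal{T}}_\lambda$, the claim reduces to the one-step identity
$$\tilde{\mathbb{P}}^{(\lambda)}\big[Z_{\tau_n}=1\,\big|\,\tilde{\sigma}_{n-1}=(s_{n-1},z_{n-1}),\,S_{\tau_n}=s_n\big]=h(s_n)$$
for both values $z_{n-1}\in\{0,1\}$.

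Finally I will verify this identity by direct inspection of the four-case definition of $\tilde{\mathcal{T}}_\lambda$. For $z_{n-1}=0$ the two cells compatible with a given $s_n$ carry weights proportional to $1-h(s_n)$ and $h(s_n)$ respectively, both multiplied by the common factor $(1-h(s_{n-1}))^{-1}(\mathcal{T}_\lambda-h\otimes\nu)(s_{n-1},ds_n)$, so the ratio is $h(s_n)/\bigl((1-h(s_n))+h(s_n)\bigr)=h(s_n)$; the case $z_{n-1}=1$ is the same calculation with common factor $\nu(ds_n)$. The only delicate point in the plan is the reduction step: we really do need property~(4) to guarantee that the continuous-trajectory information embedded in $\tilde{\mathcal{F}}_{r^-}$ carries no residual knowledge of $Z_{\tau_n}$ beyond what $(\tilde{\sigma}_{n-1},S_{\tau_n})$ already carries. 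Once that reduction is in hand, the remaining verification is a purely algebraic check against the definition of $\tilde{\mathcal{T}}_\lambda$.
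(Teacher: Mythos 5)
Your argument is correct. Two remarks. First, the paper does not actually prove Prop.~\ref{BasicsOfNum}; it cites it from~\cite[Prop.~2.3]{Previous} and only makes the informal observation, right after the display for $\tilde{\mathcal{T}}_\lambda$, that ``the probability that $z_2=1$ is $h(s_2)$ when given $s_1$, $s_2$, and $z_1$'' --- which is exactly your final algebraic check. Second, and to your credit, you correctly identified that the only non-trivial content is the reduction step: one must justify that the conditional law of $Z_{\tau_n}$ given the full history $\tilde{\mathcal{F}}_{r^-}$ (which contains the continuous bridge trajectory between partition times as well as the past $Z$-values) collapses to the law given $(\tilde{\sigma}_{n-1},S_{\tau_n})$ alone. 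Your appeal to property~(4) of the construction is the right mechanism: the bridge fill over $(\tau_{n-1},\tau_n]$, including the gap length, has a law that depends only on the endpoints $S_{\tau_{n-1}},S_{\tau_n}$ and carries no information about $Z_{\tau_n}$, while the Markov property of the split chain $(\tilde{\sigma}_m)$ removes the dependence on $(\tilde{\sigma}_m)_{m<n-1}$. Combined with the four-case check on $\tilde{\mathcal{T}}_\lambda$, this gives the stated identity for both $z_{n-1}\in\{0,1\}$, so your proof is complete and fills in exactly the detail that the paper leaves to the reference.
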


Our proof of Thm.~\ref{ThmMain} takes some inspiration from~the proof of~\cite[Thm. 4.12]{Hopfner} and relies heavily on~\cite{Jacod}. \vspace{.5cm}

\noindent[Proof of Thm.~\ref{ThmMain}]\vspace{.4cm}

For the study of the pair $(\lambda^{\frac{1}{2}}P_{\frac{\cdot}{\lambda}},\lambda^{\frac{1}{4}} D_{\frac{\cdot}{\lambda} }) $, we will begin by embedding the processes in the split statistics defined in Sect.~\ref{SecNumSplit}.  Let the martingale $\tilde{M}$ be defined as in Lem.~\ref{LemKeyMart}. In this proof, all convergences in law refer to the Skorokhod metric.  The following points hold regarding the processes $\lambda^{\frac{1}{4}} D_{\frac{\cdot}{\lambda} }$ and $\lambda^{\frac{1}{4}}\tilde{M}_{\frac{\cdot}{\lambda} }   $:  
\begin{enumerate}[(I).]
\item   As $\lambda\rightarrow 0$
$$\tilde{\mathbb{E}}^{(\lambda)}\Big[\sup_{0\leq t\leq T}\Big| \lambda^{\frac{1}{4}} D_{\frac{t}{\lambda} }-\lambda^{\frac{1}{4}}\tilde{M}_{\frac{t}{\lambda} }    \Big|     \Big]\longrightarrow 0. $$

\item As $\lambda\rightarrow 0$ the bracket process $\langle \tilde{M} \rangle_{t}$ satisfies
$$ \tilde{ \mathbb{E}}^{(\lambda)}\Big[\sup_{0\leq t\leq T}\Big| \lambda^{\frac{1}{2}} \langle \tilde{M} \rangle_{\frac{t}{\lambda} } - \kappa \lambda^{\frac{1}{2}}L_{\frac{t}{\lambda}}    \Big|\Big]\longrightarrow 0, $$
where  $L_{t}=   \mathbf{u}^{-1}\int_{0}^{t}dr h(X_{r},P_{r})$.
\item The martingale $\lambda^{\frac{1}{4}}\tilde{M}_{\frac{t}{\lambda} }   $ satisfies the Lindeberg condition
$$ 
 \sup_{0<\lambda \leq 1}\tilde{\mathbb{P}}^{(\lambda)}\Big[ \sup_{1\leq r\leq \tilde{N}_{\frac{T}{\lambda}}} \Big| \tilde{M}_{r }-   \tilde{M}_{r^{-} } \Big|^{2}  > \frac{\epsilon}{\lambda}  \Big]\longrightarrow 0,   \hspace{1cm} \text{as}  \hspace{1cm} \epsilon\rightarrow 0.   $$
\end{enumerate}
   Statements (I) and (III) have already been shown in the proof of
   \cite[Thm. 1.2]{Previous}. 

We will temporarily assume statement (II) and proceed with the main part of the proof.   By (I) we may work with the pair  $\big(\lambda^{\frac{1}{2}}P_{\frac{\cdot}{\lambda}} ,\lambda^{\frac{1}{4}} \tilde{M}_{\frac{\cdot}{\lambda} }   \big) $  rather than $\big(\lambda^{\frac{1}{2}}P_{\frac{\cdot}{\lambda}} ,\lambda^{\frac{1}{4}} D_{\frac{\cdot}{\lambda} }   \big) $.  By Thm.~\ref{ThmLocalTime} and (II), there is convergence in law as $\lambda\rightarrow 0$ 
\begin{align}\label{Jesus}
 \hspace{1cm} \big(\lambda^{\frac{1}{2}}P_{\frac{t}{\lambda}} ,\lambda^{\frac{1}{2}}\langle \tilde{M}\rangle_{\frac{t}{\lambda}}   \big) \stackrel{\frak{L}}{\Longrightarrow} (\frak{p}_{t},\kappa\frak{l}_{t} ). 
\end{align}
  It follows that the components $\lambda^{\frac{1}{2}}P_{\frac{\cdot}{\lambda}}$ and $\lambda^{\frac{1}{2}}\langle \tilde{M}\rangle_{\frac{\cdot}{\lambda}} $ are $C$-tight for $\lambda<1$.  By~\cite[Thm. VI.4.13]{Jacod} the family of martingales $\lambda^{\frac{1}{4}} \tilde{M}_{\frac{\cdot}{\lambda} }    $ must also be tight for $ \lambda < 1$.  The Lindeberg condition (III) and~\cite[Prop. VI.3.26]{Jacod} guarantee that the family of martingales must be $C$-tight.

 The  triple $T^{(\lambda)}=\big(\lambda^{\frac{1}{2}}P_{\frac{\cdot}{\lambda}} ,\lambda^{\frac{1}{2}} \langle\tilde{M}\rangle_{\frac{\cdot}{\lambda} }    ,\lambda^{\frac{1}{4}} \tilde{M}_{\frac{\cdot}{\lambda} }   \big)$ is $C$-tight  for $\lambda<1$ by~\cite[Cor. VI.3.33]{Jacod} since all of the components are $C$-tight.  By tightness, we may consider a subsequence $\lambda_{n}\rightarrow 0$ such that  $T^{(\lambda_{n})}$  converges in law to a limit $(\frak{p}, \frak{v},\frak{m})$.  The first two components  $\frak{p}$, $\frak{v}$ are respectively the Ornstein-Uhlenbeck process and $\kappa$ multiplied its the local time, i.e., $\frak{v}=\kappa \frak{l}$,  by~(\ref{Jesus}).  We will argue that the third component $\frak{m}$ must be a continuous martingale with respect to the filtration $\sigma(\frak{p}_{r},\frak{m}_{r};\, 0\leq r\leq t)$  such that $\langle \frak{ m}\rangle= \kappa \frak{l}$. The continuity of $\frak{m}$ follows by the $C$-tightness of $\lambda^{\frac{1}{4}} \tilde{M}_{\frac{\cdot}{\lambda} }$.  
  The process $\frak{m}$ is a martingale with respect to $\sigma(\frak{p}_{r},\frak{m}_{r};\, 0\leq r\leq t)$ by~\cite[Prop. IX.1.17]{Jacod}  since $\big(\lambda_{n}^{\frac{1}{2}}P_{\frac{\cdot}{\lambda_{n}}} ,\lambda_{n}^{\frac{1}{4}} \tilde{M}_{\frac{\cdot}{\lambda_{n}} }   \big)$ is adapted to the filtration  $\tilde{\mathcal{F}}_{t}^{(\lambda_{n})}:= \tilde{\mathcal{F}}_{\frac{t}{\lambda_{n}}}'$, the process $\lambda_{n}^{\frac{1}{4}}\tilde{M}_{\frac{\cdot}{\lambda_{n}}}$ is a martingale with respect to $\tilde{\mathcal{F}}_{t}^{(\lambda_{n})}$ by Lem.~\ref{LemKeyMart}, and the family of random variables $\lambda^{\frac{1}{4}}\tilde{M}_{\frac{t}{\lambda}}$ for $\lambda<1$ and $t\in[0,T]$ is uniformly square integrable.  To see the uniform square integrability notice
 \begin{align}\label{Bore} \sup_{0\leq t\leq T} \tilde{\mathbb{E}}^{(\lambda)}\big[\big(  \lambda^{\frac{1}{4}}\tilde{M}_{\frac{t}{\lambda}} \big)^{2}    \big]=  \tilde{\mathbb{E}}^{(\lambda)}\big[\lambda^{\frac{1}{2}}\langle \tilde{M}\rangle_{\frac{T}{\lambda}}   \big]=  \upsilon_{\lambda}\tilde{\mathbb{E}}^{(\lambda)}\big[\lambda^{\frac{1}{2}}\tilde{N}_{\frac{T}{\lambda}}   \big] = \upsilon_{\lambda}\mathbb{E}^{(\lambda)}\Big[\lambda^{\frac{1}{2}}\int_{0}^{\frac{T}{\lambda}}dr h(S_{r})  \Big]. 
 \end{align}
The second and third equalities are by Part (2) and Part (1) of   Prop.~\ref{TrivialMart}, respectively.
   The right side of~(\ref{Bore}) is uniformly bounded for $\lambda<1$ by Thm.~\ref{ThmLocalTime}, and thus    $\sup_{t\in[0,T]} \sup_{\lambda<1}\tilde{\mathbb{E}}^{(\lambda)}\big[\big(  \lambda^{\frac{1}{4}}\tilde{M}_{\frac{t}{\lambda}} \big)^{2}    \big]$ is finite.  By~\cite[Cor. VI.6.7]{Jacod}  the convergence $\lambda_{n}^{\frac{1}{4}} \tilde{M}_{\frac{\cdot}{\lambda_{n}} }\stackrel{\frak{L}}{\Longrightarrow} \frak{m} $ with the Lindeberg condition~(III)     implies the joint convergence of the pair
    $$\big( \lambda_{n}^{\frac{1}{2}}\langle \tilde{M}\rangle_{\frac{t}{\lambda_{n}}}, \lambda_{n}^{\frac{1}{4}} \tilde{M}_{\frac{t}{\lambda_{n}}}\big) \stackrel{\frak{L}}{\Longrightarrow} (\langle\frak{m}\rangle_{t},\frak{m}_{t}).$$ 
For the above, we have used that the difference between  $\lambda_{n}^{\frac{1}{2}}[\tilde{M}]_{\frac{t}{\lambda_{n}}}$ and $\lambda_{n}^{\frac{1}{2}}\langle \tilde{M}\rangle_{\frac{t}{\lambda_{n}}}$ is $\mathit{O}(\lambda_{n}^{\frac{1}{4}})$.  Thus $\langle \frak{m}\rangle=\kappa \frak{l}$.


We have now learned what we could from the martingale $\tilde{M}$.  By (I) we have shown that  $\big(\lambda_{n}^{\frac{1}{2}}P_{\frac{\cdot}{\lambda_{n}}} ,\lambda_{n}^{\frac{1}{4}} D_{\frac{\cdot}{\lambda_{n}} }   \big) $, interpreted with respect to the original statistics, converges in law to  a pair $(\frak{p},\frak{m})$ as $n\rightarrow \infty$, where  $\frak{m}$ is a continuous martingale with respect to the filtration $\sigma(\frak{p}_{r},\frak{m}_{r};\, 0\leq r\leq t)   $ and  $\langle \frak{m}\rangle=\kappa\frak{l}$.  If we establish that $\frak{p}$ satisfies the Markov property with respect to the filtration $\sigma(\frak{p}_{r},\frak{m}_{r};\, 0\leq r\leq t )$, then Lem.~\ref{MartProblem} states that the pair $(\frak{p},\frak{m})$ must have the law of the process $(\frak{p},\sqrt{\kappa} \mathbf{B}_{\frak{l}})$ for a copy of Brownian motion $\mathbf{B}$ independent of $\frak{p}$. Since the pair  $\big(\lambda^{\frac{1}{2}}P_{\frac{\cdot}{\lambda}} ,\lambda^{\frac{1}{4}} D_{\frac{\cdot}{\lambda} }   \big) $ is tight for $\lambda<1$,  establishing the law  $(\frak{p}, \sqrt{\kappa}\mathbf{B}_{\frak{l}})$ as the unique possible subsequential limit would imply  the convergence in law of $\big(\lambda^{\frac{1}{2}}P_{\frac{\cdot}{\lambda}} ,\lambda^{\frac{1}{4}} D_{\frac{\cdot}{\lambda} }   \big) $  as $\lambda\rightarrow 0$ to the process $(\frak{p},\mathbf{B}_{\frak{l}})$.  

To show that $\frak{p}$ satisfies the Markov property with respect to the  filtration  $\sigma(\frak{p}_{r},\frak{m}_{r}; 0\leq r\leq t)$, it is enough to show that the trajectory $\frak{p}_{s}$, $s>t$ is independent of $\sigma(\frak{m}_{r};\, 0\leq r\leq t)$ when given $\sigma(\frak{p}_{r};\, 0\leq r\leq t)$ since the process $\frak{p}$ satisfies the Markov property with respect to its own filtration.  The triple $\big(\lambda_{n}^{\frac{1}{2}}X_{\frac{\cdot}{\lambda_{n}}},  \lambda_{n}^{\frac{1}{2}}P_{\frac{\cdot}{\lambda_{n}}} ,\lambda_{n}^{\frac{1}{4}} D_{\frac{\cdot}{\lambda_{n}} }   \big) $ converges  to $(0,\frak{p},\frak{m})$ since the variable $X\in \mathbb{T}=[0,1)$ is bounded.  Moreover, $\sigma \big(\lambda_{n}^{\frac{1}{2}}X_{\frac{r}{\lambda_{n}}},  \lambda_{n}^{\frac{1}{2}}P_{\frac{r}{\lambda_{n}}};\, 0\leq r\leq t  \big)$ contains the information in  $\sigma \big(\lambda_{n}^{\frac{1}{4}} D_{\frac{r}{\lambda_{n}} }  ;\, 0\leq r\leq t  \big)$ since $D_{t}$ is defined   as a function of the  Markov process $(X_{r},P_{r})$ over $0\leq r\leq t$.   Thus the path $\lambda_{n}^{\frac{1}{2}}P_{\frac{s}{\lambda_{n}}}$, $s>t$ is independent of   $\sigma \big(\lambda_{n}^{\frac{1}{4}} D_{\frac{r}{\lambda_{n}} }  ;\, 0\leq r\leq t  \big)$ when given $\sigma \big(\lambda_{n}^{\frac{1}{2}}X_{\frac{r}{\lambda_{n}}},  \lambda_{n}^{\frac{1}{2}}P_{\frac{r}{\lambda_{n}}};\, 0\leq r\leq t  \big)$.  This independence carries over into the limit $n\rightarrow \infty$, and thus $\frak{p}_{s}$ for $s>t$ is independent of $\sigma(\frak{m}_{r};\, 0\leq r\leq t)$ when given the information $\sigma(\frak{p}_{r};\, 0\leq r\leq t)$.

 \vspace{.3cm}

The remainder of the proof is concerned with showing (II). \vspace{.5cm}

 \noindent (II)\hspace{.25cm}  By the triangle inequality,
 \begin{align}\label{Toad}
 \tilde{\mathbb{E}}^{(\lambda)}\Big[&\sup_{0\leq t\leq T}\Big| \lambda^{\frac{1}{2}} \langle \tilde{M} \rangle_{\frac{t}{\lambda} } -\kappa \lambda^{\frac{1}{2}}L_{\frac{t}{\lambda}}    \Big|\Big] \leq  \tilde{\mathbb{E}}^{(\lambda)}\Big[\sup_{0\leq t\leq T}\Big| \lambda^{\frac{1}{2}} \langle \tilde{M} \rangle_{\frac{t}{\lambda} } -\upsilon_{\lambda} \lambda^{\frac{1}{2}}\tilde{N}_{\frac{t}{\lambda}}    \Big|\Big]+ |\upsilon_{\lambda}-\frac{\kappa}{\mathbf{u}}|\,\tilde{\mathbb{E}}^{(\lambda)}\big[  \lambda^{\frac{1}{2}}\tilde{N}_{\frac{T}{\lambda}}\big]\nonumber  \\  &+\frac{\kappa}{\mathbf{u}}\tilde{\mathbb{E}}^{(\lambda)}\Big[\sup_{0\leq t\leq T}\Big|  \lambda^{\frac{1}{2}}\tilde{N}_{\frac{t}{\lambda}} -  \lambda^{\frac{1}{2}}\sum_{n=1}^{\mathbf{N}_{\frac{t}{\lambda}} }h(S_{\tau_{n}}) \Big|\Big]+\kappa\tilde{\mathbb{E}}^{(\lambda)}\Big[\sup_{0\leq t\leq T}\Big|  \mathbf{u}^{-1}\lambda^{\frac{1}{2}}\sum_{n=1}^{\mathbf{N}_{\frac{t}{\lambda}} }h(S_{\tau_{n}}) -  \lambda^{\frac{1}{2}}L_{\frac{t}{\lambda}} \Big|\Big],
 \end{align}
where $\mathbf{N}_{t}$ is the number of partition times up to time $t\in \R^{+}$.  The first term on the right is $\mathit{O}(\lambda^{\frac{1}{4}})$ by Part (2) of Lem.~\ref{TrivialMart}. The second term is bounded through
$$|\upsilon_{\lambda}-\frac{\kappa}{\mathbf{u}}|\,\tilde{\mathbb{E}}^{(\lambda)}\big[  \lambda^{\frac{1}{2}}\tilde{N}_{\frac{T}{\lambda}}\big]=   |\upsilon_{\lambda}-\frac{\kappa}{\mathbf{u}}|\,\mathbb{E}^{(\lambda)}\Big[  \lambda^{\frac{1}{2}}\int_{0}^{\frac{T}{\lambda}}dr h(S_{r})\Big]\longrightarrow 0,  $$
where we have used Part (1) of Prop.~\ref{TrivialMart} for the equality.   For convergence to zero, we have used Thm.~\ref{ThmLocalTime} to get a uniform constant bound for the expectation of $\lambda^{\frac{1}{2}}\int_{0}^{\frac{T}{\lambda}}dr h(S_{r})$ over $\lambda<1$ and  Lem~\ref{LemUpsilon}, which gives that  $\upsilon_{\lambda}\rightarrow \frac{\kappa}{\mathbf{u}}$  as $\lambda\rightarrow 0$. 

For the third term in~(\ref{Toad}),
\begin{align}\label{Elijiah}
&\tilde{\mathbb{E}}^{(\lambda)}\Big[\sup_{0\leq t\leq T}\Big|  \lambda^{\frac{1}{2}}\tilde{N}_{\frac{t}{\lambda}} -  \lambda^{\frac{1}{2}}\sum_{n=1}^{\mathbf{N}_{\frac{t}{\lambda}} }h(S_{\tau_{n}}) \Big|\Big]\leq 2\tilde{\mathbb{E}}^{(\lambda)}\Big[ \Big| \lambda^{\frac{1}{2}}\tilde{N}_{\frac{T}{\lambda}} -  \lambda^{\frac{1}{2}}\sum_{n=1}^{\mathbf{N}_{\frac{T}{\lambda}} }h(S_{\tau_{n}}) \Big|^{2}\Big]^{\frac{1}{2}}\nonumber \\ & = 2\lambda^{\frac{1}{2}}\tilde{\mathbb{E}}^{(\lambda)}\Big[  \sum_{n=1}^{\mathbf{N}_{\frac{T}{\lambda}} }h(S_{\tau_{n}})-h^{2}(S_{\tau_{n}}) \Big]^{\frac{1}{2}}\leq   2\lambda^{\frac{1}{2}}\mathbb{E}^{(\lambda)}\Big[  \sum_{n=1}^{\mathbf{N}_{\frac{T}{\lambda}} }h(S_{\tau_{n}}) \Big]^{\frac{1}{2}}=2\lambda^{\frac{1}{2}}\mathbb{E}^{(\lambda)}\Big[  \int_{0}^{\frac{T}{\lambda} }dr h(S_{r}) \Big]^{\frac{1}{2}}.
\end{align}
The first inequality uses Jensen's inequality and  Doob's inequality since
\begin{align*}
\tilde{N}_{t} -  \sum_{n=1}^{\mathbf{N}_{t} }h(S_{\tau_{n}})=  \sum_{n=1}^{\mathbf{N}_{t} } \chi( Z_{\tau_{n}}=1)-     h(S_{\tau_{n}})
\end{align*}
is a martingale with respect $\tilde{\mathcal{F}}_{t}$ by 
Prop.~\ref{TrivialMart}.   The first equality in~(\ref{Elijiah}) follows because the quadratic variation of the martingale is
 $  \sum_{n=1}^{t }\big(\chi(Z_{\tau_{n}}=1)-  h(S_{\tau_{n}})\big)^{2} $, and 
$$\tilde{\mathbb{E}}\big[ \big(\chi(Z_{r}=1)-  h(S_{r})\big)^{2} \,\big|\,\tilde{\mathcal{F}}_{r^{-}},\,\mathbf{N}_{r}-\mathbf{N}_{r^{-}}=1\big]=h(S_{r})-h^{2}(S_{r}),   $$
by  Prop.~\ref{BasicsOfNum}.   For the second inequality, we discard $h^{2}(S_{\tau_{n}})$, and go from the split to the original statistics since the argument of the expectation is well-defined for the original statistics.   Finally, the last equality holds since the partition times $\tau_{n}$ occur with Poisson rate one  independently of the process $S_{t}$.

The fourth term in~(\ref{Toad}) is similar to the third.  The process $\mathbf{u}^{-1}\sum_{n=1}^{\mathbf{N}_{t} }h(S_{\tau_{n}}) -  L_{t} $ is  well-defined in the original statistics and is a martingale with respect to the  filtration $\mathcal{F}_{t}$ by Prop.~\ref{TrivialMart}.  With routine arguments
\begin{align*}
\tilde{\mathbb{E}}^{(\lambda)}\Big[&\sup_{0\leq t\leq T}\Big|  \frac{\lambda^{\frac{1}{2}}}{ \mathbf{u}  }\sum_{n=1}^{\mathbf{N}_{\frac{t}{\lambda}} }h(S_{\tau_{n}}) -  \lambda^{\frac{1}{2}}L_{\frac{t}{\lambda}} \Big|\Big]= \mathbb{E}^{(\lambda)}\Big[\sup_{0\leq t\leq T}\Big|  \frac{ \lambda^{\frac{1}{2}}}{ \mathbf{u}  }\sum_{n=1}^{\mathbf{N}_{\frac{t}{\lambda}} }h(S_{\tau_{n}}) -  \lambda^{\frac{1}{2}}L_{\frac{t}{\lambda}} \Big|\Big] \\ &\leq  2\mathbb{E}^{(\lambda)}\Big[\Big|  \frac{\lambda^{\frac{1}{2}}}{ \mathbf{u} }\sum_{n=1}^{\mathbf{N}_{\frac{T}{\lambda}} }h(S_{\tau_{n}}) -  \lambda^{\frac{1}{2}}L_{\frac{T}{\lambda}} \Big|^{2}\Big]^{\frac{1}{2}}= 
\frac{\lambda^{\frac{1}{2}} }{\mathbf{u}}\mathbb{E}^{(\lambda)}\Big[  \int_{0}^{\frac{T}{\lambda}}dr  h^{2}(S_{r})     \Big]^{\frac{1}{2}} =\mathit{O}(\lambda^{\frac{1}{4}}).
\end{align*}
The  inequality uses Jensen's and Doob's inequalities.  The second equality uses that  the predictable quadratic variation of $\mathbf{u}^{-1}\sum_{n=1}^{\mathbf{N}_{t} }h(S_{\tau_{n}}) -  L_{t} $ is $\mathbf{u}^{-2}\int_{0}^{t}dr h^{2}(S_{r}) $ since the terms $h(S_{\tau_{n}})$ occur with Poisson rate one independently of the process $ S_{t}$.

\section{Miscellaneous proofs}\label{SecMiscProofs}

\begin{proof}[Proof of Lem.~\ref{LocalEstimate}]
Define the martingale $\frak{m}_{ t,\epsilon}:=\int_{0}^{t}d\mathbf{\tilde{B}}_{r}\big(1-e^{-\frac{|\frak{p}_{r}| }{\epsilon} }\big)$.  The difference between $\frak{m}_{ t,\epsilon} $ and $\mathbf{\tilde{B}}_{t}$ tends to zero as $\epsilon\rightarrow 0$ in the norm  $\mathbb{E}\big[ \sup_{0\leq t\leq T}\big| \cdot \big| \big]$ since
\begin{align}\label{CappedQuad}
\mathbb{E}\big[\sup_{0\leq t\leq T}\big| \mathbf{\tilde{B}}_{t}-\frak{m}_{ t,\epsilon}   \big|\big] & \leq  \mathbb{E}\big[\sup_{0\leq t\leq T}\big| \mathbf{\tilde{B}}_{t}-\frak{m}_{ t,\epsilon}   \big|^{2}\big]^{\frac{1}{2}} \leq  2\mathbb{E}\big[\big| \mathbf{\tilde{B}}_{T}-\frak{m}_{ T,\epsilon}   \big|^{2}\big]^{\frac{1}{2}} \nonumber \\ &=2\mathbb{E}\Big[\int_{0}^{T}dr e^{-2\frac{|\frak{p}_{r}| }{\epsilon} } \Big]^{\frac{1}{2}}= 2\Big( \int_{0}^{T}dr \mathbb{E}\big[e^{-2\frac{|\frak{p}_{r}| }{\epsilon} } \big] \Big)^{\frac{1}{2}}\leq 2\Big( \int_{0}^{T}dr \mathbb{E}_{0}\big[e^{-2\frac{|\frak{p}_{r}| }{\epsilon} } \big] \Big)^{\frac{1}{2}} \nonumber \\  &= 2\Big( \int_{0}^{T}dr \int_{\R}dq \frac{e^{-\frac{1}{2\omega_{r}}q^{2}-2\frac{|q|}{\epsilon}   }  }{(2\pi\omega_{r}  )^{\frac{1}{2}}   } \ \Big)^{\frac{1}{2}} =\mathit{O}(\epsilon^{\frac{1}{2}}),
\end{align}
  where $\omega_{r}:=1-e^{-r}$.  The first inequality is Jensen's, the second is Doob's, and the first equality uses that $ e^{-2\frac{|\frak{p}_{r}| }{\epsilon} }$ is derivative of the  quadratic variation of the martingale $ \mathbf{\tilde{B}}_{r}-\frak{m}_{ r,\epsilon}   $.  The third inequality uses that $\mathbb{E}\big[e^{-2\frac{|\frak{p}_{r}| }{\epsilon} } \big]$ is largest when $\frak{p}_{0}$ is initially zero.   The third equality holds since $\frac{e^{-\frac{1}{2\omega_{t}}q^{2}   }  }{(2\pi\omega_{t}  )^{\frac{1}{2}}   }$ is the density for $\frak{p}_{t}$ starting with $\frak{p}_{0}=0$.

Moreover, $\mathbf{m}_{t,\epsilon}$ can be rewritten
\begin{align*}
\mathbf{m}_{t,\epsilon}= \int_{0}^{t}d\mathbf{\tilde{B}}_{r}\big(1-e^{-\frac{|\frak{p}_{r}| }{\epsilon} }\big)=& \int_{0}^{t}\big(d|\frak{p}_{t}|+\frac{1}{2}dr|\frak{p}_{r}|  \big)\big(1-e^{-\frac{|\frak{p}_{r}| }{\epsilon} }\big) \nonumber \\   =& |\frak{p}_{t}|-|\frak{p}_{0}|+\epsilon e^{-\frac{|\frak{p}_{t}| }{\epsilon} } + \frac{1}{2} \int_{0}^{t}dr |\frak{p}_{r}|  \big(1-e^{-\frac{|\frak{p}_{r}| }{\epsilon} }\big)-\frac{1}{2\epsilon} \int_{0}^{t}dr e^{-\frac{|p_{r}|}{\epsilon} }.
\end{align*}
 The second equality follows by the substitution $d\mathbf{\tilde{B}}_{t}=d|\frak{p}_{t}|-\frac{1}{2}dt|\frak{p}_{t}|-d\frak{l}_{t}$ (from the Tanaka-Meyer formula~(\ref{ItoLocal})) and since $d\frak{l}_{t}$ multiplied by $(1-e^{-\frac{|p_{r}|}{\epsilon} })$ is zero.  The chain rule and the fact that $(d|p_{r}|)^{2}=dr$   give the third equality.  From the convergence~(\ref{CappedQuad}) it follows that the right side converges to $\mathbf{\tilde{B}}$ in the norm $\|\cdot\|_{\frak{s}}=\mathbb{E}\big[ \sup_{0\leq t\leq T}\big| \cdot \big| \big]$. 
 
 As $\epsilon\rightarrow 0$
$$ \Big\|   \epsilon e^{-\frac{|\frak{p}_{t}| }{\epsilon} }   \Big \|_{\frak{s}}=\mathit{O}(\epsilon) \quad \text{and} \quad  \Big \|  \int_{0}^{t}dr|\frak{p}_{r}| e^{-\frac{|\frak{p}_{r}| }{\epsilon} } \Big\|_{\frak{s}}=\mathit{O}(\epsilon), $$
where the later term follows by the same argument as in the right side of~(\ref{CappedQuad}).  In conclusion, $$ \mathbf{\tilde{B}}_{t}=|\frak{p}_{t}| -|\frak{p}_{0}|+ \frac{1}{2} \int_{0}^{t}dr|\frak{p}_{r}|- \frac{1}{2\epsilon}\int_{0}^{t}dr e^{-\frac{|p_{r}|}{\epsilon} }+\mathit{O}(\epsilon^{\frac{1}{2}} ) , $$
where $\mathit{O}(\epsilon^{\frac{1}{2}})$ refers to the norm  $\|\cdot\|_{\frak{s}}$.  By the Tanaka-Meyer formula, we have that  $$\frak{l}_{t}=\lim_{\epsilon \rightarrow 0}\frac{1}{2\epsilon} \int_{0}^{t}dr e^{-\frac{|p_{r}|}{\epsilon} },$$ where the error in the limit is $\mathit{O}(\epsilon^{\frac{1}{2}})$ in $\|\cdot\|_{\frak{s}}$.

\end{proof}
 Lemma~\ref{ReStuff} is a small technical point regarding the distribution for momentum jumps conditioned to exit sets $\{p\in \R\,|\,|p|\geq b \}$ for some $b\in [0,\lambda^{-1}]$, and it is a consequence of the exponential decay found in the jump rates $\mathcal{J}_{\lambda}\big(p,p'\big)$.  
 We will apply Lem.~\ref{ReStuff} in the proof of Lem.~\ref{LowEnergyLemma} below.

\begin{lemma}\label{ReStuff}
For each $m\in \mathbb{N}$, the following inequality holds:
$$
 \sup_{\lambda<1}\sup_{|p|\leq b \leq \lambda^{-1}}  \frac{ \int_{|p'|\geq  b}dp'(|p'|-b)^{m}\,\mathcal{J}_{\lambda}\big(p,p'\big)       }{   \int_{|p'|\geq b}\,dp'\mathcal{J}_{\lambda}\big(p,p'\big)   }<\infty .
$$ 

\end{lemma}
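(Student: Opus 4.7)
The strategy will be to standardize the Gaussian factor in $\mathcal{J}_{\lambda}$ and then apply a Mill's-ratio-type estimate for truncated Gaussian moments weighted by a polynomial. Setting $\mu=\tfrac{1-\lambda}{1+\lambda}p$ and substituting $y=\tfrac{1+\lambda}{2}(p'-\mu)$, the kernel takes the clean form
$$\mathcal{J}_{\lambda}(p,p')\,dp'\;=\;\tfrac{1}{16(1+\lambda)}\,|\lambda p - y|\,e^{-y^{2}/2}\,dy.$$
Under the hypothesis $|p|\leq b\leq \lambda^{-1}$, the region $|p'|\geq b$ transforms to $\{y\geq\alpha\}\cup\{y\leq -\beta\}$ with both $\alpha:=\tfrac{(1+\lambda)(b-\mu)}{2}$ and $\beta:=\tfrac{(1+\lambda)(b+\mu)}{2}$ lying in $[\lambda b,b]$, and one has $|\lambda p|\leq 1$.

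Using the elementary fact that $\tfrac{a_{1}+a_{2}}{b_{1}+b_{2}}\leq \max\!\bigl(\tfrac{a_{1}}{b_{1}},\tfrac{a_{2}}{b_{2}}\bigr)$ for positive reals, it suffices to bound the half-ratios on $\{y\geq\alpha\}$ and $\{y\leq -\beta\}$ separately. On $\{y\geq\alpha\}$ one has $|\lambda p - y|=y-\lambda p$ and $|p'|-b=\tfrac{2(y-\alpha)}{1+\lambda}$; after the shift $u=y-\alpha$ the factor $e^{-\alpha^{2}/2}$ cancels from numerator and denominator, so the half-ratio reduces to
$$\Bigl(\tfrac{2}{1+\lambda}\Bigr)^{\!m}\;\frac{\displaystyle\int_{0}^{\infty}u^{m}(u+\gamma)\,e^{-u^{2}/2-u\alpha}\,du}{\displaystyle\int_{0}^{\infty}(u+\gamma)\,e^{-u^{2}/2-u\alpha}\,du},\qquad \gamma := \alpha-\lambda p \geq 0.$$
Since $|\lambda p|\leq 1$, the parameter $\gamma$ differs from $\alpha$ by at most $1$, a fact that will be used repeatedly.

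The estimate is then carried out in two cases on the size of $\alpha$. When $\alpha\leq 1$ one has $\gamma\leq 2$; bounding the denominator below by restricting to $u\in[0,1]$ and retaining only the $u$ summand, and bounding the numerator above by the unrestricted Gaussian moment $\int u^{m}(u+\gamma)e^{-u^{2}/2}du$, yields a constant ratio. When $\alpha>1$ the exponential $e^{-u\alpha}$ dominates, and the identity $\int_{0}^{\infty}u^{k}e^{-u\alpha}\,du=k!/\alpha^{k+1}$ combined with $e^{-u^{2}/2}\geq e^{-1/2}$ on $[0,1]$ gives a denominator lower bound of order $\gamma/\alpha+1/\alpha^{2}$ and a matching numerator upper bound of order $\gamma/\alpha^{m+1}+1/\alpha^{m+2}$; the ratio is therefore $O(\alpha^{-m})$. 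Combined with the prefactor $(2/(1+\lambda))^{m}\leq 2^{m}$ and the mirror argument on $\{y\leq -\beta\}$, this yields the stated uniform bound.

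The main obstacle is the potential smallness of the polynomial factor $|p-p'|$, which in the shifted variable reads $u+\gamma$. When $\alpha$ is large but $\gamma$ is only of order $\alpha-1$ one cannot afford to discard either summand, since $\gamma$ may be small relative to $\alpha$ and, simultaneously, the relevant mass of $e^{-u^{2}/2-u\alpha}$ sits at $u\sim 1/\alpha$. The key observation is that the $u$ and $\gamma$ pieces of $u+\gamma$ generate contributions of orders $1/\alpha^{2}$ and $\gamma/\alpha$ respectively to the denominator under the $e^{-u\alpha}$ weight, so their sum simultaneously controls both regimes and exactly cancels the $\alpha^{-m}$ structure in the numerator. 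Verifying this cancellation uniformly in $(p,b,\lambda)$ is the one step of the argument requiring real care.
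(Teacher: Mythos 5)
Your argument is correct, and it supplies a proof that the paper itself does not give: the paper merely asserts Lemma~\ref{ReStuff} as ``a small technical point$\ldots$a consequence of the exponential decay found in the jump rates'' with no derivation. Your change of variable $y=\tfrac{1+\lambda}{2}(p'-\mu)$ with $\mu=\tfrac{1-\lambda}{1+\lambda}p$ correctly recenters the Gaussian (the exponent becomes $-y^{2}/2$, since $\tfrac{1-\lambda}{2}p-\tfrac{1+\lambda}{2}p'=-y$), and the factor $|p-p'|$ becomes $\tfrac{2}{1+\lambda}|\lambda p-y|$, giving exactly the stated clean form. The sign check $\gamma=\alpha-\lambda p=\tfrac{(1+\lambda)(b-p)}{2}\geq 0$ (and its mirror $\delta=\beta+\lambda p=\tfrac{(1+\lambda)(b+p)}{2}\geq 0$) is the one place the hypothesis $|p|\leq b$ is essential, and you use it correctly so that $|\lambda p - y|=u+\gamma$ with $\gamma\geq 0$ on the shifted half-line. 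The split $\tfrac{a_1+a_2}{b_1+b_2}\leq\max(\tfrac{a_1}{b_1},\tfrac{a_2}{b_2})$ legitimately reduces to the two half-ratios, both denominators being strictly positive. In the regime $\alpha\leq 1$ your constant lower bound on the denominator (restrict to $u\in[0,1]$, keep the $u$ term) and Gaussian-moment upper bound on the numerator are fine; in the regime $\alpha>1$ the exact identities $\int_0^\infty u^k e^{-u\alpha}du=k!/\alpha^{k+1}$ for the numerator, paired with $\int_0^1 ue^{-u\alpha}du\geq c/\alpha^{2}$ and $\int_0^1\gamma e^{-u\alpha}du\geq c\gamma/\alpha$ for the denominator, give a ratio $\leq C_m/\alpha^{m}\leq C_m$. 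The ``key observation'' you flag --- that the $\gamma/\alpha$ and $1/\alpha^{2}$ pieces of the denominator respectively absorb the $\gamma/\alpha^{m+1}$ and $1/\alpha^{m+2}$ pieces of the numerator, uniformly regardless of how $\gamma$ compares to $\alpha$ --- is exactly the right thing to isolate, and your application of the two-term max inequality executes it cleanly. The prefactor $(2/(1+\lambda))^{m}\leq 2^{m}$ is harmless. The mirror case $\{y\leq-\beta\}$ proceeds identically. No gaps.
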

\vspace{.4cm}

The proof of Lem.~\ref{LowEnergyLemma} relies on an application of the submartingale up-crossing inequality to bound the number of the times that the process $H_{r}^{\frac{1}{2}}$, $r\in [0,\frac{T}{\lambda}]$ wanders from below $\epsilon^{\frac{1}{2}}\lambda^{-\frac{\varrho}{2}}$ to above $2\epsilon^{\frac{1}{2}}\lambda^{-\frac{\varrho}{2}}$, which is closely related to the total time such that  $H_{r}\leq \epsilon \lambda^{-\varrho}$ over the interval $r\in [0,\frac{T}{\lambda}]$.  The process $H_{r}^{\frac{1}{2}}$ behaves nearly as a submartingale at low energies in the sense that a manageable perturbation $H_{r}^{\frac{1}{2}}+c\lambda H_{r}^{\frac{3}{2}}$, for large enough $c>0$, is a submartingale at low energies.  This contrivance is not necessary for the $\lambda=0$ case of the dynamics for which  $H_{r}^{\frac{1}{2}}$ is a submartingale with the desired properties.

\begin{proof}[Proof of Lem.~\ref{LowEnergyLemma}]\text{  }\\
For $b>0$ let $\gamma$ be the minimum of the hitting time that $H_{t}$ jumps above $b\lambda^{-2}$ and the final time $\frac{T}{\lambda}$. We have the following inequalities:
\begin{align*}
 \mathbb{E}^{(\lambda)}\big[\mathbf{T}_{\frac{T}{\lambda} } \big] &  \leq T\mathbb{P}^{(\lambda)}\big[ \sup_{0\leq r\leq \frac{T}{\lambda} }H_{r}>b\lambda^{-2}\big]+ \mathbb{E}^{(\lambda)}\big[\mathbf{T}_{\gamma} \big]\\ & \leq T\frac{\lambda^{4}}{b^{2}} T\mathbb{E}^{(\lambda)}\Big[ \Big(\sup_{0\leq r\leq  \frac{T}{\lambda} }H_{r}\Big)^{2}\Big]+ \mathbb{E}^{(\lambda)}\big[\mathbf{T}_{\gamma} \big]   \leq CT^{3}\frac{\lambda^{2}}{ b^{2}  }+ \mathbb{E}^{(\lambda)}\big[\mathbf{T}_{\gamma} \big],  
 \end{align*}
where the second inequality is Chebyshev's and the $C>0$ in the third is from Lem.~\ref{FirstEnergyLem}.   With the restriction $ \epsilon \geq \lambda^{\varrho}$, the term following the last inequality decays faster than   $\epsilon^{\frac{1}{2}} \lambda^{\frac{1-\varrho}{2}}$ as $\lambda\rightarrow 0$, so we can can focus our study to $\mathbb{E}^{(\lambda)}\big[\mathbf{T}_{\gamma} \big]$. The energy process $H_{t}=H(X_{t}, P_{t})$ behaves as a submartingale for time periods in which $H_{t}\leq b'\lambda^{-2}$ for small enough $b'>0$.  More precisely there exists $0\leq b'\leq 1$, $\underline{\sigma}>0$ such that for all $\lambda<1$ and all $(x,p)$ with $H(x,p)\leq b'\lambda^{-2}$,
\begin{align}
\hspace{1cm}\mathbf{q}^{(\lambda)}_{1}(x,p):=\frac{d}{dt} \mathbb{E}_{(x,p)}^{(\lambda)}\big[ H_{t}   \big]\big|_{t=0}&=  \frac{1}{2}\int_{\R}dp^{\prime}\big( (p^{\prime})^{2}+V(x) -p^{2}-V(x)   \big)\mathcal{J}_{\lambda}\big(p,p^{\prime})\nonumber  \\ &=  \frac{1}{2}\int_{\R}dp^{\prime}\big( (p^{\prime})^{2} -p^{2}   \big)\mathcal{J}_{\lambda}\big(p,p^{\prime})\geq \underline{\sigma} \label{SubMart} .   
\end{align}
 From~(\ref{SubMart}) we  have that for all $m\geq 1$, $\lambda< 1$, and   $H(x,p)\leq b'\lambda^{-2}$,
\begin{align}\label{SubMartAgain}
\hspace{1cm}\mathbf{q}^{(\lambda)}_{m}(x,p) & =  \int_{\R}dp^{\prime}\Big(\Big( \frac{1}{2}(p^{\prime})^{2}+V(x)\Big)^{m} -\Big(\frac{1}{2}p^{2}+V(x)\Big)^{m} \Big)\mathcal{J}_{\lambda}\big(p,p^{\prime})\nonumber \\  &  \geq m\underline{\sigma}H^{m-1}(x,p),  
\end{align}
where $\mathbf{q}^{(\lambda)}_{m}(x,p):=\frac{d}{dt} \mathbb{E}_{(x,p)}^{(\lambda)}\big[ H_{t}^{m}  \big]\big|_{t=0}$.  The inequality~(\ref{SubMartAgain}) follows from~(\ref{SubMart}) since $f(y)=|y|^{m}$ is convex, and thus 
$$ f(Y_{1})-f(Y_{0})\geq (Y_{1}-Y_{0})f'(Y_{0})  $$
for $Y_{1}:=\frac{1}{2}(p^{\prime})^{2}+V(x)$ and $Y_{0}:=\frac{1}{2}p^{2}+V(x)$.

Notice that the value $\mathbf{q}^{(\lambda)}_m(S_{t})$ is the derivative of the predictable part of the semimartingale decomposition of the process $H_{t}^{m}$.  In other terms, the following is a martingale:
$$ H_{t}^{m}-\int_{0}^{t}dr\mathbf{q}^{(\lambda)}_m(S_{r}).   $$
  In addition to the lower bounds in~(\ref{SubMartAgain}), there are upper bounds
\begin{align}\label{UpperSigma}
\mathbf{q}^{(\lambda)}_{m}(x,p)\leq \overline{\sigma}_{m}\big(1+H^{m-1}(x,p) \big) \end{align}
 which hold for some constants $ \overline{\sigma}_{m}$ and all $\lambda<1$ and $(x,p)$ with $H(x,p)\leq \lambda^{-2}$.

Using the above observations  there is a useful submartingale that is ``close" to $H_{t}^{\frac{1}{2}}$.  Let the $b>0$ defining the stopping time $\gamma$ be the $b'$ chosen to ensure the condition~(\ref{SubMart}).   There exists a $c>0$ such that for all $\lambda$ small enough, 
 $$\kappa_{t}=H^{\frac{1}{2}}_{t}+c\lambda H_{t}^{\frac{3}{2}}     $$
is a submartingale over the time interval $t\in [0,\gamma]$.  To see that $\kappa_{t}$ is a submartingale up to time $\gamma$, first notice that  the predictable component $\int_{0}^{t}dr\mathbf{q}^{(\lambda)}_{\frac{3}{2}}(S_{r}) $ in the semimartingale decomposition of $H_{t}^{\frac{3}{2}}$ increases with rate greater than $\frac{3}{2}\underline{\sigma}H^{\frac{1}{2}}_{t}$ by~(\ref{SubMartAgain}). Moreover, the
 predictable component of the semimartingale decomposition of $H_{t}^{\frac{1}{2}}$ is $2^{-\frac{1}{2}}\int_{0}^{t}dr\mathcal{A}_{\lambda}(X_{r},P_{r})$, and   the negative part of the function  $\mathcal{A}_{\lambda}$  satisfies the inequality $ \mathcal{A}_{\lambda}^{-}(x,p)\leq C\lambda H^{\frac{1}{2}}(x,p)$ for some $C>0$ by Part (4) of Prop.~\ref{AMinus} and the elementary inequality $|p|\leq 2^{\frac{1}{2}}H^{\frac{1}{2}}(x,p)$.  Thus we can choose $c:=\frac{2}{3}\frac{C}{\underline{\sigma}}$ to ensure that $\kappa_{t} $  is a submartingale over the specified time interval.

  Set $\varsigma^{\prime}_{0}=\varsigma_{0}=\varsigma^{\prime}_{1}=0$, and define the stopping times $\varsigma_{n},\varsigma_{n}^{\prime}\leq \gamma$ such that for $n\geq 1$,
\begin{align*}
\varsigma^{\prime}_{n}= \min \{r\in (\varsigma_{n-1},\infty)\,\big| \,H_{r}\leq  \epsilon\lambda^{-\varrho }         \},\quad \quad
\varsigma_{n}= \min \{ r\in (\varsigma^{\prime}_{n},\infty)\,\big| \, H_{r}\geq 4 \epsilon \lambda^{-\varrho }            \}.
\end{align*}
The above definition assumes $H_{0}< 4 \epsilon \lambda^{-\varrho } $ and otherwise we should only take $\varsigma^{\prime}_{0}=\varsigma_{0}=0$.  
Let $\mathbf{n}_{\gamma}$ be the number $\varsigma^{\prime}_{n}$'s less than $\gamma$.  In other words, $\mathbf{n}_{\gamma}$ is one more than the number of up-crossings of $ H_{r}$ from $\lambda^{-\varrho}$ to $4\lambda^{-\varrho}$  that have been completed by time $\gamma$.  Let $\mathbf{n}_{\gamma}^{\prime}$ be defined similarly as one plus  the number of crossings of $\kappa_{t}$ from $\frac{3}{2}\epsilon^{\frac{1}{2}}\lambda^{-\frac{\varrho}{2}}$ to $2\epsilon^{\frac{1}{2}}\lambda^{-\frac{\varrho}{2} }$.  For  $\lambda<(\frac{1}{2c})^{\frac{1}{1-\varrho}}$  we have both of the implications 
$$ H_{t}\geq 4 \epsilon\lambda^{-\varrho }\implies  \kappa_{t}\geq 2 \epsilon^{\frac{1}{2}} \lambda^{-\frac{\varrho}{2}} \quad \text{and}\quad   H_{t}\leq  \epsilon \lambda^{-\varrho}\implies \kappa_{t}\leq \frac{3}{2}\epsilon^{\frac{1}{2}}\lambda^{-\frac{\varrho}{2}} ,  $$
and hence $\mathbf{n}^{\prime}_{\gamma}\geq \mathbf{n}_{\gamma}$.  The definitions give us the almost always inequality $$\mathbf{T}_{t}\leq \lambda\sum_{n=1}^{\mathbf{n}_{t}}\varsigma_{n}-\varsigma^{\prime}_{n}.    $$
Next observe that
\begin{align}\label{IncursionTime}
\lambda^{-1}\mathbb{E}^{(\lambda)}\big[\mathbf{T}_{\gamma}  \big]\leq \mathbb{E}^{(\lambda)}\Big[  \sum_{n=1}^{\mathbf{n}_{\gamma}}\varsigma_{n}-\varsigma^{\prime}_{n}\Big]\leq \mathbb{E}^{(\lambda)}\big[ \mathbf{n}_{\gamma}\big]\sup_{n\in \mathbb{N}}\mathbb{E}^{(\lambda)}\big[\varsigma_{n}-\varsigma^{\prime}_{n}\,\big|\, n\leq \mathbf{n}_{\gamma }   \big]  . 
\end{align}
With the above we have an upper bound in terms of the expectation for the number of up-crossings $\mathbf{n}_{t}$ and the expectation for the duration of a single up-crossing $ \varsigma_{n}-\varsigma_{n}^{\prime}$ conditioned on the event $n\leq \mathbf{n}_{t}$. By the observation above, $\mathbb{E}^{(\lambda)}\big[ \mathbf{n}_{\gamma}\big]\leq \mathbb{E}^{(\lambda)}\big[\mathbf{n}_{\gamma}^{\prime} \big]$.   By the submartingale up-crossing inequality~\cite{Chung}, we have the first inequality below: 
\begin{align}
 \mathbb{E}^{(\lambda)}\big[\mathbf{n}_{\gamma}^{\prime} \big] & \leq \frac{ \mathbb{E}^{(\lambda)}\big[\kappa_{\gamma}  \big]}{2 \epsilon^{\frac{1}{2}} \lambda^{-\frac{\varrho}{2}}-\epsilon^{\frac{1}{2}}\frac{3}{2}\lambda^{-\frac{\varrho}{2} }    } = 2\epsilon^{-\frac{1}{2}}\lambda^{\frac{\varrho}{2}} \mathbb{E}^{(\lambda)}\big[H_{\gamma }^{\frac{1}{2}}+c\lambda H_{\gamma}^{\frac{3}{2}}  \big]\leq 2\epsilon^{-\frac{1}{2}}\lambda^{\frac{\varrho}{2}}\Big( \mathbb{E}^{(\lambda)}\big[H_{\gamma }\big]^{\frac{1}{2}}+c\lambda \mathbb{E}\big[H_{\gamma}^{\frac{3}{2}}  \big] \Big)\nonumber \\ & \leq  2 \epsilon^{-\frac{1}{2}} \lambda^{\frac{\varrho}{2}}\big(\mathbb{E}^{(\lambda)}\big[H_{0}\big]  +\lambda^{-1}T\overline{\sigma }_{1} \big)^{\frac{1}{2}} +2c\epsilon^{-\frac{1}{2}} \lambda^{1+\frac{\varrho}{2}} \big(  \mathbb{E}^{(\lambda)}\big[H_{0}\big]  +\lambda^{-\frac{3}{2} +\frac{\varrho}{2} }T^{\frac{3}{2}}\overline{\sigma }_{\frac{3}{2}} \big) \nonumber \\  &<4\epsilon^{-\frac{1}{2}} \overline{\sigma}_{1}^{\frac{1}{2}}T^{\frac{1}{2}}
\lambda^{\frac{\varrho-1}{2}},\label{Numero}
\end{align}
where the last inequality is for $\lambda$ small enough.  The second inequality is Jensen's, and the third uses that $\gamma\leq \frac{T}{\lambda}$ and the bound~(\ref{UpperSigma}) for the derivatives of the  predictable components of the semimartingales $H_{t}$ and $H^{\frac{3}{2}}_{t}$.

We now focus on the expectation of the incursions $\varsigma_{n}-\varsigma_{n}^{\prime}$.  Whether or not the event $n\leq \mathbf{n}_{t}$ occurred will be known at time $\varsigma^{\prime}_{n}$, so    
$$\sup_{n\in \mathbb{N}}\mathbb{E}^{(\lambda)}\big[\varsigma_{n}-\varsigma_{n}^{\prime}\,\big|\, n\leq \mathbf{n}_{t}   \big]\leq \sup_{n\in \mathbb{N},\,\omega\in \mathcal{F}_{\varsigma_{n}'} }\mathbb{E}^{(\lambda)}\big[\varsigma_{n}-\varsigma_{n}^{\prime}\,\big|\,\mathcal{F}_{\varsigma_{n}^{\prime} } \big]=\sup_{H(s)\leq \lambda^{-\varrho }  }\mathbb{E}^{(\lambda)}_{s}\big[\varsigma_{1}  \big]. $$
By~(\ref{SubMart}) 
\begin{align}\label{Sharia}
 \underline{\sigma}\, \mathbb{E}^{(\lambda)}_{s}\big[\varsigma_{1}\big] \leq \mathbb{E}^{(\lambda)}_{s}\Big[ \int_{0 }^{\varsigma_{1} }dr\mathbf{q}^{(\lambda)}_{1}(X_{r},P_{r})  \Big]= \mathbb{E}_{s}^{(\lambda)}\big[H_{\varsigma_{1} }-H_{0}  \big], 
 \end{align}
where the equality holds  by the optional sampling theorem since $\int_{0}^{t}dr\mathbf{q}^{(\lambda)}_{1}(X_{r},P_{r})$ is the predictable part of the semimartingale decomposition for $H_{t}-H_{0}$.  This application of the optional sampling theorem is legal since $\varsigma_{1}$ is almost surely finite, and $$\mathbb{E}_{s}^{(\lambda)}[H_{ r}\chi(r<\varsigma_{1} )]\leq \lambda^{-\frac{1}{2}}\mathbb{P}^{(\lambda)}_{s}[ r<\varsigma_{1}]\longrightarrow 0\quad \text{as}\quad  r\longrightarrow \infty. $$    

Continuing with the right side of~(\ref{Sharia}),
  \begin{align*}  \mathbb{E}_{s}^{(\lambda)}\big[H_{\varsigma_{1} }-H_{0}  \big] &=  \mathbb{E}_{s}^{(\lambda)}\big[H_{\varsigma_{1}^{-} }-H_{0}  \big]+\frac{1}{2}\mathbb{E}_{s}^{(\lambda)}\big[P^{2}_{\varsigma_{1} }- P^{2}_{\varsigma_{1}^{-} } \big]\nonumber \\  &
\leq 4 \epsilon \lambda^{-\varrho}+\mathbb{E}_{s}^{(\lambda)}[\Delta^{2}]=4  \epsilon \lambda^{-\varrho}+\mathit{O}(\lambda^{0})\leq c\epsilon\lambda^{-\varrho}. 
\end{align*}
where $\Delta:=P_{\varsigma_{1} }- P_{\varsigma_{1}^{-}}$.  For the first inequality,  we have used that $P^{2}_{\varsigma_{1}^{-}}\leq H_{\varsigma_{1}^{-} }\leq  \epsilon\lambda^{-\varrho}$, and the inequality $(x+y)^{2}\leq 2(x^{2}+y^{2})$.  The last inequality holds for some $c>0$ by our restriction $\epsilon\geq \lambda^{\varrho}$.  The term $\mathbb{E}_{s}^{(\lambda)}[\Delta^{2}]$ is uniformly bounded for $\lambda< 1$ since  by nested conditional expectations $\mathbb{E}_{s}^{(\lambda)}[\Delta^{2}]=\mathbb{E}_{s}^{(\lambda)}[ \mathbb{E}^{(\lambda)}[\Delta^{2}\,|\,\mathcal{F}_{\varsigma_{1}^{-} },\,\varsigma_{1}   ]]$ and 
\begin{align*}
 \mathbb{E}^{(\lambda)}[\Delta^{2}\,|\,\mathcal{F}_{\varsigma_{1}^{-} } ,\,\varsigma_{1}  ] &=   \frac{ \int_{H(X_{\varsigma_{1}^{-}},p' )\geq \epsilon\lambda^{-\varrho} } dp^{\prime} \big(p^{\prime}-P_{\varsigma_{1}^{-}}\big)^{2} \mathcal{J}_{\lambda}\big(P_{\varsigma_{1}^{-}},p^{\prime}\big)}{\int_{H(X_{\varsigma_{1}^{-}},p^{\prime})\geq \epsilon\lambda^{-\varrho}  }dp^{\prime} \mathcal{J}_{\lambda}\big(P_{\varsigma_{1}^{-}},p^{\prime}\big) }
 \\ &  \leq \sup_{\lambda<1}\sup_{H(x,p)\leq \epsilon  \lambda^{-\varrho}}\frac{ \int_{H(x,p^{\prime})\geq \epsilon\lambda^{-\varrho} } dp^{\prime} (p^{\prime}-p)^{2}    \mathcal{J}_{\lambda}(p,p^{\prime})}{\int_{H(x,p^{\prime})\geq \epsilon \lambda^{-\varrho}  }dp^{\prime} \mathcal{J}_{\lambda}(p,p^{\prime}) }<\infty.
 \end{align*}
The equality relies on the strong Markov property since  the distribution for $\Delta$ is independent of  $\mathcal{F}_{\varsigma_{1}^{-} } $ when  given $\varsigma_{1}$ and $S_{\varsigma_{1}^{-}}$.  The last expression is finite by Lem.~\ref{ReStuff}.

Putting our results for $\mathbb{E}^{(\lambda)}\big[ \mathbf{n}_{\gamma}\big]$ and $\sup_{n\in \mathbb{N}}\mathbb{E}^{(\lambda)}\big[ \varsigma_{n}-\varsigma_{n}^{\prime}\,\big|\,n\leq \mathbf{n}_{\gamma}\big]$ together,   $$\mathbb{E}^{(\lambda)}\big[  \mathbf{T}_{\gamma}\big]\leq  \lambda \mathbb{E}^{(\lambda)}\big[ \mathbf{n}_{\gamma}\big] \sup_{n\in \mathbb{N}}\mathbb{E}^{(\lambda)}\big[ \varsigma_{n}-\varsigma_{n}^{\prime}\,\big|\,n\leq \mathbf{n}_{\gamma}\big]\leq 4 c\epsilon^{\frac{1}{2}}\overline{\sigma}^{\frac{1}{2}}T^{\frac{1}{2}}  \lambda^{\frac{1-\varrho}{2}} . $$
This completes the proof.

\end{proof}

\vspace{.5cm}

The proof of Lem.~\ref{LocalTimeBnd} follows by a fairly standard argument for bounding the difference between two additive functionals using the splitting structure from Sect.~\ref{SecNumSplit}.  Several results from~\cite{Previous} from will be used in the proof.

\begin{proof}[Proof of Lem.~\ref{LocalTimeBnd}]
By Part (1) of Prop.~\ref{TrivialMart}, we have the equality $ \mathbf{u}\tilde{\mathbb{E}}^{(\lambda)}\big[ \tilde{N}_{t}\big]= \tilde{\mathbb{E}}^{(\lambda)}\big[ L_{t}\big]$. Moreover, we have the uniform bound 
\begin{align}\label{SureFire}
\sup_{\lambda<1}\tilde{\mathbb{E}}^{(\lambda)}\big[ \lambda^{\frac{1}{2}}\tilde{N}_{\frac{T}{\lambda}}\big]<\infty
\end{align}  by~\cite[Lem. 3.3]{Previous}, and thus we also have that  $\tilde{\mathbb{E}}^{(\lambda)}\big[ \lambda^{\frac{1}{2}}L_{\frac{T}{\lambda}}\big]$ is uniformly bounded.

 To show that $ \lambda^{\frac{1}{2}} L_{\frac{t}{\lambda}}$ is close to $\lambda^{\frac{1}{2} }\mathbf{A}_{\frac{t}{\lambda}}^{+}$, we will consider the split dynamics.  Going to the split statistics in the first equality below, we have
\begin{multline}\label{Kela}
\mathbb{E}^{(\lambda)}\Big[\sup_{0\leq t\leq T}\Big|\lambda^{\frac{1}{2}} L_{\frac{t}{\lambda}}    -\lambda^{\frac{1}{2} }\mathbf{A}_{\frac{t}{\lambda}}^{+} \Big|      \Big]= \tilde{\mathbb{E}}^{(\lambda)}\Big[\sup_{0\leq t\leq T}\Big|\lambda^{\frac{1}{2}} L_{\frac{t}{\lambda}}    -\lambda^{\frac{1}{2} }\mathbf{A}_{\frac{t}{\lambda}}^{+} \Big|      \Big] \\ \leq \lambda^{\frac{1}{2}} \tilde{\mathbb{E}}^{(\lambda)}\Big[\Big(\int_{0}^{R_{1}}dr+\sup_{R_{1}\leq t\leq \frac{T}{\lambda} } \int_{t}^{ R_{\tilde{N}_{t}+1 } } dr  \Big)  g_{\lambda}(S_{r})      \Big] +\lambda^{\frac{1}{2}}\tilde{\mathbb{E}}^{(\lambda)}\Big[\sup_{0\leq t\leq \frac{T}{\lambda} } \Big| \sum_{n=1}^{\tilde{N}_{t}}  \int_{R_{n}}^{R_{n+1}}g_{\lambda}'(S_{r})  \Big| \Big],
\end{multline}
where $R_{n}$ is the beginning time of the $n$ life cycle, $ \tilde{N}_{t}$ is the number of $R_{n}'$  that have occurred up to time $t$, $g_{\lambda}:=\mathcal{A}_{\lambda}^{+}+\mathbf{u}^{-1}h$, and  $g_{\lambda}':=\mathcal{A}_{\lambda}^{+}-\mathbf{u}^{-1}h$.

The first term on the right side of~(\ref{Kela}) contains the boundary terms for the partition of the integrals over the interval $[0,\frac{T}{\lambda}]$ using the life cycle times $R_{n}$.  By Part (2) of Prop.~\ref{AMinus} and since $h$ has compact support,  $g_{\lambda}:=\mathcal{A}_{\lambda}^{+}+\mathbf{u}^{-1}h$ satisfies the conditions  of~\cite[Prop. 4.3]{Previous}.   By Part (2) of~\cite[Prop. 4.3]{Previous}, there is a $C>0$ such that the inequality below holds:  
\begin{align*}
\lambda^{\frac{1}{2}} \tilde{\mathbb{E}}^{(\lambda)}\Big[&\int_{0}^{R_{1}}dr g_{\lambda}(S_{r})    \Big]= \lambda^{\frac{1}{2}}\int_{\tilde{\Sigma}}d\tilde{\mu}(x,p,z) \tilde{\mathbb{E}}^{(\lambda)}_{(x,p,z)}\Big[\int_{0}^{R_{1}}dr g_{\lambda}(S_{r})     \Big] \\  &\leq C\lambda^{\frac{1}{2}} \int_{\tilde{\Sigma}}d\tilde{\mu}(x,p,z)\big(1+\log(1+|p|)    \big)= C\lambda^{\frac{1}{2}}\int_{\Sigma}d\mu(x,p)\big(1+\log(1+|p|)    \big)=\mathit{O}(\lambda^{\frac{1}{2}}),
\end{align*}
where the measure $\tilde{\mu}$ on $\tilde{\Sigma}$ is the splitting of the measure $\mu$. The integral above is finite by our assumptions on the initial measure $\mu$.  

The other part of the first term on the right side of~(\ref{Kela}) is bounded through
\begin{align*}
\lambda^{\frac{1}{2}} \tilde{\mathbb{E}}^{(\lambda)}\Big[\sup_{0\leq n\leq   \tilde{N}_{\frac{T}{\lambda}} } \int_{R_{n}}^{R_{n+1}} dr g_{\lambda}(S_{r})      \Big]& \leq  \lambda^{\frac{1}{2}} \tilde{\mathbb{E}}^{(\lambda)}\Big[ \sum_{n=1}^{ \tilde{N}_{\frac{T}{\lambda}}   }\Big(\int_{R_{n}}^{R_{n+1}} dr g_{\lambda}(S_{r}) \Big)^{2}      \Big]^{\frac{1}{2}}\\  &=  \lambda^{\frac{1}{2}} \tilde{\mathbb{E}}^{(\lambda)}\Big[ \sum_{n=1}^{ \tilde{N}_{\frac{T}{\lambda}}   }\tilde{\mathbb{E}}^{(\lambda)}\Big[\Big(\int_{R_{n}}^{R_{n+1}} dr g_{\lambda}(S_{r}) \Big)^{2} \,\Big|\,\tilde{\mathcal{F}}_{R_{n}'}  \Big]     \Big]^{\frac{1}{2}} \\ &\leq C^{\frac{1}{2}} \lambda^{\frac{1}{2}} \tilde{\mathbb{E}}^{(\lambda)}\big[ \tilde{N}_{\frac{T}{\lambda}} \big]^{\frac{1}{2}} =\mathit{O}(\lambda^{\frac{1}{4}}).
\end{align*}
The first inequality uses that $\sup_{n}a_{n}\leq (\sum_{n}a_{n}^2)^{\frac{1}{2}}$ for positive numbers $a_{n}>0$ followed by Jensen's inequality. The  second inequality uses the strong Markov property for the split chain $\tilde{\sigma}_{m}=S_{\tau_{m}}$ and that $\tilde{S}_{R_{n}}$ has distribution $\tilde{\nu}$ by $\tilde{\mathcal{F}}_{R_{n}'}$  by Part (1) of~\cite[Prop. 2.1]{Previous}.

For the second term on the right side of~(\ref{Kela}), the key observation is that 
\begin{align*}
b_{\lambda}:= & \tilde{\mathbb{E}}^{(\lambda)}\Big[ \int_{R_{n}}^{R_{n+1}}dr g_{\lambda}'(S_{r})\,\Big|\, \tilde{\mathcal{F}}_{R_{n}'}\Big]  =  \tilde{\mathbb{E}}_{\tilde{\nu}}^{(\lambda)}\Big[ \int_{0}^{R_{1}}dr g_{\lambda}'(S_{r})\Big] \\ = &\frac{ \int_{\Sigma}ds \Psi_{\infty,\lambda}(s) \big(\mathcal{A}^{+}_{\lambda}(s)-\mathbf{u}^{-1}h(s) \big)     }{  \int_{\Sigma}ds \Psi_{\infty,\lambda}(s)h(s)  }=\frac{ \int_{\Sigma}ds e^{-\lambda H(s)} \big(\mathcal{A}^{+}_{\lambda}(s)-\mathbf{u}^{-1}h(s) \big)     }{  \int_{\Sigma}ds e^{-\lambda H(s)} h(s)  }
\end{align*}
is $\mathit{O}(\lambda^{\frac{1}{2}})$ for small $\lambda$. The first equality is by the strong Markov property for the chain $\tilde{\sigma}_{n}=\tilde{S}_{\tau_{n}}$  since $\tilde{S}_{\tau_{n}}$ has distribution $\tilde{\nu}$ when conditioned on the information $\tilde{\mathcal{F}}_{R_{n}'}$  by Part (1) of~\cite[Prop. 2.1]{Previous}.  The second equality is by Part (2) of~\cite[Prop. 2.4]{Previous}.  The denominator of the rightmost expression approaches $\int_{\Sigma}ds h(s)=\mathbf{u}$, and the numerator is a difference of terms which are $1+\mathit{O}(\lambda^{\frac{1}{2}})$.  This follows since $\mathbf{u}^{-1}\int_{\Sigma}ds h(s)=1$, by the approximation $ \int_{\Sigma}ds \mathcal{A}^{+}_{\lambda}(s)=1+\mathit{O}(\lambda^{\frac{1}{2}})$  in Part (7) of Prop.~\ref{AMinus}, and  since inserting the factor $e^{-\lambda H(s)}$ in the integrals will perturb the values by $\mathit{O}(\lambda^{\frac{1}{2}})$.

With the above and the triangle inequality, the second term on the right side of~(\ref{Kela}) is smaller than
\begin{align*}
|b_{\lambda} |&\lambda^{\frac{1}{2}}\tilde{\mathbb{E}}^{(\lambda)}\big[ \tilde{N}_{\frac{T}{\lambda}}  \big]+\lambda^{\frac{1}{2}}\tilde{\mathbb{E}}^{(\lambda)}\Big[\sup_{0\leq t\leq \frac{T}{\lambda} }  \Big|\sum_{n=1}^{\lfloor \frac{1}{2}\tilde{N}_{t}+\frac{1}{2} \rfloor }  \Big(\int_{R_{2n-1}}^{R_{2n}}dr g_{\lambda}'(S_{r})-b_{\lambda}  \Big) \Big| \Big] \\  &+\lambda^{\frac{1}{2}}\tilde{\mathbb{E}}^{(\lambda)}\Big[\sup_{0\leq t\leq \frac{T}{\lambda} }  \Big|\sum_{n=1}^{\lfloor \frac{1}{2}\tilde{N}_{t} \rfloor} \Big( \int_{R_{2n}}^{R_{2n+1}}dr g_{\lambda}'(S_{r})-b_{\lambda} \Big)  \Big| \Big] .
\end{align*}
The first term is $\mathit{O}(\lambda^{\frac{1}{2}})$ since $|b_{\lambda} |=\mathit{O}(\lambda^{\frac{1}{2}})$ and  $\lambda^{\frac{1}{2}}\tilde{\mathbb{E}}^{(\lambda)}\big[ \tilde{N}_{\frac{T}{\lambda}}  \big] $ is bounded for $\lambda< 1$ by the remark~(\ref{SureFire}).  Moreover, the processes 
\begin{align}\label{Libby}
\sum_{n=1}^{\lfloor \frac{1}{2}\tilde{N}_{t}+\frac{1}{2} \rfloor }  \Big(\int_{R_{2n-1}}^{R_{2n}}dr g_{\lambda}'(S_{r})-b_{\lambda}  \Big) \hspace{1cm} \text{and}\hspace{1cm} \sum_{n=1}^{\lfloor \frac{1}{2}\tilde{N}_{t} \rfloor} \Big( \int_{R_{2n}}^{R_{2n+1}}dr g_{\lambda}'(S_{r})-b_{\lambda} \Big)  
\end{align}
are martingales with respect to the filtration $ \tilde{\mathcal{F}}'_{t}$ since non-sequential life cycles are independent, which is why we split the original sum into even and odd terms.   We can apply the standard arguments to bound the sums in~(\ref{Libby}), for instance:
\begin{align*}
\lambda^{\frac{1}{2}}\tilde{\mathbb{E}}^{(\lambda)}\Big[\sup_{0\leq t\leq \frac{T}{\lambda} }  \Big|\sum_{n=1}^{\lfloor \frac{1}{2}\tilde{N}_{t}+\frac{1}{2} \rfloor }  \int_{R_{2n-1}}^{R_{2n}}dr g_{\lambda}'(S_{r})-b_{\lambda}   \Big| \Big] &\leq 2\lambda^{\frac{1}{2}}\tilde{\mathbb{E}}^{(\lambda)}\Big[ \sum_{n=1}^{\lfloor \frac{1}{2}\tilde{N}_{t}+\frac{1}{2} \rfloor} \Big( \int_{R_{2n-1}}^{R_{2n}}dr g_{\lambda}'(S_{r})  -b_{\lambda} \Big)^{2} \Big]^{\frac{1}{2} }  \\ &\leq \lambda^{\frac{1}{2}} C^{\frac{1}{2}} \tilde{\mathbb{E}}^{(\lambda)}\big[ \tilde{N}_{\frac{T}{\lambda} }\big]^{\frac{1}{2}}=\mathit{O}(\lambda^{\frac{1}{4}}).
\end{align*}
The first inequality is Jensen's  with the square function followed by Doob's maximal inequality, and the second follows analogously to previous discussion.

\end{proof}

\section*{Acknowledgments}
The author thanks Lo\"ic Dubois for many discussions on the topic of this article.  This work is supported by the European Research Council grant No. 227772 and NSF grant DMS-08446325.

\begin{appendix}

\section{The limiting diffusion process}\label{SecIdealProc}

\subsection{Local time at the origin for an Ornstein-Uhlenbeck process}

Let  $\frak{p}$ be the Ornstein-Uhlenbeck process satisfying the Langevin equation~(\ref{TheLimit}) and $\frak{l}$ be the corresponding local time at zero.  For a discussion of local time for continuous semimartingales, we refer to~\cite[Sect. 3.7]{Karat}, and for a list of many formulae related to the local time of an Ornstein-Uhlenbeck process we refer to~\cite{Borodin}.  As mentioned before the local time is formally $\frak{l}_{t}=\int_{0}^{t}dr \delta_{0}(\frak{p}_{r})$,  and through a formal application of the Ito formula, it satisfies
$$ \frak{l}_{t}=|\frak{p}_{t}|-|\frak{p}_{0}| -\int_{0}^{t}dr\textup{sgn}(\frak{p}_{r})d\mathbf{B}_{r}+\frac{1}{2}\int_{0}^{t}dr|\frak{p}_{r}|dr,  $$
where $\textup{sgn}:\R\rightarrow \{\pm 1\}$ is the sign function.  The above is one of the Tanaka-Meyer  formulas.  The process $\frak{l}$ is a continuous increasing process  satisfying $\frak{l}_{t}\rightarrow \infty$ as $t\rightarrow \infty$ since $\frak{p}$ is a positive-recurrent process.  The process inverse $\frak{s}_{r}=\inf \{ t\in \R^{+}\,\big|\, \frak{l}_{t}\geq r   \}    $ has independent and stationary increments and is thus an increasing Levy processes.  The flats of $\frak{l}$ correspond to excursions from the origin for $\frak{p}$ and jumps for $\frak{s}$.  

We can give a closed expression for the Laplace transform $\mathbb{E}\big[e^{ -\gamma \frak{s}_{t}}\big]$.  The Laplace transform has the form  
\begin{align}\label{CornStarch}
\mathbb{E}\big[e^{ -\gamma \frak{s}_{t}}\big]=e^{-\frac{t}{G_{\gamma}(0,0)}}  .   
\end{align}
where $G_{\gamma}$ is the Green function for the Ornstein-Uhlenbeck process.  The densities $\textup{Q}_{t}:\R\rightarrow \R^{+}$ for $\frak{p}_{t}$ satisfy the forward equation
$$\hspace{3cm} \frac{d}{dt}\textup{Q}_{t}(p)=\frac{1}{2} \textup{Q}_{t}(p) +\frac{1}{2}p\frac{\partial}{\partial p}\textup{Q}_{t}(p)+\frac{1}{2}\frac{\partial^{2}}{\partial^{2} p}\textup{Q}_{t}(p). $$
When $Q_{0}(p)=\delta_{0}(p)$, then $\textup{Q}_{t}(p)$ has the explicit form
\begin{align}\label{RightHereChap}
 \hspace{1.5cm} \textup{Q}_{t}(p) =\frac{  e^{-\frac{ p^{2}   }{ 2\omega_{t} }  } }{  (2\pi \omega_{t})^{\frac{1}{2}}},\hspace{1cm} \omega_{t}=1-e^{-\frac{1}{2}t}. 
 \end{align}
Notice that there is convergence to a variance one Gaussian in the limit that $t\rightarrow \infty$.  The form~(\ref{RightHereChap}) allows the Green's  function value $G_{\gamma}(0,0)$ to be computed as the following:
\begin{align*}
G_{\gamma}(0,0)&=\int_{0}^{\infty}dt e^{-\gamma t}\textup{Q}_{t}(0)=(2\pi)^{-\frac{1}{2}}\int_{0}^{\infty}dt \frac{e^{-\gamma t}}{\big(1-e^{-\frac{1}{2}t}\big)^{\frac{1}{2}}}  = (\frac{2}{\pi})^{\frac{1}{2}}\int_{0}^{1}du\, u^{2\gamma-1}\big(1-u\big)^{-\frac{1}{2}}\\ &=(\frac{2}{\pi})^{\frac{1}{2}}\textup{B}\big(2\gamma,\frac{1}{2}\big)=2^{\frac{1}{2}}\frac{ \Gamma(2\gamma)   }{\Gamma(2\gamma+\frac{1}{2}) },
\end{align*}
where $\textup{B}$ and $\Gamma$ are respectively the $\beta$-function and $\gamma$-functions, and we have made the substitution $u=e^{-\frac{1}{2}t}$, $-2u^{-1}du=dt$ for the third equality. 
Plugging our results into~(\ref{CornStarch}) the moment-generating function of $\frak{s}_{t}$ is 
$$\mathbb{E}\big[e^{ -\gamma \frak{s}_{t}}\big]= e^{-t 2^{-\frac{1}{2}}\frac{ \Gamma(2\gamma+\frac{1}{2})   }{\Gamma(2\gamma) }}. $$

The Levy rate density $R:\R^{+}\rightarrow \R^{+} $ for $\frak{s}_{t}$ satisfies that
$$   \int_{0}^{\infty}d\tau \big(1-e^{-\gamma \tau}    \big)R(\tau)= 2^{-\frac{1}{2}}\frac{ \Gamma(2\gamma+\frac{1}{2})   }{\Gamma(2\gamma) } . $$
The rates $R(\tau)=4^{-1}(2\pi)^{-\frac{1}{2}}e^{-\frac{1}{4}\tau} \big(1-e^{-\frac{1}{2}\tau} \big)^{-\frac{3}{2}}  $ can be deduced by   similar operations as above in reverse order since 
\begin{align*}
2^{-\frac{1}{2}}\frac{ \Gamma(2\gamma+\frac{1}{2})   }{\Gamma(2\gamma) }&= \frac{2\gamma  }{(2\pi)^{\frac{1}{2}} }\textup{B}\big(2\gamma+\frac{1}{2},\frac{1}{2}\big) =\frac{\gamma  }{(2\pi)^{\frac{1}{2}} }\int_{0}^{\infty}d\tau e^{-\gamma \tau} \frac{ e^{-\frac{1}{4}\tau}  }{\big(1-e^{-\frac{1}{2}\tau} \big)^{\frac{1}{2}}   } \\ &=\frac{1 }{4(2\pi)^{\frac{1}{2}} }\int_{0}^{\infty}d\tau \big(1-e^{-\gamma \tau}\big) \frac{ e^{-\frac{1}{4}\tau}  }{\big(1-e^{-\frac{1}{2}\tau} \big)^{\frac{3}{2}}   } .
\end{align*}

\subsection{A diffusion time-changed by $\frak{l}_{t}$}

Now we consider the process $\mathbf{B}_{\frak{l}}$ where $\mathbf{B}$ is a Brownian motion with diffusion rate $\kappa$ which is independent of the process $\frak{l}$ discussed in the last section.  Although $\mathbf{B}_{\frak{l}}$ is non-Markovian, the triple $(\mathbf{B}_{\frak{l}},\,\tau  ,\,\eta  )$ is Markovian, where $\tau_{t}:= \frak{s}_{\ell_{t}}-\frak{s}_{\ell_{t}-}$ is the total duration of the current excursion (which requires some information from the future), and $\eta_{t}:=t- \frak{s}_{\ell_{t}-}$ is the amount of time that has passed since the beginning of the excursion.  

We can give a closed form for the joint density $\rho_{t}(x,\tau,\eta)$ for the triple $(\mathbf{B}_{\frak{l}_{t}},\,\tau_{t}  ,\,\eta_{t}  )$ assuming that  $\mathbf{B}_{0}$ has density $\rho(x)$ and $\eta_{0}=\tau_{0}=0$.  Let $\Psi_{r}(t)$ be the probability density at the value $t\in \R^{+}$ for the Levy process $\frak{s}$ at time $r$.  The joint density $\rho_{t}(x,\tau,\eta)$ for the triple $(\mathbf{B}_{\frak{l}_{t}},\,\tau_{t}  ,\,\eta_{t}  )$ has the closed form  
$$\rho_{t}(x,\tau,\eta)=\chi( \eta\leq \tau\wedge t  )\,R(\tau)\int_{0}^{\infty}dr \Psi_{r}(t-\eta) \, (g_{r}*\rho)(x), \quad \quad g_{r}(x)= \frac{e^{-\frac{x^{2}}{2r\kappa }}}{(2\pi r\kappa)^{\frac{1}{2}}}, $$
where $R: \R^{+}\rightarrow \R^{+}$ is the rate function for the Levy process $\frak{s}$.   By integrating out the $\tau, \eta$ variables, we obtain that the marginal density $\rho_{t}(x)$ satisfies the Volterra-type integro-differential equation of the form
\begin{align}\label{Shizz}
\rho_{t}(x) =\rho_{0}(x) +\frac{\kappa}{2}\int_{0}^{t}dr\frac{(2\pi)^{-\frac{1}{2}}}{\big( 1-e^{-\frac{1}{2}(t-r)}\big)^{\frac{1}{2}} }(\Delta \rho_{r})(x).  
\end{align}
In the above, we have used that $ \Psi_{s}*\Psi_{t}=\Psi_{s+t} $ and the explicit computation 
  $$ \int_{0}^{\infty}dr\Psi_{r}(t)= Q_{t}(0)= \frac{ (2\pi)^{-\frac{1}{2}}}{ \big( 1-e^{-\frac{1}{2}t}\big)^{\frac{1}{2}} }.   $$

The above is analogous to the master equation for a Brownian motion time-changed by a Mittag-Leffler process.  The  Mittag-Leffler process $\frak{m}^{(\alpha)}$  of index $0<\alpha<1$ is distributed as  the process inverse of the one-sided stable law of index $\alpha$.  The $\alpha=\frac{1}{2}$ case has the same law as the local time of a standard Brownian motion.  If $\mathbf{B}$ is a standard Brownian motion, then the densities for   $\sqrt{\kappa}\mathbf{B}_{\frak{m}_{ t}^{(\alpha)}}$  satisfy the equation
\begin{align*}
\rho_{t}(x) =\rho_{0}(x) +\frac{\kappa}{2\Gamma(\alpha)}\int_{0}^{t}dr(t-r)^{\alpha-1}(\Delta \rho_{r})(x),
\end{align*}
which is equivalent to the fractional diffusion equation
$$\partial_{t}^{\alpha}\rho_{ t}= \kappa \Delta_{q} \rho_{ t},    $$
where the fractional derivative $\partial_{t}^{\alpha}$ acts as $(\partial_{t}^{\alpha}f)(t)=\frac{1}{\Gamma(1-\alpha)}\frac{d}{dt}\int_{0}^{t}dr (t-r)^{-\alpha}f(r)$.  Processes satisfying these equations arise in the theory of continuous time random walks~\cite{Montroll,Meershaert} and the limit theory for martingales whose quadratic variations are driven by  additive functionals of null-recurrent Markov processes~\cite{TouatiUnpub,ChenII,Hopfner}.   The process $\mathbf{B}_{\frak{m}^{(\alpha)}}$ has the scale invariance in law
$$
\mathbf{B}_{\frak{m}_{t}^{(\alpha)} }\stackrel{\mathcal{L}}{ =  }\epsilon^{-\frac{\alpha}{2}}\mathbf{B}_{\frak{m}_{ \epsilon t}^{(\alpha)} }.
$$

\subsection{Long-term behavior}

Now we can look into the diffusive behavior for  $\mathbf{B}_{\frak{l}_{t}}$  in the limit of large times $t$.  Since the process is already a diffusion, this is just a question of the convergence in probability for  the normalized quadratic variation $t^{-1}\frak{l}_{st}$ for $s\in \R^{+}$ as $t\rightarrow \infty$.  However, we actually have a strong limit since      
$$    \lim_{t\rightarrow \infty}\frac{\frak{l}_{st}}{t}=s\lim_{r\rightarrow \infty}\frac{  r  }{ \frak{s}_{r}  }=s\Big( \int_{0}^{\infty}d\tau\, \tau R(\tau) \Big)^{-1}=s(2\pi)^{-\frac{1}{2}}. $$
The first equality holds since $\frak{l}$ and $\frak{s}$ are process inverses of one another and tend to infinity almost surely.  The second equality is the strong law of large numbers for the Levy process $\frak{s}_{r}$.  The computation for the third equality is based on the representation of the Laplace transform of $\frak{s}_{t}$ from the last section.  The above implies the convergence in law as $\lambda \rightarrow 0$ given by 
$$   t^{-\frac{1}{2}}\mathbf{B}_{\frak{l}_{st}} \stackrel{\frak{L}}{\Longrightarrow} (2\pi)^{-\frac{1}{2}}\mathbf{B}_{s}^{\prime},  $$
where $\mathbf{B}^{\prime}$ is a copy of standard Brownian motion.

\end{appendix}

\end{document}